\numberwithin{equation}{section}
\theoremstyle{plain}
\newtheorem{lemma}{Lemma}\newtheorem{theorem}{Theorem}\newtheorem{corollary}{Corollary}\newtheorem{remark}{Remark}
\newtheorem{definition}{Definition}
\newcommand{\post}[2]{\begin{center} \includegraphics[width=#2]{#1} \end{center} }
\begin{document}

\begin{frontmatter}
\title{The Supermarket Game \thanksref{T1,T2} }
\runtitle{The Supermarket Game}
\thankstext{T1}{This research was supported by the Network Science Foundation under Grant CCF 10-16959.}
\thankstext{T2}{Revised December 2012. A preliminary version of this paper appeared in the Proceedings of the 2012 International Symposium on Information Theory.}


\begin{aug}
\author{\fnms{Jiaming} \snm{Xu} \ead[label=e1]{jxu18@illinois.edu} }
\and
\author{\fnms{Bruce} \snm{Hajek} \ead[label=e2]{b-hajek@uiuc.edu} }
\runauthor{J. XU AND B. HAJEK}
\affiliation{University of Illinois at Urbana-Champaign}

\address{Coordinated Science Laboratory,\\
 University of Illinois at Urbana-Champaign, \\
 Urbana, IL, 61801, USA \\
 \printead{e1} \\
\phantom{E-mail:\ }\printead*{e2} }
\end{aug}

\begin{abstract}
A supermarket game is considered with $N$ FCFS queues with unit exponential service rate and global Poisson arrival rate $N \lambda$.
Upon arrival each customer chooses a number of queues to be sampled uniformly at random and joins the least loaded sampled queue.
Customers are assumed to have cost for both waiting and sampling, and they want to minimize their own expected total cost.

We study the supermarket game in a mean field model that corresponds to the limit as $N$ converges to infinity in the sense that
(i) for a fixed symmetric customer strategy, the joint equilibrium distribution of any fixed number of queues converges as $N \to \infty$
to a product distribution determined by the mean field model and (ii) a Nash equilibrium for the mean field model is an $\epsilon$-Nash equilibrium
for the finite $N$ model with $N$ sufficiently large. It is shown that there always exists a Nash equilibrium for $\lambda <1$ and the Nash equilibrium
is unique with homogeneous waiting cost for $\lambda^2 \le 1/2$. Furthermore, we find that the action of sampling more queues by some customers
has a positive externality on the other customers in the mean field model, but can have a negative externality for finite $N$.
\end{abstract}

\begin{keyword}[class=AMS]
\kwd[Primary ]{60K35}
\kwd[; secondary ]{60K25,91A40}
\end{keyword}

\begin{keyword}
\kwd{Mean Field Model}
\kwd{Queueing}
\kwd{Nash Equilibrium}
\kwd{Externality}
\end{keyword}

\end{frontmatter}

\section{Introduction}
Consider a stream of customers arriving to a multi-server system where any server is capable of serving any customer. Upon arrival, customers are unaware of the current queue length at servers, so they sample a few servers and join the server with the shortest queue among the sampled few. Customers have time cost proportional to the waiting time at servers and sampling cost proportional to the number of sampled servers. Customers are self-interested and aim to minimize their own total cost by choosing the optimal number of servers to sample. Note that the waiting time of a customer depends on the other customers' choices, so it is a game among the customers and we call it the {\it supermarket game}, because often in supermarkets customers try to find counters with short queue to check out.

\subsection{Motivation}
The supermarket game is a simple model for analyzing distributed load balancing in transportation and communication networks. Load balancing ensures efficient resource utilization and improves the quality of service, by evenly distributing the workload across multiple servers. Traditionally, load balancing is fulfilled by a central dispatcher that assigns the newly arriving work to the server with the least workload.
As modern or future networks become larger and increasingly distributed, such central dispatcher may not exist, and thus the load balancing has to be carried out by customers themselves. Hence, the supermarket game is relevant in scenarios where (1) customers choose which server to join without directions from a central dispatcher or tracker; (2) global workload or queue length information is not available and customers randomly choose a finite number of servers to probe; (3) there is cost associated with probing a server and waiting in a queue.

Examples of such scenarios are the following:
\begin{itemize}
\item Network routing: customers represent traffic flows and servers represent possible routes from a given source to a destination. A traffic flow can find the route with low delay by probing different routes.

\item Dynamic wireless spectrum access: customers represent wireless devices and servers represent all the shared spectrum. The wireless devices can find the spectrum band with low interference and congestion by probing multiple spectrum bands.

\item Cloud computing service: customers can decide how many servers to probe in seeking the server with low delay.
\end{itemize}
In this paper, we address the following natural questions for these systems: How many servers will a self-interested customer sample? Is sampling or probing more servers by some customers beneficial or detrimental to the others?

\subsection{Main Results}
The supermarket game with finite number of servers is difficult to analyze due to the correlation among queues at different servers. Therefore, we study the supermarket game in a mean field model that corresponds to the limit as the number of servers converges to infinity. By assuming: (1) unit exponential service rate at servers; (2) Poisson arrival of customers; (3) homogeneous waiting cost and sampling cost, it is shown that:
\begin{itemize}
\item There exists a mixed strategy Nash equilibrium for all arrival rates per server less than one.
\item The action of sampling more servers by some customers has a positive externality on the other customers, which further implies that customers sample no more queues for any Nash equilibrium than for the socially optimal strategy.
\item Nash equilibrium is unique for arrival rates per server less than or equal to $1/\sqrt{2}$.
\item Nash equilibrium is unique if and only if a local monotonicity condition is satisfied. This condition is used to explore the uniqueness numerically for arrival rates per server larger than $1/\sqrt{2}$.
\item Multiple Nash equilibria exist for a particular example with arrival rates per server equal to $0.999$.
\item Nash equilibrium is unique for arrival rates per server equal to $0.999$ if customers can only sample either one queue or two queues.
\end{itemize}
Then, we consider the heterogenous waiting cost case and prove the existence of pure strategy Nash equilibrium in that case.

We also show that the mean field model arises naturally as the limit of supermarket game with finite number of servers:
\begin{itemize}
\item A propagation of chaos result and a coupling result similar to the ones in \citep{Graham00} and \citep{Turner98} hold, and thus the joint equilibrium distribution of any fixed number of queues converges to a product distribution determined by the mean field model as the number of queues converge to infinity.
\item A Nash equilibrium of the supermarket game in the mean field model is an $\epsilon$-Nash equilibrium of the supermarket game for finite number of servers with the number of servers large enough.
\end{itemize}
Furthermore, in the supermarket game with finite number of servers, we find that sampling more queues by some customers has a nonnegative externality on customers who only sample one queue, but it can have a {\it negative} externality for customers sampling more than one queue, which is in sharp contrast to the conclusion in the mean field model.

\subsection{Related Work}
The supermarket game is formulated based on the classical supermarket model with $N$ parallel queues in which customers sample a fixed number $L$ of queues uniformly at random and join the shortest sampled queue. The supermarket model has been extensively studied in the literature using the mean field approach.
Vvedenskaya et. al. \citep{VveDobKar96} shows that in the mean field model, the equilibrium queue sizes decay doubly exponentially for $L\ge 2$. Turner \citep{Turner98} proves an interesting coupling result for fixed $N$, which implies that the load in the network is handled better and more evenly as $L$ increases. Graham \citep{Graham00} proves a chaoticity result on the path space using the idea of propagation of chaos \citep{Sznitman91}, and Mitzenmacher \citep{Mitzenmacher00} studies the model using Kurtz's theorem. Furthermore, Luczak and McDiarmid \citep{McDiarmid00} prove that for $L \ge 2$, the maximum queue length scales as $\ln \ln n /\ln L + O(1)$. The recent paper by Bramson et. al. \citep{Bramson10} analyzes the supermarket model with general service time distributions. They show that in the case of FCFS discipline and power-law service time, the equilibrium queue sizes will decay doubly exponentially, exponentially, or just polynomially, depending on the power-law exponent and the number of choices $L$. Ganesh et. al. \citep{Ganesh10} studies a variant of the supermarket model, where the customers initially join an arbitrary server, but may switch to other servers later independently at random. They find that in the mean field model, the average waiting time under the load-oblivious switching strategy is not considerably larger than that under a smarter load-aware switching strategy.

In addition to the supermarket model, the mean field approach has also been used to analyze scheduling in queueing networks, such as the CSMA algorithm in a wireless local area network \citep{Proutiere10} and downlink transmission scheduling \citep{Alanyali11}. Also, a recent work \citep{Tsitsiklis11,Tsitsiklis12} investigates the performance tradeoff between centralized and distributed scheduling in a multi-server system for the mean field model. However, none of above work considers a game-theoretic framework.

The supermarket game proposed in this paper falls into a large research area involving equilibrium behavior of customers and servers, known as {\it queueing games}. A comprehensive survey can be found in \citep{Haviv09}. A particularly relevant paper by Hassin and Haviv \citep{Haviv94} studies a two line queueing system, where upon arrival each customer decides whether to purchase the information about which line is shorter, or randomly select one of the lines. It shows how to find a Nash equilibrium and examines the externality imposed by an informed customer on the others. A model in which customers can balk, either before or after sampling the backlog at a single server queue, is considered in \citep{Hassin12}. the paper finds benefits to the service operator of offering the possibility to balk after the backlog is sampled.

Finally, the supermarket game is also related to and partially motivated by the theory of {\it mean field games} in the context of dynamical games \citep{Huang07,Lions07}. The mean field game approach studies a weakly coupled, $N$ player game by letting $N \to \infty$. However, we caution that in the supermarket game, we consider an infinite sequence of customers instead of finitely many customers, which is different from an $N$ player game.

\subsection{Organization of the Paper}
Section \ref{SectionModel} introduces the precise definition of the supermarket game to be studied and the key notations. The  mean field model for the supermarket game is studied in Section \ref{SectionMFL}, and justified in Section \ref{SectionFiniteN}. Section \ref{SectionExternalityFinite} examines the externality of sampling more queues in the finite $N$ model. Section \ref{SectionConclusion} ends the paper with concluding remarks. Miscellaneous details and proofs are in the Appendix.

\section{Model and Notation} \label{SectionModel}
\subsection{Model}

Consider a supermarket game with $N$ FCFS queues with exponential service rate one, and global Poisson arrival rate $N \lambda$. Assume $\lambda <1$ and let $\mathcal{L}=\{1, \ldots, L_{\rm max} \}$. Upon arrival, each customer can choose a number $L \in \mathcal{L}$ of queues to be sampled uniformly at random, and the customer  joins the sampled queue with the least number of customers, ties being resolved uniformly at random. Customers are assumed to have cost $c$ per unit waiting time and $c_s$ for sampling one queue. These cost parameters are the same for all the customers; heterogeneous waiting costs are only considered in Section \ref{SectionHetero}.

For a fixed customer $i$, if she chooses $L_i$ queues to sample, and all the other customers choose $L_{-i}$, then by PASTA (Poisson arrival sees time average), the expected total cost of customer $i$ is given by
\begin{align}
C(L_i, L_{-i})=c \mathbb{E} [ W(L_i, L_{-i}) ] + c_s L_i,
\end{align}
where $\mathbb{E} [  W(L_i, L_{-i})  ]$ is the expected waiting time (service time included) under the stationary distribution. The goal of customer $i$ is to minimize her own expected total cost by choosing the optimal $L_i$.

Since the supermarket game is symmetric in the customers, we limit ourselves to symmetric strategies. We call $L^\star \in \mathcal{L}$ a pure strategy Nash equilibrium, if
\begin{align}
C(L^\star, L^\star) \le C(L_i, L^\star), \text{ for all } L_i \in \mathcal{L}. \nonumber
\end{align}
Since a pure strategy Nash equilibrium does not always exist, we are also interested in mixed strategy Nash equilibria.

Let $\mathcal{P}(\mathcal{L})$ denote the set of all probability distributions over $\mathcal{L}$.  The mixed strategy $\mu_i$ for customer $i$ is simply a probability distribution in $\mathcal{P}(\mathcal{L})$, i.e., $\mu_i(L_i)$ is the probability that customer $i$ samples $L_i$ queues. If all the other customers use the mixed strategy $\mu_{-i}$, then the expected total cost of customer $i$ using $\mu_i$ is given by
\begin{align}
C(\mu_i, \mu_{-i} )=\sum_{L_i=1}^{L_{\rm max}} C(L_i,\mu_{-i} ) \mu_i(L_i),
\end{align}
where $C(L_i,\mu_{-i})$ is the expected total cost of customer $i$ choosing $L_i$ given all the others choose the mixed strategy $\mu_{-i}$.

Define the best response correspondence for customer $i$ as $\text{BR}(\mu_{-i}):=\arg \min_{\mu_i} C(\mu_i, \mu_{-i})$. The correspondence $\text{BR}$ is a set-valued function from $\mathcal{P}( \mathcal{L})$ to subsets of $\mathcal{P}( \mathcal{L})$.
We call $\mu^\star \in \mathcal{P}( \mathcal{L})$ a mixed strategy Nash equilibrium if
\begin{align}
C(\mu^\star,\mu^\star) \le C(\mu_i, \mu^\star), \text{ for all } \mu_i \in \mathcal{P}( \mathcal{L}). \nonumber
\end{align}
In this paper, we are interested in characterizing the Nash equilibria of the supermarket game.

\subsection{Notation}
Let $\mathscr{X}$ denote a separable and complete metric space, $\mathcal{M}(\mathscr{X})$ and $\mathcal{P} (\mathscr{X})$ be the space of measures and space of probability measures on $\mathscr{X}$, respectively. In this paper, $\mathscr{X}$ will be $\mathcal{L}$, $\mathbb{N}$, $\mathbb{D}( \mathbb{R}_+, \mathbb{N} )$, or the space of probability measures on these spaces, where $\mathbb{N}$ is the set of natural numbers and $\mathbb{D}$ denotes the Skorokhod space. Let $\mathscr{L} (X)$ denote the law of a random variable $X$ on $\mathscr{X}$ and $\delta_{x}$ denote a point probability measure at $x \in \mathscr{X} $. Weak convergence of probability measures is denoted by $\Longrightarrow$.

Define on $\mathcal{M}(\mathscr{X})$ natural duality brackets with $L^\infty (\mathscr{X})$ as: for $\phi \in L^\infty (\mathscr{X})$ and $\mu \in \mathcal{M}(\mathscr{X})$, $\langle \phi, \mu \rangle = \int \phi d \mu$. Without ambiguity, brackets are also used to denote the quadratic covariation of continuous time martingales: let $M_1, M_2$ be two continuous time martingales, then $\langle M_1, M_2 \rangle$ is the standard quadratic covariation process. Define the total variation norm on $ \mathcal{M}(\mathscr{X})$ as $\| \mu \|_{\rm TV} = \sup \{ \langle f, \mu \rangle: \|f\|_{\infty} \le 1 \}$. Let $\| \mu_1 - \mu_2 \|_{\rm TV}$ be the corresponding total variation distance.

For $\mu_1, \mu_2 \in \mathcal{P}(\mathcal{L})$, use $ \mu_1 \le_{\rm st} \mu_2$ to denote that $\mu_1$ is first-order stochastically dominated by $\mu_2$, i.e., $\sum_{j=l}^{L_{\max}} \mu_1(j) \le \sum_{j=l}^{L_{\max}} \mu_2(j), \forall l \in \mathcal{L}$. For $x \in \mathbb{R}$, let $\lfloor x \rfloor$ denote the maximum integer no larger than $x$.

\section{Supermarket Game in a Mean Field Model} \label{SectionMFL}
The supermarket game in a mean field model is studied in this section by investigating the mixed strategy Nash equilibrium and the externality of sampling more queues.
\subsection{Mean Field Model}
In this subsection, we derive an expression for the expected total cost incurred by a customer in a mean field model, by assuming that queue lengths (including customers in service) are independent and identically distributed.

Suppose all the customers except customer $i$ use the mixed strategy $\mu_{-i}$, and let $r_t(k)$ denote the fraction of queues with at least $k$ customers at time $t$ in the mean field model. Then, the mean field equation is given by
\begin{align}
\frac{d r_t(k) }{d t} = \sum_{l=1}^{L_{\rm max}}  \lambda \mu_{-i}(l) \left( r_t^{l}(k-1) - r_t^{ l } (k) \right) - (r_t(k)-r_t(k+1) ),\label{EqMeanFieldEquation}
\end{align}
which is rigorously derived in Section \ref{SectionFiniteN}. For now, let us provide some intuition for each of the drift terms in (\ref{EqMeanFieldEquation}):

The term $\lambda \mu_{-i}(l) ( r_t^{l}(k-1) - r_t^{ l } (k) )$ corresponds to the arrivals of customers sampling $l$ queues. Because queue lengths are i.i.d,  $r_t^{l}(k-1)$ is the probability that the minimum queue length of $l$ uniformly sampled queues is greater than or equal to $k-1$. Thus, $( r_t^{l}(k-1) - r_t^{ l } (k) )$ is the probability that the minimum queue length of $l$ uniformly sampled queues is $k-1$, which is the same as the probability that a customer who samples $l$ queues joins a queue with $k-1$ customers. Note that $r_t(k)$ is increased if a customer joins a queue with $k-1$ customers. Therefore, $\sum_{l=1}^{L_{\rm max}}  \lambda \mu_{-i}(l) \left( r_t^{l}(k-1) - r_t^{ l } (k) \right)$ is the aggregate drift for $r_t(k)$ corresponding to arrivals.

The term $(r_t(k)-r_t(k+1) )$ corresponds to departures of customers at queues with exactly $k$ customers.

Since we are interested in the stationary regime, set $\frac{d r_t(k) }{d t} =0$ to yield equations for the equilibrium distribution denoted by $r_{\mu_{-i}}(k), k \ge 0$. For $k\ge 1$,
\begin{align}
\lambda \mathbb{E}_{\mu_{-i}} \left[ r_{\mu_{-i}}^{\mathbf{L}}(k-1) - r_{\mu_{-i}}^{\mathbf{L}} (k)   \right] = r_{\mu_{-i}}(k)-r_{\mu_{-i}}(k+1), \nonumber
\end{align}
where the random variable $\mathbf{L}$ is distributed as ${\mu_{-i}}$.
By summing the above equation for $k_0 \le k < + \infty$ using telescoping sums and changing $k_0$ to $k$, it follows that
\begin{align}
r_{\mu_{-i}}(0)=1,  r_{\mu_{-i}}(k) = \lambda u_{\mu_{-i}}( r_{{\mu_{-i}}}(k-1)), \label{EqDistriMeanFieldLimit}
\end{align}
where $u_{\mu_{-i}} (x):=\mathbb{E}_{\mu_{-i}} \left[ x^{\mathbf{L}} \right]= \sum_{l=1}^{L_{\max}} x^l {\mu_{-i}}(l)$.

Because queues are independent in the mean field model,
\begin{align}
\mathbb{E} [ W(L_i, \mu_{-i} ) ] = 1+  \mathbb{E} [ N(L_i) ] = 1+ \sum_{k=1}^\infty \mathbb{P} [  N (L_i) \ge k ] = \sum_{k=0}^\infty  r_{\mu_{-i}}^{L_i} (k), \nonumber
\end{align}
where $N(L_i)$ is the length of the shortest queue among the $L_i$ sampled queues. Therefore,
\begin{align}
C(L_i,\mu_{-i}) = c \sum_{k=0}^\infty  r_{\mu_{-i}}^{L_i}(k) +c_s L_i .
\end{align}
Since $g(L)=x^L$ is convex in $L \in \mathbb{R}$ for $x \ge 0$ and strictly convex for $0<x<1$, it follows that $\mathbb{E} [ W(L, \mu_{-i} ) ] $ and $C(L,\mu_{-i}) $ are strictly convex in $L \in \mathbb{R}$.

Next, we prove several key lemmas which are useful in the sequel.
\begin{lemma} \label{LemmaBestResponse}
The best response set $\text{BR}(\mu_{-i})$ consists of probability measures concentrated on an integer or two consecutive integers.
\end{lemma}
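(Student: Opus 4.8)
The plan is to combine the affine dependence of $C(\mu_i,\mu_{-i})$ on the strategy $\mu_i$ with the strict convexity of $L \mapsto C(L,\mu_{-i})$ on $\mathbb{R}$ that was just recorded in the text. First I would note that, since
\begin{align}
C(\mu_i,\mu_{-i}) = \sum_{L_i=1}^{L_{\max}} C(L_i,\mu_{-i})\, \mu_i(L_i) \nonumber
\end{align}
is linear in the probability vector $\mu_i$ over the simplex $\mathcal{P}(\mathcal{L})$, its minimum equals $\min_{L\in\mathcal{L}} C(L,\mu_{-i})$ and is attained precisely by those $\mu_i$ whose support is contained in the finite set $A := \arg\min_{L\in\mathcal{L}} C(L,\mu_{-i})$. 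Hence $\text{BR}(\mu_{-i}) = \{\mu_i \in \mathcal{P}(\mathcal{L}) : \operatorname{supp}(\mu_i)\subseteq A\}$, and the lemma reduces to showing that $A$ is either a single integer or a pair of consecutive integers.

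For this, write $f(L) := C(L,\mu_{-i})$ and use that $f$ is strictly convex on $\mathbb{R}$, so the increments $f(m+1)-f(m)$ are strictly increasing in $m$ over $\{1,\dots,L_{\max}-1\}$. An index $m\in\mathcal{L}$ belongs to $A$ if and only if $f(m-1)\ge f(m)$ (when $m>1$) and $f(m+1)\ge f(m)$ (when $m<L_{\max}$); in increment terms this says the sequence of increments changes sign weakly at $m$. Because the increments are strictly monotone, at most one of them can vanish, from which it follows that $A$ has at most two elements and that, if it has two, they are consecutive. This establishes the claim. I would handle the boundary cases $m=1$ and $m=L_{\max}$ separately but they present no difficulty; in those cases $A$ is a singleton unless the relevant single increment vanishes.

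The strict convexity input is legitimately available: in the series $C(L_i,\mu_{-i}) = c\sum_{k\ge 0} r_{\mu_{-i}}^{L_i}(k) + c_s L_i$ one has $r_{\mu_{-i}}(0)=1$ (contributing an affine term), $r_{\mu_{-i}}(1) = \lambda u_{\mu_{-i}}(1) = \lambda \in (0,1)$ so the $k=1$ term $\lambda^{L_i}$ is strictly convex in $L_i$, the remaining terms $r_{\mu_{-i}}^{L_i}(k)$ are convex, and $c_s L_i$ is affine; a sum of convex functions at least one of which is strictly convex is strictly convex. I do not anticipate a genuine obstacle: the heart of the lemma is the elementary fact that a strictly convex function of a real variable, restricted to consecutive integers, has a minimizer set of cardinality one or two, and the rest is the bookkeeping of the linearity argument described above.
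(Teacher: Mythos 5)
Your proposal is correct and follows essentially the same route as the paper: linearity of $C(\mu_i,\mu_{-i})$ in $\mu_i$ reduces the best response to distributions supported on $\arg\min_{L}C(L,\mu_{-i})$, and strict convexity of $L\mapsto C(L,\mu_{-i})$ (coming from the $k=1$ term $\lambda^{L}$ with $0<\lambda<1$) forces that argmin set to be a single integer or two consecutive ones. The paper phrases the last step as a three-point contradiction ($C(L_1)=C(L_3)\le C(L_2)$ for $L_1<L_2<L_3$) rather than your increment analysis, but the content is the same.
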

\begin{proof}
Let $\mu_i \in \text{BR}(\mu_{-i})$. Suppose there exists $L_1 < L_2 < L_3 \in \mathcal{L}$ such that $\mu_i(L_1)>0$ and $ \mu_i(L_3)>0$. Then, by the definition of $\text{BR}$,
\begin{align}
C(L_1, \mu_{-i})= C(L_3, \mu_{-i}) \le C(L_2, \mu_{-i}), \nonumber
\end{align}
which contradicts the fact that $C(L,\mu_{-i})$ is strictly convex in $L$ and thus the conclusion follows.
\end{proof}
\begin{remark}
Lemma \ref{LemmaBestResponse} implies that a probability measure $\mu \in \text{BR}(\mu_{-i})$ can be identified with a unique real number $L \in [1,L_{\max}]$. A number $L \in [1,L_{\max}]$ with $L=\lfloor L \rfloor + p$ for some $ 0 \le p < 1$ is identified with the probability measure with mass $1-p$ at $\lfloor L \rfloor$ and $p$ at $\lfloor L \rfloor+1$. Thus, we use real numbers to refer to probability measures in $\text{BR}(\mu_{-i})$.
\end{remark}

The next lemma translates the stochastic dominance relations between strategies into the stochastic dominance relations between mean field equilibrium distributions.
\begin{lemma} \label{LemmaMonotonicity}
Fix any $\mu_1, \mu_2 \in \mathcal{P}(\mathcal{L})$ such that $\mu_1 \le_{\rm st} \mu_2$. Then, for all $k \in \mathbb{N}$,  $r_{\mu_1}(k) \ge r_{\mu_2}(k)$. Furthermore, if $\mu_1 <_{\rm st} \mu_2$, then for all $k \ge 2$, $r_{\mu_1}(k) > r_{\mu_2}(k)$. Also, it follows that for all $k \in \mathbb{N} $ and all $\mu \in\mathcal{P}(\mathcal{L}) $, $r_\mu(k) \le \lambda^k$ and $C(L,\mu)$ is bounded independently of $L$ and $\mu$.
\end{lemma}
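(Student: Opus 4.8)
The plan is to argue entirely from the fixed-point recursion (\ref{EqDistriMeanFieldLimit}), namely $r_\mu(0)=1$ and $r_\mu(k)=\lambda u_\mu(r_\mu(k-1))$, combined with two elementary properties of the generating function $u_\mu(x)=\sum_{l=1}^{L_{\max}} x^l\mu(l)$. The first is a ``shape'' property: on $[0,1]$ we have $0\le u_\mu(x)\le x$, with $u_\mu$ strictly increasing, $u_\mu(0)=0$ and $u_\mu(1)=1$; all of this is immediate from $x^l\le x$ for $l\ge 1$ and from term-by-term differentiation. The second is monotonicity in $\mu$, which is where the stochastic order enters: if $\mu_1\le_{\rm st}\mu_2$ then $u_{\mu_1}(x)\ge u_{\mu_2}(x)$ for all $x\in[0,1]$, because $l\mapsto x^l$ is non-increasing; and if moreover $\mu_1\ne\mu_2$ the inequality is strict for $x\in(0,1)$, since $l\mapsto x^l$ is then strictly decreasing. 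A clean way to see this is to sum by parts: writing $\bar F_i(l)=\sum_{j\ge l}\mu_i(j)$ and using $\bar F_1(1)=\bar F_2(1)=1$, one gets $u_{\mu_1}(x)-u_{\mu_2}(x)=(1-x)\sum_{l=2}^{L_{\max}}x^{l-1}\bigl(\bar F_2(l)-\bar F_1(l)\bigr)$, a sum of nonnegative terms at least one of which is positive when $x\in(0,1)$ and $\mu_1\ne\mu_2$.

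With these in hand, every assertion reduces to a short induction on $k$. A preliminary induction records that $r_\mu(0)=1$ and $r_\mu(k)\in(0,1)$ for all $k\ge 1$ and all $\mu$: indeed $r_\mu(1)=\lambda u_\mu(1)=\lambda\in(0,1)$, while $u_\mu$ maps $(0,1)$ into $(0,1)$ and $\lambda<1$. For the first claim I would induct on $k$ with base case $r_{\mu_1}(0)=1=r_{\mu_2}(0)$; if $r_{\mu_1}(k-1)\ge r_{\mu_2}(k-1)$ then
\[
 r_{\mu_1}(k)=\lambda u_{\mu_1}(r_{\mu_1}(k-1))\ge \lambda u_{\mu_1}(r_{\mu_2}(k-1))\ge \lambda u_{\mu_2}(r_{\mu_2}(k-1))=r_{\mu_2}(k),
\]
the first inequality by monotonicity of $u_{\mu_1}$ and the second by the pointwise bound $u_{\mu_1}\ge u_{\mu_2}$ on $[0,1]$. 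For the strict claim, note that when $k\ge 2$ we have $r_{\mu_2}(k-1)\in(0,1)$, so under $\mu_1<_{\rm st}\mu_2$ the second inequality above is strict and is preserved under multiplication by $\lambda>0$, giving $r_{\mu_1}(k)>r_{\mu_2}(k)$; no strictness is available at $k=1$ because $r_{\mu_1}(1)=r_{\mu_2}(1)=\lambda$, which is exactly why the stated threshold is $k\ge 2$.

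For the remaining two statements, a further induction using $u_\mu(x)\le x$ gives $r_\mu(k)\le\lambda r_\mu(k-1)\le\cdots\le\lambda^k$ for all $k$ and all $\mu$. Finally, since $C(L,\mu)=c\sum_{k=0}^\infty r_\mu^L(k)+c_s L$ with $1\le L\le L_{\max}$ and $r_\mu^L(k)\le r_\mu(k)\le\lambda^k$ (using $r_\mu(k)\le 1$, $L\ge 1$), we obtain $c+c_s\le C(L,\mu)\le c\sum_{k=0}^\infty\lambda^k+c_s L_{\max}=\frac{c}{1-\lambda}+c_s L_{\max}$, a bound depending only on $c,c_s,\lambda$ and $L_{\max}$. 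None of these steps is genuinely hard; the only point requiring care is the bookkeeping of strictness — identifying that the hypothesis $\mu_1\ne\mu_2$ is used only through a tail-probability index $\ge 2$, and that the recursion neutralizes it at $k=1$, which pins down $k\ge 2$ as the correct range in the second assertion.
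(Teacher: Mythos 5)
Your proof is correct and follows essentially the same route as the paper: an induction on $k$ driven by the pointwise ordering $u_{\mu_1}\ge u_{\mu_2}$ on $[0,1]$ (strict on $(0,1)$ when $\mu_1<_{\rm st}\mu_2$), with the same observation that strictness only kicks in at $k\ge 2$. The only cosmetic difference is in the final claim, where you use $u_\mu(x)\le x$ directly while the paper compares $\mu$ with the point mass $\delta_1$ (for which $u_{\delta_1}(x)=x$); these are the same argument.
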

\begin{proof}
Since $\mu_1 \le_{\rm st} \mu_2$, it follows that for all $x \in [0,1] $, $u_{\mu_1}(x) \ge u_{\mu_2}(x)$. We prove the lemma by induction. If $k=0$, then $r_{\mu_1}(0)=r_{\mu_2}(0)=1$ and $r_{\mu_1}(0) \ge r_{\mu_2}(0)$ trivially holds.
If $r_{\mu_1}(k-1) \ge r_{\mu_2}(k-1)$, then
\begin{align}
r_{\mu_1}(k) &= \lambda u_{\mu_1} \left[ r_{\mu_1} (k-1)  \right] \ge  \lambda u_{\mu_1} \left[ r_{\mu_2}(k-1)  \right] \nonumber \\
& \ge \lambda u_{\mu_2}  \left[ r_{\mu_2} (k-1)  \right] = r_{\mu_2}(k). \nonumber
\end{align}
Therefore, for all $k \in \mathbb{N}$, $r_{\mu_1}(k) \ge r_{\mu_2}(k)$. Moreover, if $ \mu_1 <_{\rm st} \mu_2$, it follows that for all $ x \in (0,1) $, $u_{\mu_1}(x) > u_{\mu_2}(x)$. Thus,  for all $ k \ge 2$,  $r_{\mu_1}(k) > r_{\mu_2}(k)$.

Let $\mu_1=\delta_{1}$ be a point measure at singleton $1$. Then, $r_{\mu_1}(k)=\lambda^k$. Because for any $\mu \in \mathcal{P}(\mathcal{L})$, $\delta_{1} \le_{\rm st} \mu$, then $r_\mu(k) \le \lambda^k$ and $C(L,\mu) \le c/(1-\lambda) + c_s L_{\max}$.
\end{proof}

The next lemma states that $C(\mu_i,\mu_{-i})$ is a continuous function.
\begin{lemma} \label{LemmaContinuity}
$C(\mu_i,\mu_{-i})$ is jointly continuous with respect to $\mu_i$ and $\mu_{-i}$.
\end{lemma}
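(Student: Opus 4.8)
The plan is to establish continuity in the two arguments and then combine. Since $\mathcal{L}$ is finite, $\mathcal{P}(\mathcal{L})$ is a simplex in Euclidean space and, for fixed $\mu_{-i}$, the map $\mu_i \mapsto C(\mu_i,\mu_{-i}) = \sum_{L_i=1}^{L_{\max}} C(L_i,\mu_{-i})\,\mu_i(L_i)$ is linear, hence continuous. So the crux is continuity of $\mu_{-i}\mapsto C(L_i,\mu_{-i})$ for each of the finitely many $L_i\in\mathcal{L}$, after which joint continuity follows because $C(\mu_i,\mu_{-i})$ is a finite sum of products of a map continuous in $\mu_{-i}$ (and bounded, by Lemma \ref{LemmaMonotonicity}) with the continuous coordinate functionals $\mu_i\mapsto\mu_i(L_i)$.

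First I would show that for each fixed $k$ the map $\mu\mapsto r_\mu(k)$ is continuous on $\mathcal{P}(\mathcal{L})$. The function $u_\mu(x)=\sum_{l=1}^{L_{\max}} x^l \mu(l)$ is a polynomial in $x$ whose coefficients depend continuously (indeed linearly) on $\mu$, so $(\mu,x)\mapsto u_\mu(x)$ is jointly continuous on $\mathcal{P}(\mathcal{L})\times[0,1]$. Using the recursion $r_\mu(0)=1$ and $r_\mu(k)=\lambda u_\mu(r_\mu(k-1))$ from (\ref{EqDistriMeanFieldLimit}), together with $r_\mu(k-1)\in[0,1]$ (Lemma \ref{LemmaMonotonicity}), an easy induction on $k$ gives continuity of $\mu\mapsto r_\mu(k)$, since a composition of continuous maps is continuous.

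Next I would pass to the infinite series $C(L_i,\mu_{-i})=c\sum_{k=0}^\infty r_{\mu_{-i}}^{L_i}(k)+c_s L_i$. By Lemma \ref{LemmaMonotonicity}, $0\le r_\mu(k)\le\lambda^k$, and since $L_i\ge 1$ this yields $r_\mu^{L_i}(k)\le\lambda^k$; hence the tail $\sum_{k\ge M} r_\mu^{L_i}(k)$ is at most $\lambda^M/(1-\lambda)$, a bound independent of $\mu$ and $L_i$. Therefore the partial sums $\sum_{k=0}^{M} r_\mu^{L_i}(k)$, each continuous in $\mu$ by the previous step, converge uniformly on $\mathcal{P}(\mathcal{L})$ as $M\to\infty$, and a uniform limit of continuous functions is continuous. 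Combining this with the first paragraph gives joint continuity. The only point requiring any care is this interchange of the infinite sum with the limit, which is settled by the uniform geometric tail bound coming from Lemma \ref{LemmaMonotonicity}; everything else is routine.
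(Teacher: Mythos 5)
Your proposal is correct and follows essentially the same route as the paper's proof: linearity in $\mu_i$, continuity of $\mu\mapsto r_\mu(k)$ for each $k$, and the uniform geometric tail bound $r_\mu(k)\le\lambda^k$ from Lemma \ref{LemmaMonotonicity} to justify passing continuity through the infinite series. If anything you are slightly more explicit than the paper, which merely asserts the continuity of $r_{\mu_{-i}}(k)$ that you establish by induction, and your ``finite sum of products'' step for joint continuity is a clean substitute for the paper's triangle-inequality argument.
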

\begin{proof}
See the proof in Appendix \ref{ProofLemmaContinuity}.
\end{proof}

\subsection{Existence and Uniqueness of Nash Equilibrium}
In this subsection, we show the existence of a mixed strategy Nash equilibrium. It is easy to see that if there exists $\mu^\star$ with $\mu^\star \in \text{BR}(\mu^\star)$, i.e., $\mu^\star$ is a fixed point of the best response correspondence, then $\mu^\star$ is a mixed strategy Nash equilibrium. Thus, it suffices to show the existence of such a fixed point. The Kakutani fixed point theorem is used to prove it. \\

\noindent{\bf(Kakutani's Theorem)} Let $\mathcal{S}$ be a nonempty, compact and convex subset of some Euclidean space $\mathbb{R}^n$. Let $g: \mathcal{S} \to 2^{\mathcal{S}}$ be a set-valued function on $\mathcal{S}$ with a closed graph and the property that $g(x)$ is a nonempty and convex set for all $x \in \mathcal{S}$. Then $g$ has a fixed point.

In our setting, $\mathcal{S}=\mathcal{P}( \mathcal{L})$ and $g=\text{BR}$. It is known that $\mathcal{P}( \mathcal{L})$, as a space of probability distribution on finite set, is a nonempty, compact and convex subset of some Euclidean space. Also, since $C(\mu_i,\mu_{-i})$ is continuous, by Weierstrass Theorem, $\text{BR}(\mu_{-i})$ is a nonempty set. In addition, $C(\mu_i,\mu_{-i})$ is linear in $\mu_i$, so $\text{BR}(\mu_{-i})$ is a convex set. The last and key step is to prove that $\text{BR}$ has a closed graph, i.e., suppose $\mu_{-i}^{(n)} \to \mu_{-i}$, $\mu_i^{(n)} \in \text{BR}(\mu_{-i}^{(n)})$ and $\mu_i^{(n)} \to \mu_i$, we need to show that $\mu_i \in \text{BR} (\mu_{-i})$. It is proved in the following theorem using the continuity of $C(\mu_i,\mu_{-i})$.

\begin{theorem} \label{TheoremExistenceNE}
The supermarket game has a mixed strategy Nash equilibrium in the mean field model.
\end{theorem}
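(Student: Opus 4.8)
The plan is to apply Kakutani's fixed point theorem to $\mathcal{S} = \mathcal{P}(\mathcal{L})$ and $g = \text{BR}$, exactly along the lines sketched just before the statement. As already observed, $\mathcal{P}(\mathcal{L})$ is a nonempty, compact, convex subset of $\mathbb{R}^{L_{\max}}$; for each fixed $\mu_{-i}$ the map $\mu_i \mapsto C(\mu_i,\mu_{-i})$ is continuous on the compact set $\mathcal{P}(\mathcal{L})$ by Lemma \ref{LemmaContinuity}, so by the Weierstrass theorem $\text{BR}(\mu_{-i}) = \arg\min_{\mu_i} C(\mu_i,\mu_{-i})$ is nonempty; and because $C(\mu_i,\mu_{-i}) = \sum_{L_i} C(L_i,\mu_{-i})\mu_i(L_i)$ is linear, hence both convex and concave, in $\mu_i$, the set of minimizers $\text{BR}(\mu_{-i})$ is convex. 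It remains only to check the hypothesis that is not immediate: that $\text{BR}$ has a closed graph.

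For the closed graph property I would take sequences $\mu_{-i}^{(n)} \to \mu_{-i}$ in $\mathcal{P}(\mathcal{L})$ and $\mu_i^{(n)} \in \text{BR}(\mu_{-i}^{(n)})$ with $\mu_i^{(n)} \to \mu_i$, and show $\mu_i \in \text{BR}(\mu_{-i})$. Fix an arbitrary $\nu \in \mathcal{P}(\mathcal{L})$. By definition of $\text{BR}$, we have $C(\mu_i^{(n)},\mu_{-i}^{(n)}) \le C(\nu,\mu_{-i}^{(n)})$ for every $n$. Letting $n \to \infty$ and invoking the joint continuity of $C$ from Lemma \ref{LemmaContinuity}, we obtain $C(\mu_i,\mu_{-i}) \le C(\nu,\mu_{-i})$. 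Since $\nu$ was arbitrary, $\mu_i$ minimizes $C(\cdot,\mu_{-i})$, i.e.\ $\mu_i \in \text{BR}(\mu_{-i})$. This establishes the closed graph.

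With all the hypotheses verified, Kakutani's theorem produces a fixed point $\mu^\star \in \text{BR}(\mu^\star)$. As noted just before the theorem, such a fixed point satisfies $C(\mu^\star,\mu^\star) \le C(\mu_i,\mu^\star)$ for all $\mu_i \in \mathcal{P}(\mathcal{L})$, which is precisely the definition of a mixed strategy Nash equilibrium, completing the proof.

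The only genuinely substantive input is the joint continuity of $C$ (Lemma \ref{LemmaContinuity}); everything else is bookkeeping, so I expect the theorem's own proof to be short. Within Lemma \ref{LemmaContinuity} the delicate point is continuity in $\mu_{-i}$: one must argue that the equilibrium tail function $r_{\mu_{-i}}(k)$, defined recursively by $r_{\mu_{-i}}(k) = \lambda u_{\mu_{-i}}(r_{\mu_{-i}}(k-1))$, depends continuously on $\mu_{-i}$ in a way strong enough to pass a limit through the infinite sum $\sum_{k \ge 0} r_{\mu_{-i}}^{L_i}(k)$. Here the uniform geometric bound $r_\mu(k) \le \lambda^k$ supplied by Lemma \ref{LemmaMonotonicity}, which holds for all $\mu$, furnishes the dominating tail that makes this interchange of limit and summation legitimate. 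Thus the main obstacle is already discharged by Lemmas \ref{LemmaMonotonicity} and \ref{LemmaContinuity}, and the argument above is essentially a verification of Kakutani's conditions.
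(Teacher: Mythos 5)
Your proposal is correct and follows essentially the same route as the paper: the text before the theorem already verifies the domain and value conditions for Kakutani's theorem, and the paper's own proof consists exactly of your closed-graph argument (pass the inequality $C(\mu_i^{(n)},\mu_{-i}^{(n)}) \le C(\nu,\mu_{-i}^{(n)})$ to the limit using the joint continuity from Lemma \ref{LemmaContinuity}) followed by the application of Kakutani's theorem. Nothing further is needed.
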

\begin{proof}
See the proof in Appendix \ref{ProofTheoremExistenceNE}.
\end{proof}

The uniqueness of mixed strategy Nash equilibrium is proved next in case $\lambda^2 \le 1/2 $. Some definitions and lemmas are introduced first.

For customer $i$, her {\it marginal value of sampling} at an integer $L_i $ with all the others adopting $\mu_{-i}$, is defined as
\begin{align}
V(L_i, \mu_{-i} ):= \mathbb{E} [ W(L_i, \mu_{-i}) ]- \mathbb{E} [ W(L_i+1, \mu_{-i} ) ]. \nonumber
\end{align}
Intuitively, $V(L_i, \mu_{-i} )$ characterizes the reduction of expected waiting time when customer $i$ increases the number of sampled queues from $L_i$ to $L_i +1 $. For ease of notation, define $V(0,\mu_i)=\infty$ and $V(L_{\max}, \mu_i) =0$. Since $\mathbb{E} [W(L_i, \mu_{-i}) ]$ is strictly convex in $L \in \mathbb{R}$, $V(L_i, \mu_{-i} )$ is strictly decreasing in $L_i$. The following lemma characterizes the best response using $V(L_i, \mu_{-i} )$.

\begin{lemma} \label{LemmaBRMarginalValueInfo}
Fix any $\mu_{-i} \in \mathcal{P}(\mathcal{L})$ and any integer $L_i \in \mathcal{L}$. Then $L_i \in \text{BR} (\mu_{-i})$ if and only if
\begin{align}
V(L_i, \mu_{-i} )  \le c_s/c \le V(L_i-1, \mu_{-i} ). \label{EqMarginalValueInfo}
\end{align}
Furthermore, fix any non-integer $L_i \in [1,L_{\max}]$. Then $L_i \in \text{BR} (\mu_{-i})$ if and only if
\begin{align}
V (\lfloor L_i \rfloor , \mu_{-i} ) = c_s/c. \label{EqMarginalValueInfoSecond}
\end{align}
\end{lemma}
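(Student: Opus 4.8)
The plan is to exploit the strict convexity of $C(L,\mu_{-i})$ in $L \in \mathbb{R}$, together with Lemma \ref{LemmaBestResponse} and its Remark, which together say that a best response is identified with a real number $L^\star \in [1,L_{\max}]$ that minimizes the (strictly convex, continuous) function $L \mapsto C(L,\mu_{-i}) = c\,\mathbb{E}[W(L,\mu_{-i})] + c_s L$ over the interval $[1,L_{\max}]$. The key observation is that the finite difference $C(L_i,\mu_{-i}) - C(L_i-1,\mu_{-i}) = -c\,V(L_i-1,\mu_{-i}) + c_s$ plays the role of a discrete derivative, and by strict convexity an integer $L_i$ minimizes $C(\cdot,\mu_{-i})$ over $\mathcal{L}$ (equivalently over $[1,L_{\max}]$, since for integer arguments the convexity forces any real minimizer to be at an integer when the discrete derivative does not vanish) exactly when moving one step down does not decrease the cost and moving one step up does not decrease the cost.

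First I would write, for integer $L_i$ with $1 \le L_i \le L_{\max}$, the two conditions for $L_i$ to be a minimizer of the convex function $C(\cdot,\mu_{-i})$ restricted to $\{1,\dots,L_{\max}\}$: namely $C(L_i,\mu_{-i}) \le C(L_i-1,\mu_{-i})$ whenever $L_i \ge 2$, and $C(L_i,\mu_{-i}) \le C(L_i+1,\mu_{-i})$ whenever $L_i \le L_{\max}-1$. By strict convexity of $C(\cdot,\mu_{-i})$ on $\mathbb{R}$, these two inequalities are in fact equivalent to $L_i$ being the global minimizer over $[1,L_{\max}]$; I would note the boundary cases $L_i = 1$ and $L_i = L_{\max}$ are handled by the conventions $V(0,\mu_{-i}) = \infty$ and $V(L_{\max},\mu_{-i}) = 0$, which make the stated double inequality automatically include them. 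Substituting $C(L,\mu_{-i}) = c\,\mathbb{E}[W(L,\mu_{-i})] + c_s L$ and using the definition $V(L_i,\mu_{-i}) = \mathbb{E}[W(L_i,\mu_{-i})] - \mathbb{E}[W(L_i+1,\mu_{-i})]$, the first inequality becomes $c_s \le c\,V(L_i-1,\mu_{-i})$ and the second becomes $c\,V(L_i,\mu_{-i}) \le c_s$; dividing by $c>0$ yields exactly \eqref{EqMarginalValueInfo}. For the converse direction one reverses these algebraic steps, and then invokes convexity once more: if the discrete derivative is nonpositive going up to $L_i$ and nonnegative going down from $L_i$, convexity forces $L_i$ to be a minimizer.

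For the non-integer case, I would use the Remark: a non-integer $L_i \in [1,L_{\max}]$ corresponds to the mixed strategy putting mass $1-p$ on $\lfloor L_i\rfloor$ and mass $p$ on $\lfloor L_i\rfloor + 1$ with $p = L_i - \lfloor L_i\rfloor \in (0,1)$, and by Lemma \ref{LemmaBestResponse} such a measure is in $\text{BR}(\mu_{-i})$ iff both $\lfloor L_i\rfloor$ and $\lfloor L_i\rfloor + 1$ are in $\text{BR}(\mu_{-i})$, i.e., iff $C(\lfloor L_i\rfloor,\mu_{-i}) = C(\lfloor L_i\rfloor + 1,\mu_{-i})$. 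Rewriting this equality of costs via the expansion above gives $c\,V(\lfloor L_i\rfloor,\mu_{-i}) = c_s$, i.e., \eqref{EqMarginalValueInfoSecond}; conversely, if \eqref{EqMarginalValueInfoSecond} holds then the two consecutive integers have equal cost, and strict convexity makes that common value the minimum, so every mixture of the two (including the one corresponding to $L_i$) is a best response. The main obstacle, though a mild one, is being careful with the boundary conventions and with the logical equivalence "two-sided discrete-derivative sign condition $\iff$ global minimizer" — this equivalence is exactly where strict convexity (already established in the text for $C(L,\mu_{-i})$ as a function of real $L$, since $x\mapsto x^L$ is strictly convex for $0<x<1$ and $r_{\mu_{-i}}(k)\in(0,1)$ by Lemma \ref{LemmaMonotonicity}) is used, and I would state it cleanly rather than re-deriving a general fact about convex functions.
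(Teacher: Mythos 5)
Your proposal is correct and follows essentially the same route as the paper: both arguments reduce membership in $\text{BR}(\mu_{-i})$ to the two one-step cost comparisons, translate these into the double inequality \eqref{EqMarginalValueInfo} via the identity $C(L_i,\mu_{-i})-C(L_i-1,\mu_{-i})=-c\,V(L_i-1,\mu_{-i})+c_s$, and handle the converse and the non-integer case by noting that strict convexity of $C(\cdot,\mu_{-i})$ (equivalently, strict monotonicity of $V(\cdot,\mu_{-i})$, which is what the paper invokes explicitly) forces the two-sided sign condition to pin down the global minimizer. The only cosmetic difference is that you phrase the key step in terms of convexity of the cost while the paper phrases it in terms of the strictly decreasing marginal value of sampling; these are the same fact.
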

\begin{proof}
See the proof in Appendix \ref{ProofLemmaBRMarginalValueInfo}.
\end{proof}

The next lemma proves a global monotonicity property of $V(L_i, \mu_{-i} )$ in case $\lambda^2 \le 1/2$, which is useful for showing the uniqueness of Nash equilibrium in that case.
\begin{lemma} \label{LemmaMonUnique}
Assume $\lambda^2 \le 1/2 $ and fix any $\mu_{-i},\tilde{\mu}_{-i} \in \mathcal{P}(\mathcal{L})$ such that $\mu_{-i} <_{\rm st} \tilde{\mu}_{-i}$. Then $ V(L_i, \mu_{-i}) > V(L_i, \tilde{\mu}_{-i} )$ for $1 \le L_i \le L_{\max}-1$. Furthermore, let $L_{i} \in \text{BR}( \mu_{-i}  )$ and $\tilde{L}_i \in \text{BR}(\tilde{\mu}_{-i})$, then $\tilde{L}_i \le L_{i}$.
\end{lemma}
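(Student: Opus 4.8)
The plan is to establish the strict monotonicity of $V$ in its strategy argument by a term-by-term comparison of a series representation of $V$, and then to deduce the ordering $\tilde L_i \le L_i$ from the characterization of best responses in Lemma \ref{LemmaBRMarginalValueInfo}.

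For the first assertion, I would start from $\mathbb{E}[W(L_i,\mu_{-i})] = \sum_{k\ge 0} r_{\mu_{-i}}^{L_i}(k)$ and $r_{\mu_{-i}}(0)=1$ to write
\[
V(L_i,\mu_{-i}) = \sum_{k=1}^{\infty} r_{\mu_{-i}}^{L_i}(k)\bigl(1-r_{\mu_{-i}}(k)\bigr).
\]
The crucial point is that the $k=1$ term does not depend on the strategy: by (\ref{EqDistriMeanFieldLimit}), $r_\mu(1)=\lambda u_\mu(1)=\lambda$ for every $\mu\in\mathcal{P}(\mathcal{L})$, so that term equals $\lambda^{L_i}(1-\lambda)$ for both $\mu_{-i}$ and $\tilde\mu_{-i}$ and cancels in the difference. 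For $k\ge 2$, Lemma \ref{LemmaMonotonicity} gives $r_\mu(k)\le\lambda^k$, and the hypothesis $\lambda^2\le 1/2$ forces $r_\mu(k)\le\lambda^2\le 1/2$. On the interval $[0,1/2]$ the map $x\mapsto x^{L_i}(1-x)$ is strictly increasing, since its derivative $x^{L_i-1}\bigl(L_i-(L_i+1)x\bigr)$ is positive on $(0,1/2)$ (because $L_i/(L_i+1)\ge 1/2$ for $L_i\ge 1$). Since $\mu_{-i}<_{\rm st}\tilde\mu_{-i}$, Lemma \ref{LemmaMonotonicity} also gives $r_{\mu_{-i}}(k)>r_{\tilde\mu_{-i}}(k)$ for every $k\ge 2$, with both values lying in $[0,1/2]$; hence each such term of the series is strictly larger for $\mu_{-i}$, and summing over $k\ge 2$ (the series converges, being dominated by $\sum_k\lambda^{L_i k}$) yields $V(L_i,\mu_{-i})>V(L_i,\tilde\mu_{-i})$ for $1\le L_i\le L_{\max}-1$.

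For the ordering of best responses, write $\theta:=c_s/c$ and suppose for contradiction that $\tilde L_i>L_i$; then $L_i<L_{\max}$, so $1\le\lfloor L_i\rfloor\le L_{\max}-1$. Applying Lemma \ref{LemmaBRMarginalValueInfo} to $L_i\in\text{BR}(\mu_{-i})$ — which gives $V(L_i,\mu_{-i})\le\theta$ if $L_i$ is an integer and $V(\lfloor L_i\rfloor,\mu_{-i})=\theta$ otherwise — yields $V(\lfloor L_i\rfloor,\mu_{-i})\le\theta$ in either case. Applying the same lemma to $\tilde L_i\in\text{BR}(\tilde\mu_{-i})$ produces an integer $m$ with $\lfloor L_i\rfloor\le m\le L_{\max}-1$ and $V(m,\tilde\mu_{-i})\ge\theta$: take $m=\tilde L_i-1$ if $\tilde L_i$ is an integer and $m=\lfloor\tilde L_i\rfloor$ otherwise, the inequality $m\ge\lfloor L_i\rfloor$ following from $\tilde L_i>L_i$. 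Because $V(\cdot,\tilde\mu_{-i})$ is strictly decreasing in its integer argument, $V(\lfloor L_i\rfloor,\tilde\mu_{-i})\ge V(m,\tilde\mu_{-i})\ge\theta$. But the first part of the lemma, used at $\lfloor L_i\rfloor$, gives $V(\lfloor L_i\rfloor,\tilde\mu_{-i})<V(\lfloor L_i\rfloor,\mu_{-i})\le\theta$ — a contradiction. Hence $\tilde L_i\le L_i$.

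The only delicate step is the first one, and specifically the role of the hypothesis $\lambda^2\le 1/2$: without it the tail values $r_\mu(k)$ for $k\ge 2$ could exceed $1/2$ and fall on the decreasing branch of $x\mapsto x^{L_i}(1-x)$, where \emph{lowering} $r_\mu(k)$ increases the term, so the term-by-term comparison — and with it the global monotonicity of $V$ — can fail (which is precisely why uniqueness is claimed only in this regime). Everything else is bookkeeping: the cancellation of the $k=1$ term via $r_\mu(1)=\lambda$, the uniform bound $r_\mu(k)\le\lambda^k$ from Lemma \ref{LemmaMonotonicity}, and the translation of best responses through Lemma \ref{LemmaBRMarginalValueInfo}.
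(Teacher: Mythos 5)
Your proof is correct and follows essentially the same route as the paper: the first assertion is established by the identical term-by-term comparison using the monotonicity of $g(x)=x^{L_i}(1-x)$ on $[0,1/2]$ together with $r_\mu(k)\le\lambda^k\le 1/2$ for $k\ge 2$, and the second assertion uses the same combination of Lemma \ref{LemmaBRMarginalValueInfo} with the strict monotonicity of $V$ (you argue by contradiction where the paper does a direct case split, but the ingredients are the same).
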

\begin{proof}
See the proof in Appendix \ref{ProofLemmaMonUnique}.
\end{proof}
\begin{remark}
The key ingredient in the proof of Lemma \ref{LemmaMonUnique} is to show that $r_{\mu_{-i}}(2) \le L_i/(L_i+1)$ for any $L_i$ and $\mu_{-i}$. By Lemma \ref{LemmaMonotonicity}, $r_{\mu_{-i}}(2) \le \lambda^2$. Hence, if $\lambda^2 \le 1/2$, then $r_{\mu_{-i}}(2) \le 1/2$. Meanwhile, $L_i/(L_i+1) \ge 1/2 $ for any integer $L_i$. Therefore, if $\lambda^2 \le 1/2$, then $r_{\mu_{-i}}(2) \le L_i/(L_i+1)$ for any $L_i$ and $\mu_{-i}$. The same argument is used in the next subsection.
\end{remark}

Lemma \ref{LemmaMonUnique} implies that for $\lambda^2 \le 1/2 $, a customer tends to sample fewer queues when all the other customers sample more queues. This is an instance of {\it avoid the crowd} behavior \citep{Haviv09}, leading to uniqueness of the Nash equilibrium.
\begin{theorem}
If $\lambda^2 \le \frac{1}{2}$, the supermarket game has a unique Nash equilibrium in the mean field model.
\end{theorem}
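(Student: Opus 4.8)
The plan is to identify Nash equilibria with fixed points of $\text{BR}$, to note that every such fixed point is parametrized by a real number $L^\star\in[1,L_{\max}]$, and then to derive a contradiction from the existence of two distinct equilibria by a single application of Lemma~\ref{LemmaMonUnique}.

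First I would observe that, directly from the definition of a mixed strategy Nash equilibrium and of $\text{BR}$, a measure $\mu^\star\in\mathcal P(\mathcal L)$ is a Nash equilibrium if and only if $\mu^\star\in\text{BR}(\mu^\star)$. By Lemma~\ref{LemmaBestResponse} any member of $\text{BR}(\mu^\star)$ is concentrated on one integer or on two consecutive integers; in particular $\mu^\star$ itself is, so by the remark following Lemma~\ref{LemmaBestResponse} it is identified with a unique real number $L^\star\in[1,L_{\max}]$. I would then check the elementary fact that this parametrization is order preserving for $\le_{\rm st}$: if two measures of the restricted form are identified with reals $L<L'$, then comparing the tail sums $\sum_{j\ge l}\mu(j)$ (which equal $1$ for $l\le\lfloor L\rfloor$, equal $L-\lfloor L\rfloor$ for $l=\lfloor L\rfloor+1$, and equal $0$ otherwise) shows $\mu<_{\rm st}\mu'$; conversely distinct reals give distinct measures.

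Now suppose toward a contradiction that the game has two distinct Nash equilibria $\mu^\star$ and $\tilde\mu^\star$ in the mean field model, corresponding to reals $L^\star$ and $\tilde L^\star$. Since the measures differ, so do the reals, and we may assume $L^\star<\tilde L^\star$, hence $\mu^\star<_{\rm st}\tilde\mu^\star$. Because $\lambda^2\le 1/2$, Lemma~\ref{LemmaMonUnique} applies with $\mu_{-i}=\mu^\star$ and $\tilde\mu_{-i}=\tilde\mu^\star$: from $L^\star\in\text{BR}(\mu^\star)$ and $\tilde L^\star\in\text{BR}(\tilde\mu^\star)$ it yields $\tilde L^\star\le L^\star$, contradicting $L^\star<\tilde L^\star$. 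Therefore the Nash equilibrium is unique.

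The main analytic content — the global monotonicity of the marginal value of sampling, resting on the bound $r_{\mu_{-i}}(2)\le\lambda^2\le 1/2\le L_i/(L_i+1)$ — is already contained in Lemma~\ref{LemmaMonUnique}, so I do not expect a serious obstacle. The only care needed is in the bookkeeping: that Nash equilibria are exactly the fixed points of $\text{BR}$, that such fixed points are measures of the restricted form (so the real-number identification applies to them, not just to generic best responses), and that this identification is strictly monotone for $\le_{\rm st}$, so that two distinct equilibria are comparable and Lemma~\ref{LemmaMonUnique} can be invoked.
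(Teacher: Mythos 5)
Your proposal is correct and follows essentially the same route as the paper: assume two distinct equilibria, order them as $L^\star < \tilde L^\star$, and apply Lemma~\ref{LemmaMonUnique} to obtain the reverse inequality and hence a contradiction. The extra bookkeeping you supply (that equilibria are fixed points of $\text{BR}$, hence of the restricted form, and that the real-number identification is strictly monotone for $\le_{\rm st}$) is implicit in the paper's one-line argument and is a legitimate filling-in rather than a deviation.
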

\begin{proof}
A Nash equilibrium exists by Theorem \ref{TheoremExistenceNE}. Let $L^\star, \tilde{L}^\star \in [1, L_{\max}]$ denote two possible Nash equilibria; we show that $L^\star = \tilde{L}^\star$.

Without loss of generality, suppose $L^\star < \tilde{L}^\star$; by Lemma \ref{LemmaMonUnique}, $L^\star \ge \tilde{L}^\star $, which is a contradiction and concludes the proof.
\end{proof}

However, the marginal value of sampling $V(L_i,L_{-i})$ is not always increasing in $L_{-i} \in [1, L_{\max}] $ ($L_{-i}$ refers to a probability distribution here) for all $\lambda <1$. By numerical results, we find that when $\lambda=0.99$ and $L_i=5$, $V(L_i,L_{-i})$ is increasing with $L_{-i}$ for $11 \le L_{-i} \le 16$, as shown in Fig.~\ref{FigureValueofInformationCounterExample}. This implies that sometimes a customer tends to sample more queues when all the others sample more queues. This {\it follow the crowd} behavior can lead to multiple Nash equilibria. Such a scenario is described in the next subsection.

\begin{figure}
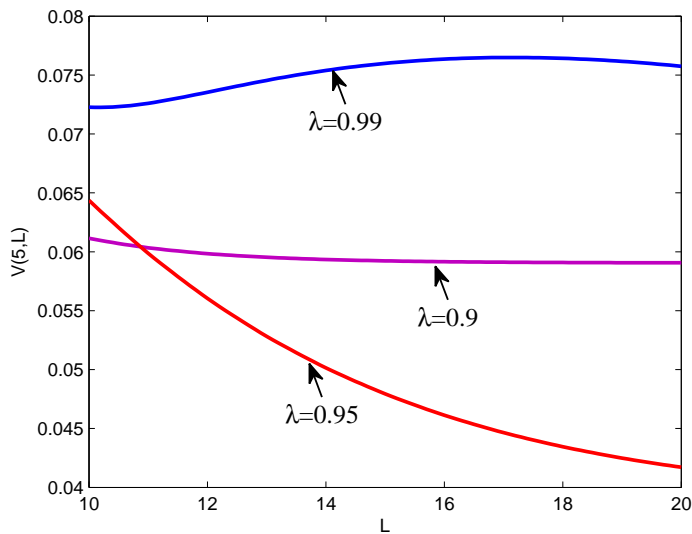

\centering
\post{ValueofInformationCounterExample}{4.0in}
\centering
\caption{The marginal value of sampling $V(5,L)$ with increasing $L$ from $10$ to $20$ and varying arrival rate $\lambda$.}
\label{FigureValueofInformationCounterExample}
\end{figure}

\subsection{Computation of Nash Equilibrium and Local Monotonicity Condition} \label{SectionComputationNE}
In this subsection, we show how to find a Nash equilibrium and establish the uniqueness of Nash equilibrium if a local monotonicity condition is satisfied. Then, a specific example where multiple Nash equilibria exist is constructed.

The following lemma determines a Nash equilibrium using the marginal value of sampling.
\begin{lemma}\label{LemmaCharacterizationNE}
Suppose $L^\star$ is determined by the following procedure:

\noindent (a) If $V(L_{\max}-1, L_{\max}) \ge  \frac{c_s}{c}$, set $L^\star = L_{\max}$. Otherwise

\noindent (b) Let $\hat{L}:= \min \{L \in \mathcal{L}: V(L, L+1) < \frac{c_s}{c} \}$,

\indent (b1) if $V(\hat{L},\hat{L}) \le \frac{c_s}{c}$, set $L^\star=\hat{L}$.

\indent (b2) if $V(\hat{L},\hat{L}) > \frac{c_s}{c} $, set $L^\star=\hat{L}+q^\star$, where $0<q^\star <1$ is the solution of $V(\hat{L},\hat{L}+q^\star)=\frac{c_s}{c}$.
Then, $L^\star$ is a Nash equilibrium for the supermarket game in the mean field model.
\end{lemma}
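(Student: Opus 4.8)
The plan is a direct case analysis: in each of the three branches of the procedure I will show that the constructed number $L^\star\in[1,L_{\max}]$ is a fixed point of $\text{BR}$, i.e. $L^\star\in\text{BR}(L^\star)$, which makes it a Nash equilibrium. Here $L^\star$ (and any strategy appearing in a best-response set) is identified with a probability measure on $\mathcal{L}$ via the remark after Lemma~\ref{LemmaBestResponse}, and the verification will be carried out entirely through the marginal-value characterization of $\text{BR}$ in Lemma~\ref{LemmaBRMarginalValueInfo}, using the conventions $V(0,\cdot)=\infty$, $V(L_{\max},\cdot)=0$, and $0\le c_s/c<\infty$.

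For branch (a), take $L^\star=L_{\max}$ and check the integer criterion of Lemma~\ref{LemmaBRMarginalValueInfo}: it requires $V(L_{\max},L_{\max})\le c_s/c\le V(L_{\max}-1,L_{\max})$, where the left inequality holds since $V(L_{\max},\cdot)=0$ and the right inequality is precisely the hypothesis of (a). For branch (b), I would first record that its defining hypothesis $V(L_{\max}-1,L_{\max})<c_s/c$ shows that $L_{\max}-1$ lies in the set $\{L:V(L,L+1)<c_s/c\}$ (which is over $L\in\{1,\dots,L_{\max}-1\}$, since $L+1$ must be a valid strategy), so this set is nonempty and $\hat L$ is well defined with $1\le\hat L\le L_{\max}-1$; moreover minimality of $\hat L$ gives $V(\hat L-1,\hat L)\ge c_s/c$ (read through the convention $V(0,\cdot)=\infty$ when $\hat L=1$), while $V(\hat L,\hat L+1)<c_s/c$. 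In branch (b1), with $L^\star=\hat L$, the integer criterion $V(\hat L,\hat L)\le c_s/c\le V(\hat L-1,\hat L)$ then holds by the hypothesis of (b1) together with the inequality just noted.

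The only branch needing more than bookkeeping is (b2). Here I would argue that $q\mapsto V(\hat L,\hat L+q)$ is continuous on $[0,1]$: the identification sends $q$ to $(1-q)\delta_{\hat L}+q\,\delta_{\hat L+1}$, which depends affinely hence continuously on $q$, and $\mu_{-i}\mapsto V(\hat L,\mu_{-i})=\bigl(C(\hat L,\mu_{-i})-C(\hat L+1,\mu_{-i})+c_s\bigr)/c$ is continuous on $\mathcal{P}(\mathcal{L})$ by Lemma~\ref{LemmaContinuity}. At $q=0$ this map equals $V(\hat L,\hat L)>c_s/c$ by the hypothesis of (b2), and at $q=1$ it equals $V(\hat L,\hat L+1)<c_s/c$ by the definition of $\hat L$, so the intermediate value theorem yields $q^\star\in(0,1)$ with $V(\hat L,\hat L+q^\star)=c_s/c$. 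Since $L^\star=\hat L+q^\star$ is non-integer with $\lfloor L^\star\rfloor=\hat L$ (and $1\le L^\star<L_{\max}$), the non-integer criterion of Lemma~\ref{LemmaBRMarginalValueInfo}, namely $V(\lfloor L^\star\rfloor,L^\star)=c_s/c$, holds, so $L^\star\in\text{BR}(L^\star)$.

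I expect the main (and only mildly delicate) obstacle to be this continuity-plus-intermediate-value argument in (b2), together with keeping the edge cases straight — in particular the case $\hat L=1$ in (b)/(b1), where $V(\hat L-1,\hat L)$ is the value $\infty$ by convention, and the observation that the \emph{strict} inequalities at $q=0$ and $q=1$ are exactly what place $q^\star$ in the open interval $(0,1)$. Uniqueness of $q^\star$ is not required for the statement (the procedure need only exhibit one Nash equilibrium), so I would not pursue it; any solution of $V(\hat L,\hat L+q)=c_s/c$ in $(0,1)$ serves.
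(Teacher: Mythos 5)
Your proof is correct and takes essentially the same route as the paper's: each branch is verified by checking the integer or non-integer best-response criterion of Lemma~\ref{LemmaBRMarginalValueInfo} at $L^\star$ itself, with the minimality of $\hat L$ supplying $V(\hat L-1,\hat L)\ge c_s/c$ in (b1) and continuity plus the intermediate value theorem supplying $q^\star$ in (b2). Your treatment is slightly more explicit than the paper's (the continuity argument via Lemma~\ref{LemmaContinuity} and the $\hat L=1$ edge case), but the substance is identical.
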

\begin{proof}
See the proof in Appendix \ref{ProofLemmaCharacterizationNE}.
\end{proof}

Next, we introduce a local monotonicity condition and show the uniqueness of Nash equilibrium for all values of $c$ and $c_s$ if and only if the local monotonicity condition is satisfied.
\begin{definition}
Given $0 <\lambda<1$ and any integer $L_{\max} \ge 1$, the local monotonicity condition is satisfied for $(\lambda, L_{\max})$ if $V(L,L+q)$ is strictly decreasing over $ 0\le q \le 1$ for each integer $L$ with $1 \le L \le L_{\max}-1$.
\end{definition}
\begin{theorem}\label{TheoremUniqNE}
Given $0 <\lambda<1$ and any integer $L_{\max} \ge 1$, the supermarket game in the mean field model has a unique Nash equilibrium for all values of $c$ and $c_s$ , if and only if the local monotonicity condition is satisfied for $(\lambda, L_{\max})$.
\end{theorem}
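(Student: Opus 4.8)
The plan is to encode, simultaneously for all ratios $t:=c_s/c\in[0,\infty)$, the set of Nash equilibria as the level sets of a single continuous planar curve, and then observe that uniqueness for every $t$ is equivalent to that curve being strictly monotone, which is exactly the local monotonicity condition. To set this up, I use the Remark after Lemma~\ref{LemmaBestResponse} to identify every candidate Nash equilibrium with a real number $L\in[1,L_{\max}]$, and Lemma~\ref{LemmaBRMarginalValueInfo} (with the conventions $V(0,\cdot)=\infty$, $V(L_{\max},\cdot)=0$) to say: a real $L^\star$ is a Nash equilibrium for ratio $t$ iff either $L^\star$ is an integer with $t\in[V(L^\star,L^\star),\,V(L^\star-1,L^\star)]$, or $L^\star=L_0+q$ is non-integer with $L_0=\lfloor L^\star\rfloor$ and $V(L_0,L^\star)=t$. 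I then build a curve $\gamma$ (parametrized by $s\in[0,T]$) by concatenating, in the order $I_1,\Gamma_1,I_2,\Gamma_2,\dots,\Gamma_{L_{\max}-1},I_{L_{\max}}$, the vertical segments $I_L:=\{L\}\times[V(L,L),V(L-1,L)]$ and the response curves $\Gamma_L:=\{(L+q,\,V(L,L+q)):q\in[0,1]\}$. Since $V(L,L+1)$ is both the bottom height of $\Gamma_L$ and the top height of $I_{L+1}$, and $V(L,L)$ is both the bottom of $I_L$ and the start of $\Gamma_L$, the curve $\gamma$ is continuous; its first coordinate is non-decreasing along $\gamma$, running from $1$ to $L_{\max}$ (constant on each $I_L$, strictly increasing on each $\Gamma_L$). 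The translation is then immediate: for each $t\ge 0$ the Nash equilibria at ratio $t$ are exactly the first coordinates of the points of $\gamma$ at height $t$, and two such points have distinct first coordinates, because two points of $\gamma$ sharing a first coordinate must lie on a common $I_L$, on which the height is strictly decreasing.

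Next I analyze the height function $h$ of $\gamma$. On each $I_L$, $h$ drops strictly from $V(L-1,L)$ to $V(L,L)$, since $\mathbb{E}[W(\cdot,\mu_{-i})]$ is strictly convex on $\mathbb{R}$ and hence $V(\cdot,\mu_{-i})$ is strictly decreasing in its first argument (this also yields $V(L_i,\cdot)>0$ for $L_i<L_{\max}$, used below). On each $\Gamma_L$, $h$ is $q\mapsto V(L,L+q)$, which is strictly decreasing iff the local monotonicity condition holds for that $L$. Because the pieces are glued with matching endpoint heights, $h$ is strictly decreasing on all of $\gamma$ iff it is strictly decreasing on every piece, i.e.\ iff local monotonicity holds for $(\lambda,L_{\max})$. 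If it does, $h$ is injective, so every $t$ is attained by at most one point of $\gamma$; combined with existence (Theorem~\ref{TheoremExistenceNE}) this gives a unique Nash equilibrium for every $c$ and $c_s$ — the ``if'' direction.

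For the ``only if'' direction I argue contrapositively. Suppose local monotonicity fails, so $h$ is not strictly decreasing: there are $s_1<s_2$ in $[0,T]$ with $h(s_1)\le h(s_2)$, both heights finite and positive (on $\gamma$ the height is $+\infty$ only at the start of $I_1$ and is $0$ only at the endpoint $s=T$, the bottom of $I_{L_{\max}}$). If $h(s_1)=h(s_2)$, set $t:=h(s_1)$; it is attained at the two points $s_1,s_2$. Otherwise $h(s_1)<h(s_2)$, and since $h$ is continuous with $h(s_2)>h(s_1)>0=h(T)$, the intermediate value theorem gives $s_3\in(s_2,T)$ with $h(s_3)=h(s_1)=:t$. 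In either case $t>0$ is attained at two points of $\gamma$ with distinct first coordinates, so taking $c=1$, $c_s=t$ produces two distinct Nash equilibria, contradicting uniqueness. Hence uniqueness for all $c,c_s$ forces local monotonicity, completing the equivalence.

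I expect the ``only if'' direction to be the main obstacle: converting the one–variable failure of monotonicity of a single $V(L,L+\cdot)$ into a genuine ratio $t>0$ admitting two \emph{distinct} real Nash equilibria. This is precisely what forces the global curve $\gamma$ — continuity across the integer boundaries, the exact location of the extended heights $0$ and $+\infty$, and the monotonicity of the first coordinate of $\gamma$ used to separate the two level-set points — rather than a purely local argument on one interval; a secondary, more clerical, difficulty is keeping the conventions $V(0,\cdot)=\infty$ and $V(L_{\max},\cdot)=0$ consistent so that $I_1$, $I_{L_{\max}}$, and the endpoints of $\gamma$ behave as claimed.
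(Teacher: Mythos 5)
Your proposal is correct, and it reaches the equivalence by a genuinely different organization than the paper, although both arguments ultimately rest on the same characterization of best responses via the marginal value of sampling (Lemma~\ref{LemmaBRMarginalValueInfo}) together with the strict decrease of $V(\cdot,\mu_{-i})$ in its first argument. The paper proves sufficiency by locating $\hat L=\min\{L: V(L,L+1)<c_s/c\}$, showing $V(L,L+1)$ is strictly decreasing in $L$ under local monotonicity, and running a case analysis ($\hat L=L_{\max}$; $L^\star$ integer; $L^\star$ non-integer) to force every equilibrium to coincide with the one produced by Lemma~\ref{LemmaCharacterizationNE}; you instead concatenate the vertical intervals $\{L\}\times[V(L,L),V(L-1,L)]$ with the arcs $(L+q,V(L,L+q))$ into one continuous curve whose level sets are exactly the equilibria at ratio $t$, so that uniqueness for every $t$ becomes injectivity of the height, which (by matching endpoints) holds iff each arc is strictly decreasing. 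For necessity the paper splits the failure of monotonicity into two cases (some $V(L,L+\cdot)$ strictly increasing, or non-injective) using the fact that a continuous injective function on an interval is monotone, and hand-picks $c_s/c$ in each case; your intermediate-value argument on the global curve, anchored by $h=0$ at the bottom of $I_{L_{\max}}$ (the convention $V(L_{\max},\cdot)=0$) and $h=+\infty$ only at the top of $I_1$, handles both cases uniformly and dispenses with that dichotomy. The trade-off: the paper's route is more constructive (it exhibits the equilibrium and the offending ratio explicitly), while yours is shorter on case analysis but requires the bookkeeping you flag — continuity across integer junctions, monotonicity of the first coordinate to certify that two level-set points give two \emph{distinct} equilibria, and the extended-real convention making $I_1$ a ray rather than a compact segment (harmless, since only finite positive $t$ matter). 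I see no gap.
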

\begin{proof}
See the proof in Appendix \ref{ProofTheoremUniqNE}.
\end{proof}

The following numerical results, depicted in Fig.~\ref{FigureValueofInformationGeneralL}, show that when $\lambda=0.99$, $V(L,L+q)$ is indeed strictly decreasing with respect to $ 0\le q \le 1$ for $L=1, \ldots, 9$. For $L \ge 10$,
\begin{align}
 r_{L+q} (2) \le  r_L (2) \le r_{10} (2) =0.99^{(10^2-1)/(10-1)} \le 10/11 \le L/(L+1), \nonumber
\end{align}
which implies that $V(L,L+q)$ is strictly decreasing over $0 \le q \le 1$, in view of the proof for Lemma \ref{LemmaMonUnique}. Therefore, for $\lambda=0.99$ and any integer $L_{\max} \ge 1$, there exists a unique Nash equilibrium for all values of $c$ and $c_s$.
\begin{figure}
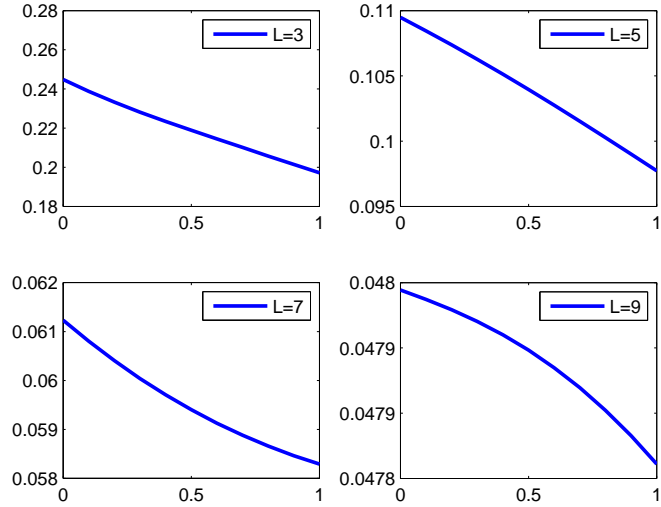

\centering
\post{ValueofInformationLambdaNintyNine}{4.0in}
\centering
\caption{For $\lambda=0.99$, the marginal value of sampling $V(L,L+q)$ with increasing $q$ from $0$ to $1$ and varying $L$.}
\label{FigureValueofInformationGeneralL}
\end{figure}
However, when $\lambda=0.999$ and $L_{\max}=25$,  $V(L,L+q)$ is strictly increasing in $q$ for $L=18,\ldots, 24$. Therefore, multiple Nash equilibria exist if $c_s/c=0.0148$, as depicted in Fig.~\ref{FigureMutipleNE}. We see that $L=19,\ldots, 24$ are pure strategy Nash equilibria and mixed strategy Nash equilibria exist between each two consecutive pure strategy Nash equilibria.
\begin{figure}
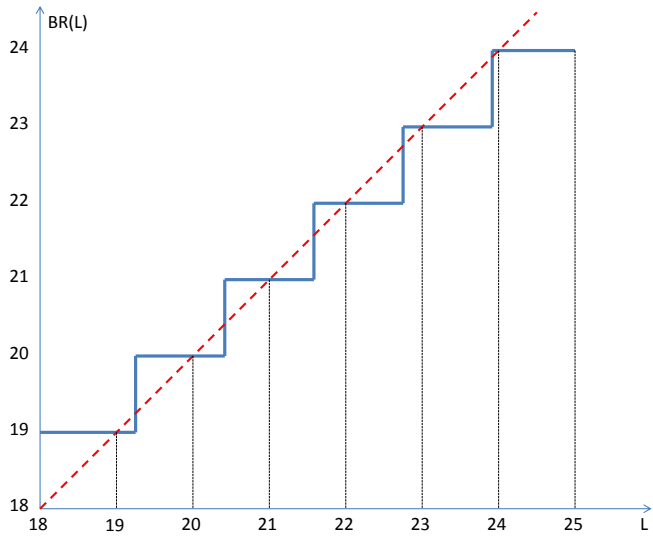

\centering
\post{MultipleNE}{4.0in}
\centering
\caption{The best response with respect to $L$ for $\lambda=0.999$ and $\frac{c_s}{c}=0.0148$.}
\label{FigureMutipleNE}
\end{figure}

\subsection{Special Case: Two Choices}
In this subsection, we consider a simple special case where all the customers only have two choices, $1$ or $2$.

Fixing a customer $i$, if $L_{-i} =1+q$, i.e., all the other customers choose one queue to sample with probability $1-q$ and two queues with probability $q$, then the stationary distribution in the mean field model can be derived as
\begin{align}
r_q(0)=1, r_q(k)= \lambda u_q(r_q(k-1)), k \ge 1, \nonumber
\end{align}
where $u_q(x)=x^2 q + x(1-q)$. Note that this stationary distribution has also been derived in Section 4.4.1 of\citep{Mitzenmacher96}.

Then, the total expected cost of customer $i$ choosing $1+p$ is given by
\begin{align}
 C(1+p,1+q) =(1-p) ( c \sum_{k=0}^\infty r_q(k) + c_s) + p(  c \sum_{k=0}^\infty r^2_q(k) + 2 c_s ). \nonumber
\end{align}
The marginal value of sampling  of customer $i$ at $1$ is given by
\begin{align}
V(1, 1+q)= \sum_{k=0}^\infty r_q(k)(1-r_q(k)). \nonumber
\end{align}
It follows that the best response under $L_{-i} =1+q$ is given by
\begin{align}
\text{BR}(1+q) = \left\{
\begin{array}{rl}
1 & \text{if } c_s/c > V(1, 1+q) ,\\
2 & \text{if } c_s/c < V(1, 1+q), \\
\left[1,2\right] & \text{if } c_s/c = V(1, 1+q). \label{EqBestResonpseTwoChoice}
\end{array} \right.
\end{align}

By numerical results depicted in Fig.~\ref{FigureValueofInformaiton}, we find that $V(1, 1+q)$ is strictly decreasing in $q$ for a sequence of $\lambda$ up to $0.999$. This is strong numerical evidence that the local monotonicity condition is satisfied for all $\lambda <1$. Therefore, by Theorem \ref{TheoremUniqNE} and Lemma \ref{LemmaBRMarginalValueInfo}, we {\it conjecture} that: (i) if $V(1,1) \le c_s/c$, then $L^\star=1$ is the unique Nash equilibrium; (ii) if $ V(1,2) \ge c_s/c$, then $L^\star=2$ is the unique Nash equilibrium; (iii) otherwise, there exists a $p^\star \in (0,1) $ such that $ V(1, 1+ p^\star) =c_s/c $, and thus $L^\star=1+p^\star$ is the unique mixed strategy Nash equilibrium.
\begin{figure}
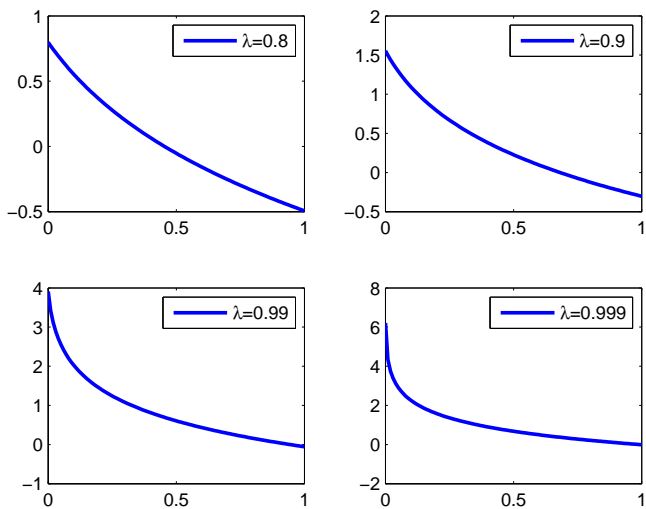

\centering
\post{ValueofInformationTwoChoices_new}{4.0in}
\caption{The natural logarithm of $V(1,1+q)$ with increasing $q$ from $0$ to $1$, for varying arrival rate $\lambda$.}
\label{FigureValueofInformaiton}
\end{figure}

\subsection{Externality and Social Optimum}
The action of sampling more queues by some customers has an effect on the waiting time of others. This effect is called the {\it externality} associated with the action and the externality is positive if the action reduces the mean waiting time of the other customers. In this subsection, the externality in the mean field model is analyzed. It is not clear whether the externality is positive at first sight. On one hand, choosing a large number of queues to sample helps a customer find a less loaded queue and hence reduces future arrivals' opportunity to find lightly loaded queues. On the other hand, it also leads to a well balanced system and reduces the average waiting time.

The following corollary of Lemma \ref{LemmaMonotonicity} implies that the action of sampling more queues by some customers has a positive externality on the other customers in the mean field model. To see it, suppose in system 1, all the customers adopt a strategy $\mu_1$; while in system 2, a fraction $0<p\le1$ of them samples more queues, i.e., adopts a new strategy $\mu_3$ with $\mu_1 <_{\rm st} \mu_3$ and all the others still adopt the strategy $\mu_1$. For system $2$, it is equivalent to assume that all the customers adopt a strategy $\mu_2$ with $\mu_2 = p \mu_3+(1-p) \mu_1$. It follows that $\mu_1 <_{\rm st} \mu_2$. By Corollary \ref{CorollaryPositiveExternality}, system 2 has smaller mean waiting time.
\begin{corollary}\label{CorollaryPositiveExternality}
If $\mu_1, \mu_2 \in \mathcal{P}(\mathcal{L})$ with $\mu_1 <_{\rm st} \mu_2$, then for all $L \in \mathcal{L}$, $\mathbb{E} [W (L, \mu_1) ] > \mathbb{E} [W (L, \mu_2) ] $.
\end{corollary}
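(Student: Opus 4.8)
The plan is to combine the closed-form expression for the expected waiting time derived just above the statement, namely
\[
\mathbb{E}[W(L,\mu)] = \sum_{k=0}^\infty r_\mu^L(k),
\]
with the monotonicity of the mean field equilibrium tail probabilities established in Lemma~\ref{LemmaMonotonicity}. Since $\mu_1 <_{\rm st} \mu_2$, that lemma gives $r_{\mu_1}(k) \ge r_{\mu_2}(k)$ for every $k \in \mathbb{N}$ and, more strongly, $r_{\mu_1}(k) > r_{\mu_2}(k)$ for every $k \ge 2$; it also gives $0 \le r_\mu(k) \le \lambda^k$, so all the series involved converge absolutely and the manipulations below are justified.

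First I would fix $L \in \mathcal{L}$, so $L \ge 1$, and observe that $t \mapsto t^L$ is nondecreasing on $[0,1]$. Applying this termwise to $r_{\mu_1}(k) \ge r_{\mu_2}(k)$ yields $r_{\mu_1}^L(k) \ge r_{\mu_2}^L(k)$ for every $k$, and summing over $k \ge 0$ gives $\mathbb{E}[W(L,\mu_1)] \ge \mathbb{E}[W(L,\mu_2)]$. To upgrade this to a strict inequality I would isolate the $k=2$ term: by the second assertion of Lemma~\ref{LemmaMonotonicity} we have $r_{\mu_1}(2) > r_{\mu_2}(2) \ge 0$, and since $t \mapsto t^L$ is strictly increasing on $[0,\infty)$ for $L \ge 1$, this forces $r_{\mu_1}^L(2) > r_{\mu_2}^L(2)$. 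Combining the strict inequality for the $k=2$ summand with the weak inequalities for all the other summands gives $\mathbb{E}[W(L,\mu_1)] > \mathbb{E}[W(L,\mu_2)]$, which is the claim.

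There is essentially no hard step here; the corollary is an immediate consequence of Lemma~\ref{LemmaMonotonicity}. The only point that needs a little care is the strictness, for which one must exhibit a single index at which the comparison of the summands is strict; the second assertion of Lemma~\ref{LemmaMonotonicity} supplies this at every $k \ge 2$, so choosing $k=2$ (or any $k \ge 2$) suffices, and one should note that such a term genuinely occurs in the series since $k$ ranges over all of $\mathbb{N}$ regardless of $L$ and $L_{\max}$.
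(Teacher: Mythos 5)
Your proof is correct and follows essentially the same route as the paper's: both deduce $r_{\mu_1}(k) > r_{\mu_2}(k)$ for $k \ge 2$ from Lemma \ref{LemmaMonotonicity} and then compare the series $\sum_{k \ge 0} r_\mu^L(k)$ term by term. You simply spell out the strictness step (isolating the $k=2$ term) that the paper leaves implicit.
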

\begin{proof}
Because $\mu_1 <_{\rm st} \mu_2$, by Lemma \ref{LemmaMonotonicity}, $r_{\mu_1 }(k) > r_{\mu_2 }(k), \forall k \ge 2$. Hence, the conclusion follows by invoking the definition of $\mathbb{E} [W (L, \mu)]$.
\end{proof}
\begin{remark}
However, the action of sampling more queues by some customers can have a negative externality for finite $N$. See Section \ref{SectionExternalityFinite}.
\end{remark}
Next, we analyze the social optimum, i.e, the minimum of  total cost of all the customers. Suppose all the customers use the mixed strategy $\mu$; the expected total cost per unit time $C_{\rm sum} (\mu) $ is given by
\begin{align}
C_{\rm sum} ( \mu) =\lambda C(\mu,\mu)=  \lambda \sum_{L=1}^{L_{\max} } \mu(L) C(L,\mu) =  \lambda \sum_{L=1}^{L_{\max} } \mu(L) (  \sum_{k=0}^\infty r_{\mu}^L (k) +c_s L ). \nonumber
\end{align}
Therefore, a minimizer of the following minimization problem:
\begin{align}
\min_{\mu \in \mathcal{P}(L) }  \sum_{L=1}^{L_{\max} } \mu(L) (  \sum_{k=0}^\infty r_{\mu}^L (k) +c_s L ), \nonumber
\end{align}
is a social optimum.
\begin{lemma} \label{LemmaSocialOptimum}
The social optimum $\mu^\star_{\rm soc}$ is a probability measure concentrated on either an integer or two consecutive integers.
\end{lemma}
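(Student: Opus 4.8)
The plan is to show that every social optimum $\mu_{\rm soc}^\star$ actually coincides with a two‑point distribution of the asserted form. Using $C_{\rm sum}(\mu)=\lambda\sum_{L}\mu(L)C(L,\mu)=\lambda\,\mathbb{E}_{\mu}[C(\mathbf L,\mu)]$ with $\mathbf L\sim\mu$, fix a social optimum $\mu^\star:=\mu_{\rm soc}^\star$, put $m:=\mathbb{E}_{\mu^\star}[\mathbf L]$, and let $\nu$ be the distribution with $\mathbb{E}_\nu[\mathbf L]=m$ concentrated on $\lfloor m\rfloor$ and $\lfloor m\rfloor+1$ (with $\nu=\delta_m$ if $m\in\mathbb{N}$); $\nu$ is manifestly of the claimed form. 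I would then prove $C_{\rm sum}(\nu)\le C_{\rm sum}(\mu^\star)$, with equality only if $\mu^\star=\nu$. Optimality of $\mu^\star$ then forces $\mu^\star=\nu$, which is the statement.

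\textbf{Two ingredients.} First, an elementary mean‑preserving concentration lemma: if $g:\mathbb{R}\to\mathbb{R}$ is strictly convex and $\mathbf L$ is integer‑valued with $\mathbb{E}[\mathbf L]=m$, then $\mathbb{E}[g(\mathbf L)]\ge\mathbb{E}_\nu[g(\mathbf L)]$, with equality iff $\mathscr{L}(\mathbf L)=\nu$. Proof idea: if $m\in\mathbb{N}$ this is Jensen's inequality; otherwise let $\ell$ be the affine function agreeing with $g$ at $\lfloor m\rfloor$ and $\lfloor m\rfloor+1$, observe that strict convexity gives $g(j)\ge\ell(j)$ for every integer $j$, with equality only at $j\in\{\lfloor m\rfloor,\lfloor m\rfloor+1\}$ (no integer lies strictly between them), and take expectations, using that $\ell$ is affine and that $\nu$ is supported where $g=\ell$. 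Applying this with $g(l)=x^l$, which is strictly convex in $l$ for $0<x<1$, yields $u_\nu(x)\le u_{\mu^\star}(x)$ for all $x\in[0,1]$, strictly on $(0,1)$ unless $\mu^\star=\nu$. Second, I would rerun the induction from the proof of Lemma~\ref{LemmaMonotonicity} — which uses only that $u_\nu$ is nondecreasing and $u_\nu\le u_{\mu^\star}$ pointwise, not stochastic dominance of the strategies — to get $r_\nu(k)\le r_{\mu^\star}(k)$ for all $k$, hence $\mathbb{E}[W(L,\nu)]=\sum_k r_\nu(k)^L\le\sum_k r_{\mu^\star}(k)^L=\mathbb{E}[W(L,\mu^\star)]$ and therefore $C(L,\nu)\le C(L,\mu^\star)$ for every $L\in\mathcal{L}$.

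\textbf{Putting it together and the main obstacle.} Combining the ingredients, $C_{\rm sum}(\mu^\star)=\lambda\,\mathbb{E}_{\mu^\star}[C(\mathbf L,\mu^\star)]\ge\lambda\,\mathbb{E}_\nu[C(\mathbf L,\mu^\star)]\ge\lambda\,\mathbb{E}_\nu[C(\mathbf L,\nu)]=C_{\rm sum}(\nu)$: the first inequality is the concentration lemma applied to $L\mapsto C(L,\mu^\star)$, which is strictly convex in $L\in\mathbb{R}$, and is strict unless $\mu^\star=\nu$; the second is $C(\cdot,\nu)\le C(\cdot,\mu^\star)$ on $\mathrm{supp}(\nu)$. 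Since $\mu^\star$ is a social optimum, all inequalities are equalities, and the equality case of the concentration lemma gives $\mu^\star=\nu$. I expect the main obstacle to be the queue‑length domination $r_\nu\le r_{\mu^\star}$: because $\nu$ and $\mu^\star$ share the same mean they are in general \emph{not} stochastically ordered, so Lemma~\ref{LemmaMonotonicity} cannot be quoted directly, and one must first extract $u_\nu\le u_{\mu^\star}$ from convexity of $x\mapsto x^l$ in $l$ before invoking the monotonicity argument. Some care is also needed with the strictness bookkeeping so that the conclusion applies to \emph{every} social optimum, and with the degenerate case $m\in\mathbb{N}$ (where $\nu=\delta_m$ and plain Jensen suffices).
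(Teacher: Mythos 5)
Your proof is correct, but it executes the underlying idea globally where the paper executes it locally, so it is worth comparing the two. The paper's proof is a three\nobreakdash-point perturbation: if the support of $\mu$ is not contained in two consecutive integers, it picks $L_1<L_2<L_3$ with $\mu(L_1),\mu(L_3)>0$, moves masses $\alpha,\beta$ from $L_1,L_3$ onto $L_2$ in a mean-preserving way, and observes that (i) $u_{\tilde\mu}\le u_{\mu}$ pointwise by three-point convexity of $l\mapsto x^l$, whence $r_{\tilde\mu}(k)<r_{\mu}(k)$ for $k\ge2$, and (ii) $\sum_L\tilde\mu(L)C(L,\mu)<\sum_L\mu(L)C(L,\mu)$ by strict convexity of $C(\cdot,\mu)$; chaining the two shows $\mu$ is not optimal. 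You instead contract all the way to the extremal two-point law $\nu$ on $\{\lfloor m\rfloor,\lfloor m\rfloor+1\}$ with the same mean $m$, which requires the discrete mean-preserving concentration lemma (your secant-line argument, which is correct: no integer lies strictly between $\lfloor m\rfloor$ and $\lfloor m\rfloor+1$, so $g\ge\ell$ on all of $\mathbb{N}$ with equality exactly on the support of $\nu$) in place of plain three-point convexity, together with the correct observation that the induction in Lemma~\ref{LemmaMonotonicity} only uses $u_\nu\le u_{\mu^\star}$ and monotonicity of $u_\nu$, not stochastic dominance of the strategies. The decomposition of the cost difference through the hybrid term $\sum_L\nu(L)C(L,\mu^\star)$ is structurally identical to the paper's. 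What your route buys is a slightly sharper structural statement -- on each fiber of measures with a fixed mean, the concentrated two-point law is the unique minimizer of $C_{\rm sum}$, so the optimum is pinned down as the concentration of itself -- at the price of the extra concentration lemma; the paper's local swap gets the stated lemma more cheaply. Your strictness bookkeeping (strictness only needed in the first inequality, via strict convexity of $C(\cdot,\mu^\star)$) and the boundary cases $m\in\mathbb{N}$ and $m=L_{\max}$ are handled correctly.
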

\begin{proof}
See the proof in Appendix \ref{ProofLemmaSocialOptimum}.
\end{proof}

\begin{theorem}
No Nash equilibrium $\mu^\star$ can strictly stochastically dominate the social optima $ \mu^\star_{\rm soc}$ in the mean field model.
\end{theorem}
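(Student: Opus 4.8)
The plan is to argue by contradiction, using the positive externality (Corollary~\ref{CorollaryPositiveExternality}) together with the defining inequalities of a Nash equilibrium and of a social optimum. Recall that the social cost satisfies $C_{\rm sum}(\mu) = \lambda C(\mu,\mu)$, so a social optimum $\mu^\star_{\rm soc}$ minimizes $\mu \mapsto C(\mu,\mu)$ over $\mathcal{P}(\mathcal{L})$, whereas a Nash equilibrium $\mu^\star$ satisfies $C(\mu^\star,\mu^\star) \le C(\mu_i, \mu^\star)$ for every $\mu_i \in \mathcal{P}(\mathcal{L})$.

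Assume for contradiction that $\mu^\star$ strictly stochastically dominates $\mu^\star_{\rm soc}$, i.e. $\mu^\star_{\rm soc} <_{\rm st} \mu^\star$. First I would compare $C(\mu^\star_{\rm soc},\mu^\star_{\rm soc})$ with $C(\mu^\star_{\rm soc},\mu^\star)$. Both expressions use the sampling distribution $\mu^\star_{\rm soc}$, so the sampling-cost parts $c_s \mathbb{E}_{\mu^\star_{\rm soc}}[L]$ agree and the difference lies only in the waiting-time terms. Applying Corollary~\ref{CorollaryPositiveExternality} with $\mu_1 = \mu^\star_{\rm soc} <_{\rm st} \mu^\star = \mu_2$ gives $\mathbb{E}[W(L,\mu^\star_{\rm soc})] > \mathbb{E}[W(L,\mu^\star)]$ for every $L \in \mathcal{L}$; averaging over $L$ drawn from $\mu^\star_{\rm soc}$ (and using $c>0$) yields $C(\mu^\star_{\rm soc},\mu^\star_{\rm soc}) > C(\mu^\star_{\rm soc},\mu^\star)$. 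Next I would invoke the Nash property with the deviation $\mu_i = \mu^\star_{\rm soc}$, giving $C(\mu^\star,\mu^\star) \le C(\mu^\star_{\rm soc},\mu^\star)$. Chaining the two, $C(\mu^\star_{\rm soc},\mu^\star_{\rm soc}) > C(\mu^\star_{\rm soc},\mu^\star) \ge C(\mu^\star,\mu^\star)$, so $C_{\rm sum}(\mu^\star_{\rm soc}) > C_{\rm sum}(\mu^\star)$, contradicting the optimality of $\mu^\star_{\rm soc}$.

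There is no serious technical obstacle here --- the argument is essentially the observation that a positive externality forces under-provision at equilibrium --- but one should be careful about strictness. The strict inequality $C(\mu^\star_{\rm soc},\mu^\star_{\rm soc}) > C(\mu^\star_{\rm soc},\mu^\star)$ needs the strict form of Corollary~\ref{CorollaryPositiveExternality}, which is precisely why the hypothesis is \emph{strict} stochastic dominance rather than $\le_{\rm st}$, and it needs $c>0$; the case $c=0$ is degenerate, since then $C(L,\mu)=c_s L$ forces both the Nash equilibrium and the social optimum to equal $\delta_1$, so no strict domination is possible.
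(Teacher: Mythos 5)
Your proof is correct and is essentially the same argument as the paper's: the paper decomposes $\sum_L \mu^\star_{\rm soc}(L)C(L,\mu^\star_{\rm soc})$ into $\sum_L \mu^\star_{\rm soc}(L)C(L,\mu^\star)$ plus a strictly positive externality term and then applies the Nash inequality, which is exactly your chain $C(\mu^\star_{\rm soc},\mu^\star_{\rm soc}) > C(\mu^\star_{\rm soc},\mu^\star) \ge C(\mu^\star,\mu^\star)$. Your added remarks on where strictness enters (the strict form of Corollary~\ref{CorollaryPositiveExternality} and $c>0$) are accurate points of care that the paper leaves implicit.
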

\begin{proof}
The total cost can be decomposed into two terms as
\begin{align}
 \sum_{L=1}^{L_{\max} } \mu(L) C(L, \mu) =  \sum_{L=1}^{L_{\max} } \mu(L) C(L, \mu^\star) +  \sum_{L=1}^{L_{\max} } \mu(L) \left( C(L,\mu)- C(L,\mu^\star) \right) . \nonumber
\end{align}
Suppose the Nash equilibrium $\mu^\star >_{\rm st} \mu^\star_{\rm soc}$, then by the positive externality result, we have
\begin{align}
C (L,\mu^\star_{\rm soc} )- C(L,\mu^\star)   >0. \nonumber
\end{align}
Also, by definition of $\mu^\star$,
\begin{align}
\sum_{L=1}^{L_{\max} } \mu_{\rm soc}^\star (L) C(L, \mu^\star)   \ge \sum_{L=1}^{L_{\max} } \mu^\star (L) C(L, \mu^\star). \nonumber
\end{align}
Therefore,
\begin{align}
\sum_{L=1}^{L_{\max} } \mu^{\star}_{\rm soc}(L) C(L, \mu_{\rm soc}^\star )> \sum_{L=1}^{L_{\max} } \mu^\star(L) C(L, \mu^\star) , \nonumber
\end{align}
which is a contradiction to the definition of $\mu_{\rm soc}^\star$.
\end{proof}
\begin{remark}
Since Nash equilibrium $L^\star$ and social optimum $ \mu^\star_{\rm soc}$ can be identified with real numbers in $ [1,L_{\max}]$, the above lemma further implies that $L^\star \le L^\star_{\rm soc}$, i.e., no Nash equilibrium can be above the social optimum.
\end{remark}

\subsection{Heterogeneous Waiting Cost} \label{SectionHetero}
In this subsection, the heterogeneous waiting cost is considered. In particular, assume that there is a nondegenerate continuous probability density function for waiting cost $c$, i.e., $\int_{0}^{c_{\max}} f(c) d c =1$.

Fix a customer $i$, let $L_i(\cdot)$ denote a function from $[0, c_{\max}]$ to $\mathcal{L}$. We call $L_i(\cdot)$ the strategy of customer $i$. In particular, if customer $i$ has a waiting cost $c$, then she chooses $L_i(c)$ queues to sample uniformly at random. Now suppose all the other customers use $L_{-i}(\cdot )$,  the expected total cost for customer $i$ is given by
\begin{align}
C(L_i, L_{-i} )= c \mathbb{E} [ W(L_i(c), L_{-i}) ] + c_s L_i(c).
\end{align}
The goal of customer $i$ is to minimize the expected total cost by choosing the optimal $L_i(\cdot)$. Define the best response correspondence as $\text{BR}(L_{-i}(\cdot)):=\arg \min_{L_i(\cdot) } C(L_i, L_{-i})$.

If $L_i(\cdot ) \in \text{BR}(L_{-i}(\cdot))$, then $L_i(\cdot)$ must be a nondecreasing step function with respect to $c$ as depicted in Fig.~\ref{FigureBestRespnseHeteoCost}, which is proved in the following lemma.
\begin{lemma}\label{LemmaBestResponseHetero}
Suppose $L_i(\cdot ) \in \text{BR}(L_{-i}(\cdot))$, then $L_i(\cdot ) $ is a nondecreasing step function in $c$.
\end{lemma}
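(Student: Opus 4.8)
The plan is to reduce the heterogeneous problem, for a fixed customer $i$, to a one–parameter family of scalar minimizations in which the type $c$ enters linearly, and then to run a standard single–crossing (monotone comparative statics) argument. First I would observe that, exactly as in Section~\ref{SectionMFL}, customer $i$ is a single vanishingly small customer in the mean field model, so her choice does not affect the queue–length distribution seen by the others; letting $\mu_{-i}\in\mathcal{P}(\mathcal{L})$ be the law of $L_{-i}(C)$ when $C$ has density $f$, we have $\mathbb{E}[W(L,L_{-i})]=w(L):=\sum_{k\ge0} r_{\mu_{-i}}^{L}(k)$. From $r_{\mu_{-i}}(0)=1$, $r_{\mu_{-i}}(1)=\lambda u_{\mu_{-i}}(1)=\lambda\in(0,1)$, and $r_{\mu_{-i}}(k)\in[0,1)$ for $k\ge1$ (Lemma~\ref{LemmaMonotonicity}), each term $L\mapsto r_{\mu_{-i}}^{L}(k)$ with $k\ge1$ is nonincreasing and the $k=1$ term $\lambda^{L}$ is strictly decreasing, so $w$ is strictly decreasing on $[1,L_{\max}]$; in particular it is strictly decreasing on the integers of $\mathcal{L}$ (strict convexity, also available here, is not needed for this lemma).

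Next I would rewrite the best response in separated form: since $C(L_i,L_{-i})=c\,w(L_i(c))+c_s L_i(c)$ is, for fixed $\mu_{-i}$, a pointwise (in $c$) functional of $L_i(\cdot)$, a strategy $L_i(\cdot)$ lies in $\text{BR}(L_{-i}(\cdot))$ if and only if for each $c$ the value $L_i(c)$ minimizes $\phi_c(L):=c\,w(L)+c_s L$ over $L\in\mathcal{L}$. The core step is then the monotone–selection argument. Fix $c_1<c_2$ and put $L_1=L_i(c_1)$, $L_2=L_i(c_2)$. Optimality of $L_1$ at $c_1$ and of $L_2$ at $c_2$ give
\begin{align}
c_1 w(L_1)+c_s L_1\le c_1 w(L_2)+c_s L_2, \qquad c_2 w(L_2)+c_s L_2\le c_2 w(L_1)+c_s L_1; \nonumber
\end{align}
adding these and cancelling the $c_s$ terms yields $(c_2-c_1)\bigl(w(L_1)-w(L_2)\bigr)\ge 0$, hence $w(L_1)\ge w(L_2)$, and therefore $L_1\le L_2$ because $w$ is strictly decreasing on $\mathcal{L}$. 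Thus any selection $L_i(\cdot)\in\text{BR}(L_{-i}(\cdot))$ is nondecreasing in $c$; since $\mathcal{L}$ is finite, a nondecreasing $\mathcal{L}$–valued function is automatically piecewise constant with finitely many jumps, i.e.\ a nondecreasing step function, matching Fig.~\ref{FigureBestRespnseHeteoCost} and completing the proof.

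I do not anticipate a genuine obstacle; the argument is elementary once the reduction to $w$ is in place. The only points needing care are (i) justifying the decoupling across types, namely that customer $i$'s own choice leaves $\mu_{-i}$, and hence the function $w$, unchanged, so that the per–type subproblems are truly independent; and (ii) a mild measure–theoretic remark: if one only imposes optimality of $L_i(c)$ for almost every $c$, then the displayed inequalities hold off a null set, monotonicity holds off a null set, and a nondecreasing function defined off a null set coincides a.e.\ with a unique nondecreasing step function, so the conclusion stands as stated. Ties in $\arg\min_L\phi_c(L)$ at isolated types cause no difficulty, since the inequality $(c_2-c_1)(w(L_1)-w(L_2))\ge 0$ is insensitive to how ties are resolved.
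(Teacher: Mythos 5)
Your proposal is correct and uses essentially the same argument as the paper: the two optimality inequalities at types $c_1<c_2$ are added, the $c_s$ terms cancel, and the resulting inequality $(c_2-c_1)\bigl(w(L_1)-w(L_2)\bigr)\ge 0$ combined with the strict monotonicity of $w$ forces $L_1\le L_2$. The extra details you supply (why $w$ is strictly decreasing, the pointwise decoupling across types, and the a.e.\ remark) are sound but not part of the paper's shorter write-up.
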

\begin{proof}
Suppose $ 0 \le c<\hat{c} \le c_{\max}$, since $L_i(\cdot ) \in \text{BR}(L_{-i}(\cdot))$, we have
\begin{align}
& c \mathbb{E} [ W(L_i(c), L_{-i}) ] + c_s L_i(c) \le c \mathbb{E} [ W(L_i(\hat{c} ), L_{-i}) ] + c_s L_i(\hat{c}), \nonumber \\
& \hat{c} \mathbb{E} [ W(L_i(\hat{c}), L_{-i}) ] + c_s L_i(\hat{c}) \le \hat{c} \mathbb{E} [ W(L_i(c), L_{-i}) ] + c_s L_i(c). \nonumber
\end{align}
Adding the above two inequalities up, we get
\begin{align}
(\hat{c}-c) ( \mathbb{E} [ W(L_i(\hat{c} ), L_{-i}) ] - \mathbb{E} [ W(L_i(c), L_{-i}) ] ) \le 0. \nonumber
\end{align}
Therefore $ \mathbb{E} [ W(L_i(\hat{c} ), L_{-i}) ] \le \mathbb{E} [ W(L_i(c), L_{-i}) ] $ and thus $L_i(\hat{c}) \ge L_i(c)$.
\end{proof}
\begin{figure}
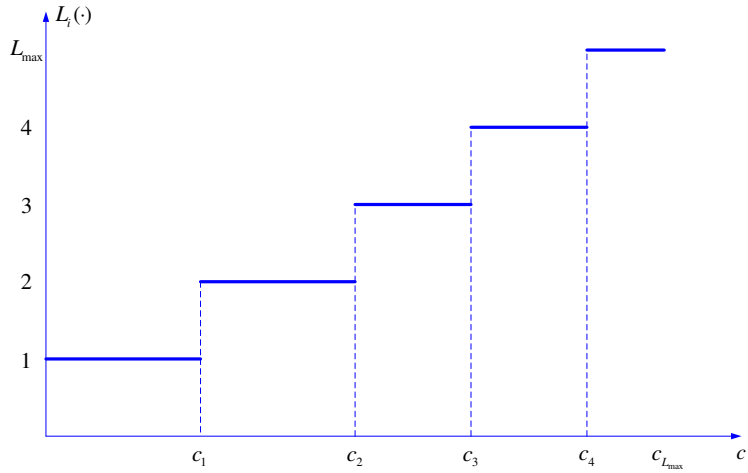

\centering
\post{BestResponseHeteoCost}{4.0in}
\centering
\caption{The best response strategy $L_i(\cdot) \in \text{BR}(L_{-i}(\cdot))$ is a non-decreasing step function. The jumping points are denoted by $0=c_0 < c_1 < \cdots < c_{L_{\max}}=c_{\max}$.}
\label{FigureBestRespnseHeteoCost}
\end{figure}

Define the strategy space $\mathcal{S}$ as the collection of all possible nondecreasing step functions from $[0, c_{\max}]$ to $\mathcal{L}$. By Lemma \ref{LemmaBestResponseHetero}, it suffices to consider $\mathcal{S}$ for finding pure strategy Nash equilibrium. There is a bijective mapping between the strategy space $\mathcal{S}$ and probability space $\mathcal{P}( \mathcal{L})$. In particular, suppose $L(\cdot) \in \mathcal{S}$ is given, and let $0=c_0 < c_1 < \cdots < c_{L_{\max}}=c_{\max}$ denote the jumping points. Then, define $\mu(l) = \int_{c_{l-1}}^{c_l} f(c) d c$. On the other hand, suppose $\mu \in \mathcal{P}( \mathcal{L})$ is given, then the equation $\mu(l)= \int_{c_{l-1}}^{c_l} f(c) d c$ is solved to get the unique jumping points $0=c_0 < c_1 < \cdots < c_{L_{\max}}=c_{\max}$. Thus, $L(\cdot)$ can be constructed as $L(c)=l$ for $ c_{l-1} \le c < c_{l}$. Define a metric on the strategy space $\mathcal{S}$ as $ d (L_1 (\cdot ), L_2 (\cdot ) ) := \| \mu_{L_1} - \mu_{L_2} \|$. Also, Let $F$ denote the bijective mapping from $L(\cdot)$ to $\mu_{L}$ and $F^{-1}$ denote the inverse mapping. It is easy to see that $F$ and $F^{-1}$ are continuous.

Next, we show the existence of a pure strategy Nash equilibrium in the mean field model. First, let us derive the expression of $\mathbb{E} [ W(L_i, L_{-i}) ]$ in mean field equilibrium. Suppose all the customers except customer $i$ use the strategy $L_{-i}(\cdot) $. Due to the bijective mapping between the strategy space and probability space, it is equivalent to consider the case in which all the customers use the mixed strategy $\mu_{L_{-i}}$. Therefore, the mean field equilibrium distribution satisfies
$r_{L_{-i}} (k) = r_{\mu_{L_{-i}} } (k)$, and the expected waiting time of customer $i$ using strategy $L_i(\cdot)$ is given by
\begin{align}
\mathbb{E} [ W(L_i(c), L_{-i} ) ] = \mathbb{E} [ W (L_i(c), \mu_{L_{-i}} ) ] =  \sum_{k=0}^\infty  r_{\mu_{L_{-i}} }^{L_i(c)} (k). \nonumber
\end{align}

\begin{lemma} \label{LemmaContinuityHetero}
The best response correspondence $\text{BR}(L_{-i})$ is a continuous function.
\end{lemma}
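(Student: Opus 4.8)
The plan is to exploit the continuous bijection $F$ between the strategy space $\mathcal{S}$ and $\mathcal{P}(\mathcal{L})$ and reduce the claim to showing that the induced map $\bar\mu \mapsto \mu_{\text{BR}(F^{-1}(\bar\mu))}$ is well defined (single-valued) and continuous on $\mathcal{P}(\mathcal{L})$. First I would write the best response in closed form. Fixing $L_{-i}$ and setting $\bar\mu := \mu_{L_{-i}}$, the cost of a customer of type $c$ depends only on the value $L_i(c)$, namely it equals $c\sum_{k\ge 0} r_{\bar\mu}^{L_i(c)}(k) + c_s L_i(c)$, so a best response must satisfy $L_i(c)\in\arg\min_{l\in\mathcal{L}}\big(c\,\mathbb{E}[W(l,\bar\mu)]+c_s l\big)$ for almost every $c$. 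Since $\mathbb{E}[W(\cdot,\bar\mu)]$ is strictly convex, the marginal value $V(l,\bar\mu)=\mathbb{E}[W(l,\bar\mu)]-\mathbb{E}[W(l+1,\bar\mu)]$ is strictly decreasing in $l$ and strictly positive for $1\le l\le L_{\max}-1$, and a type-$c$ customer strictly prefers $l+1$ to $l$ iff $c\,V(l,\bar\mu)>c_s$. Hence the pointwise minimizer is unique off the finite set $\{c_s/V(l,\bar\mu):1\le l\le L_{\max}-1\}$, which is null because $f$ is an atomless density, and the essentially unique best response is the staircase with jump points
\[ c_0^\star=0,\qquad c_l^\star(\bar\mu)=\min\left\{c_{\max},\,\frac{c_s}{V(l,\bar\mu)}\right\}\ \ (1\le l\le L_{\max}-1),\qquad c_{L_{\max}}^\star=c_{\max}, \]
which is automatically nondecreasing since $V(\cdot,\bar\mu)$ is decreasing (degenerate staircases, where some steps are absent, correspond to $c_l^\star$ collapsing onto $c_{\max}$). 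This makes $\text{BR}$ single-valued as a map into $\mathcal{P}(\mathcal{L})$, with $\mu_{\text{BR}(L_{-i})}(l)=\int_{c_{l-1}^\star(\bar\mu)}^{c_l^\star(\bar\mu)} f(c)\,dc$.

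For continuity it then remains to check that each $c_l^\star(\bar\mu)$ depends continuously on $\bar\mu$. I would first observe that $\bar\mu\mapsto r_{\bar\mu}(k)$ is continuous on $\mathcal{P}(\mathcal{L})$ for every $k$, by induction using $r_{\bar\mu}(0)=1$ and the recursion $r_{\bar\mu}(k)=\lambda u_{\bar\mu}(r_{\bar\mu}(k-1))$ from (\ref{EqDistriMeanFieldLimit}), since $u_{\bar\mu}(x)=\sum_l x^l\bar\mu(l)$ is jointly continuous in $(x,\bar\mu)$. Then $V(l,\bar\mu)=\sum_{k\ge0} r_{\bar\mu}^l(k)\big(1-r_{\bar\mu}(k)\big)$ is continuous in $\bar\mu$ because the series converges uniformly in $\bar\mu$ by the bound $r_{\bar\mu}(k)\le\lambda^k$ of Lemma \ref{LemmaMonotonicity} (alternatively this follows from the joint continuity of $C$ in Lemma \ref{LemmaContinuity}). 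Since $V(l,\bar\mu)>0$, the map $\bar\mu\mapsto c_l^\star(\bar\mu)$ is continuous, and $c\mapsto\int_0^c f$ is continuous since $f$ is integrable; composing, $\bar\mu\mapsto\mu_{\text{BR}(L_{-i})}(l)$ is continuous for each $l$. Precomposing with $F$ and postcomposing with $F^{-1}$, both continuous, then yields continuity of $\text{BR}:\mathcal{S}\to\mathcal{S}$.

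The hard part will not be the continuity estimate itself, which is a routine consequence of continuity of $V(l,\cdot)$ and of the distribution function of $f$, but rather pinning down the exact pointwise-in-$c$ structure of the best response: getting the clipping at $c_{\max}$ right, handling the degenerate staircases, and arguing that the residual ambiguity of the minimizer on the null tie-set $\{c_l^\star\}$ is washed out by integrating against $f$, so that $\text{BR}$ really is a function rather than a correspondence. This is precisely where the hypothesis that $f$ is a nondegenerate continuous density enters — atomlessness for single-valuedness and continuity of its cumulative distribution function for the continuity conclusion.
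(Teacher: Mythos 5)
Your proposal is correct and follows essentially the same route as the paper: identify the best response as the staircase with jump points $c_l = c_s/V(l,\mu_{L_{-i}})$, deduce single-valuedness from their uniqueness, and get continuity of the jump points (hence of $\mu_{\mathrm{BR}(L_{-i})}$ via the distribution function of $f$) from the continuity of $\mathbb{E}[W(l,\cdot)]$ in $\mu_{L_{-i}}$, transporting everything through the bijection $F$. Your added care about clipping at $c_{\max}$ and the null tie-set is a welcome refinement of details the paper leaves implicit, but it does not change the argument.
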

\begin{proof}
See the proof in Appendix \ref{ProofLemmaContinuityHetero}.
\end{proof}

\begin{figure}
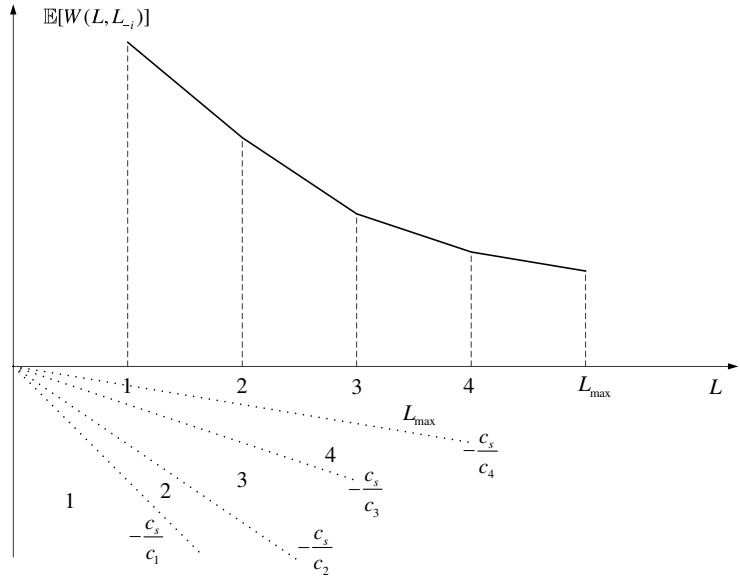

\centering
\post{WaitingCostHeteroCost}{4.0in}
\centering
\caption{The expected waiting cost of customer $i$ choosing $L$ while the others choosing a fixed $L_{-i}$. In this example, $L_{\max}=5$ and $c_1, \ldots, c_4$ are the jumping points of $L_i(\cdot) \in \text{BR}(L_{-i}(\cdot))$.}
\label{FigureWaitingCostHeteroCost}
\end{figure}

Define the best response from $\mathcal{P}(\mathcal{L})$ to $\mathcal{P}(\mathcal{L})$ as $ \tilde{\rm BR} (\mu_{-i}) = F( \text{BR}(F^{-1}(\mu_{-i})))$. Then the existence of a pure strategy Nash equilibrium is equivalent to the existence of a fixed point of $ \tilde{\rm BR} $. The Brouwer fixed point theorem is used to prove it.

\noindent {\bf (Brouwer's Theorem)} Every continuous function $g$ from a convex compact subset of a Euclidean space to itself has a fixed point.

\begin{theorem}
The supermarket game with heterogeneous waiting cost has a pure strategy Nash equilibrium in the mean field model.
\end{theorem}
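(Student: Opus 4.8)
The plan is to apply Brouwer's fixed point theorem to the map $\tilde{\rm BR}\colon \mathcal{P}(\mathcal{L}) \to \mathcal{P}(\mathcal{L})$ introduced just above the theorem, namely $\tilde{\rm BR}(\mu_{-i}) = F(\text{BR}(F^{-1}(\mu_{-i})))$. Since $\mathcal{P}(\mathcal{L})$ is the probability simplex over the finite set $\mathcal{L}$, it is a nonempty, convex, compact subset of $\mathbb{R}^{L_{\max}}$, so Brouwer's theorem applies as soon as $\tilde{\rm BR}$ is shown to be a well-defined continuous self-map; a fixed point $\mu^\star$ then yields the desired equilibrium.

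First I would verify that $\tilde{\rm BR}$ is genuinely single-valued. Given $\mu_{-i} \in \mathcal{P}(\mathcal{L})$, the step function $L_{-i}(\cdot) = F^{-1}(\mu_{-i})$ is a well-defined element of $\mathcal{S}$ and determines the mean field equilibrium distribution $r_{\mu_{L_{-i}}}$. The cost $c\,\mathbb{E}[W(L_i(c),\mu_{L_{-i}})] + c_s L_i(c)$ of customer $i$ decouples over the types $c$, so a best response is obtained by minimizing pointwise in $c$; by Lemma \ref{LemmaBestResponseHetero} this gives a nondecreasing step function, hence an element of $\mathcal{S}$. Invoking the marginal-value characterization as in Lemma \ref{LemmaBRMarginalValueInfo}, for each $c$ the pointwise minimizer over $\mathcal{L}$ is unique except when $c$ coincides with one of the finitely many thresholds $c_s / V(l,\mu_{L_{-i}})$, $l \in \mathcal{L}$; this exceptional set has Lebesgue measure zero and hence zero mass under the continuous density $f$. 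Therefore the induced probabilities $\mu(l) = \int_{c_{l-1}}^{c_l} f(c)\,dc$, i.e.\ $F(\text{BR}(L_{-i}))$, are independent of how ties are broken, so $\tilde{\rm BR}(\mu_{-i})$ is a single point of $\mathcal{P}(\mathcal{L})$. (This is exactly the sense in which $\text{BR}(\cdot)$ is a \emph{function} in Lemma \ref{LemmaContinuityHetero}.)

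Next I would establish continuity of $\tilde{\rm BR}$. The maps $F^{-1}$ and $F$ are continuous by the discussion preceding the theorem, and $\text{BR}(\cdot)$, viewed as a function on $\mathcal{S}$, is continuous by Lemma \ref{LemmaContinuityHetero}; hence $\tilde{\rm BR}$, being their composition, is continuous from $\mathcal{P}(\mathcal{L})$ to itself. By Brouwer's theorem there exists $\mu^\star \in \mathcal{P}(\mathcal{L})$ with $\tilde{\rm BR}(\mu^\star) = \mu^\star$, i.e.\ $F(\text{BR}(F^{-1}(\mu^\star))) = \mu^\star$. Applying $F^{-1}$ and using that $F$ is a bijection yields $\text{BR}(F^{-1}(\mu^\star)) = F^{-1}(\mu^\star)$; setting $L^\star(\cdot) = F^{-1}(\mu^\star) \in \mathcal{S}$ gives $L^\star(\cdot) \in \text{BR}(L^\star(\cdot))$, which is precisely the definition of a pure strategy Nash equilibrium for the heterogeneous-cost supermarket game in the mean field model.

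The main obstacle is packaging the best response as an honest continuous function rather than a set-valued correspondence: unlike the homogeneous case, where Kakutani's theorem is needed because $\text{BR}(\mu_{-i})$ can be a proper set, here one must use the nondegeneracy and continuity of the density $f$ to confine tie-breaking to a null set of types, so that $\tilde{\rm BR}$ is single-valued, and then combine this with the continuity statement of Lemma \ref{LemmaContinuityHetero}. Once that is in hand, the compactness and convexity of the simplex, the composition-of-continuous-maps argument, and the translation of the fixed point back through the bijection $F$ are all routine.
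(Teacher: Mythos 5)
Your proposal is correct and follows essentially the same route as the paper: both apply Brouwer's fixed point theorem to $\tilde{\rm BR} = F \circ \text{BR} \circ F^{-1}$ on the compact convex simplex $\mathcal{P}(\mathcal{L})$, with single-valuedness and continuity of $\text{BR}$ supplied by Lemma \ref{LemmaContinuityHetero} (where the paper pins down the unique thresholds $c_j = c_s / V(j, L_{-i})$, just as you do). Your extra discussion of tie-breaking on a null set of types is a reasonable elaboration of what that lemma already establishes, not a departure from the paper's argument.
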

\begin{proof}
 Note that $ \tilde{\rm BR}$ is a continuous function because $F$, $F^{-1}$, and $\text{BR}$ are continuous functions. Also, $\mathcal{P}(\mathcal{L})$  is a convex compact subset of an Euclidean space. Therefore, by Brouwer's fixed point theorem, $ \tilde{\rm BR}$ has a fixed point $\mu^\star$ and thus $F^{-1} (\mu^\star)$ is a pure strategy Nash equilibrium.
\end{proof}

In the sequel, the stochastic ordering between two possible pure strategy Nash equilibria is analyzed. For customer $i$, her marginal value of sampling  at an integer $L_i(c)$ with all the others adopting $L_{-i}(\cdot)$ is given by
\begin{align}
V(L_i(c), L_{-i}(\cdot)) = V(L_i(c), \mu_{L_{-i}})=\sum_{k=0}^\infty ( r_{\mu_{L_{-i}}}^{L_i(c)} (k) - r_{\mu_{L_{-i}}}^{L_i(c)+1 } (k)  ). \nonumber
\end{align}
The next lemma generalizes Lemma \ref{LemmaMonUnique} and proves a global monotonicity result of $V(L_i(c), L_{-i}(\cdot))$.

\begin{lemma} \label{LemmaMonUniqueHetero}
Assume $\lambda^2 \le 1/2 $ and fix any $L_{-i}(\cdot), \tilde{L}_{-i}(\cdot) \in \mathcal{S}$ such that $\mu_{L_{-i}} <_{\rm st} \mu_{\tilde{L}_{-i}}$. Then $ V(L_i(c), L_{-i} ) > V(L_i(c), \tilde{L}_{-i})$ for all $1 \le L_i(c) \le L_{\max}-1$. Furthermore, let $L_{i}(\cdot) \in \text{BR}( L_{-i}(\cdot)   )$ and $\tilde{L}_i(\cdot) \in \text{BR}(\tilde{L}_{-i}(\cdot) )$, then $\mu_{\tilde{L}_i } \le_{\rm st} \mu_{L_{i}}$.
\end{lemma}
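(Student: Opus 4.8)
The plan is to split the statement into its two assertions. The first one should be essentially free. By the displayed definition of the marginal value of sampling in the heterogeneous model, $V(L_i(c),L_{-i}) = V(L_i(c),\mu_{L_{-i}})$ and $V(L_i(c),\tilde L_{-i}) = V(L_i(c),\mu_{\tilde L_{-i}})$, and the hypothesis is precisely $\mu_{L_{-i}} <_{\rm st} \mu_{\tilde L_{-i}}$. So I would simply invoke Lemma~\ref{LemmaMonUnique} with the pair $(\mu_{L_{-i}},\mu_{\tilde L_{-i}})$ in the roles of $(\mu_{-i},\tilde\mu_{-i})$ — which is where $\lambda^2\le 1/2$ enters — to obtain $V(L,\mu_{L_{-i}}) > V(L,\mu_{\tilde L_{-i}})$ for every integer $L$ with $1\le L\le L_{\max}-1$, which is exactly the first claim.

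For the second assertion, the idea is to lift the monotone–best–response argument behind Lemma~\ref{LemmaMonUnique} to the level of individual customer types. For a fixed waiting cost $c$, minimizing $c\,\mathbb{E}[W(L,\mu_{L_{-i}})]+c_sL$ over $L\in\mathcal{L}$ decouples from the choice of $L_i$ at other types because $\mu_{L_{-i}}$ is fixed, and by strict convexity of $\mathbb{E}[W(\cdot,\mu_{L_{-i}})]$ the increment in cost when passing from $L$ to $L+1$ equals $c_s - c\,V(L,\mu_{L_{-i}})$. Hence, exactly as in Lemma~\ref{LemmaBRMarginalValueInfo}, the minimizer is uniquely $L_i(c)=\min\{L\in\mathcal{L}: V(L,\mu_{L_{-i}})<c_s/c\}$ (with the convention $V(L_{\max},\cdot)=0$), except at the single indifference cost $c=c_s/V(L,\mu_{L_{-i}})$ attached to each $L$. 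Since $\mu_{L_{-i}}$ and $\mu_{\tilde L_{-i}}$ are fixed measures, there are only finitely many such indifference values for the two minimization problems combined, so any $L_i(\cdot)\in\text{BR}(L_{-i}(\cdot))$ and any $\tilde L_i(\cdot)\in\text{BR}(\tilde L_{-i}(\cdot))$ agree with these min-formulas for Lebesgue-almost every $c$.

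I would then fix such a $c$ and use the first assertion: since $V(L,\mu_{\tilde L_{-i}})<V(L,\mu_{L_{-i}})$ for $1\le L\le L_{\max}-1$ and both vanish at $L_{\max}$, the set defining $\tilde L_i(c)$ contains the one defining $L_i(c)$, so $\tilde L_i(c)\le L_i(c)$. This holds a.e., and since the absolutely continuous density $f$ gives zero mass to the null set of indifference points, integrating $\mathbf{1}\{\tilde L_i(c)\ge l\}\le\mathbf{1}\{L_i(c)\ge l\}$ against $f$ yields
\[
\sum_{j=l}^{L_{\max}}\mu_{\tilde L_i}(j)=\int_0^{c_{\max}}\mathbf{1}\{\tilde L_i(c)\ge l\}\,f(c)\,dc\ \le\ \int_0^{c_{\max}}\mathbf{1}\{L_i(c)\ge l\}\,f(c)\,dc=\sum_{j=l}^{L_{\max}}\mu_{L_i}(j)
\]
for every $l\in\mathcal{L}$, that is, $\mu_{\tilde L_i}\le_{\rm st}\mu_{L_i}$.

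The one delicate point I would be careful to spell out is the non-uniqueness of the best response at the finitely many cost levels where $c_s/c$ coincides with some $V(L,\mu_{L_{-i}})$ or $V(L,\mu_{\tilde L_{-i}})$: there the minimizer is a pair of consecutive integers and $L_i(\cdot),\tilde L_i(\cdot)$ are unconstrained, so $\tilde L_i(c)\le L_i(c)$ may fail at those points. The resolution is that this exceptional set is Lebesgue-null and $f$ is absolutely continuous, hence it is invisible to the induced type-distributions $\mu_{L_i},\mu_{\tilde L_i}$. Everything else is the heterogeneous-cost restatement of the ``avoid the crowd'' mechanism already established in Lemma~\ref{LemmaMonUnique}, the only change from the homogeneous case being that the threshold $c_s/c$ appearing in Lemma~\ref{LemmaBRMarginalValueInfo} now ranges over a family indexed by the customer's own waiting cost.
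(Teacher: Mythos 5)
Your proof is correct and takes essentially the same route as the paper: the first claim is Lemma~\ref{LemmaMonUnique} applied to the pair $(\mu_{L_{-i}},\mu_{\tilde L_{-i}})$, and the second claim rests on the threshold characterization of the heterogeneous best response together with the monotonicity of $V$ — the paper phrases this via the ordering of the jumping points $c_j=c_s/V(j,L_{-i})\le \tilde c_j=c_s/V(j,\tilde L_{-i})$ rather than pointwise in $c$, but the two formulations are equivalent. Your explicit handling of the measure-zero set of indifference costs is a detail the paper leaves implicit.
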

\begin{proof}
By Lemma \ref{LemmaMonUnique}, it follows that $ V(L_i(c), L_{-i} ) > V(L_i(c), \tilde{L}_{-i})$ for all $1 \le L_i(c) \le L_{\max}-1$. Next, we show that $\mu_{\tilde{L}_i } \le_{\rm st} \mu_{L_{i}}$. Denote the jumping points of $L_{i}$ and $\tilde{L}_i$ by $0=c_0<c_1<\cdots<c_{L_{\max}}=c_{\max}$ and $0=\tilde{c}_0<\tilde{c}_{1}<\cdots<\tilde{c}_{L_{\max}}=c_{\max}$ respectively. It suffices to show that $c_{j} \le \tilde{c}_{j}$ for $0 \le j \le L_{\max}$. Fig.~\ref{FigureWaitingCostHeteroCost} shows that for $1 \le j \le L_{\max}-1$,
\begin{align}
c_j=\frac{c_s}{V(j,L_{-i} ) }, \quad \tilde{c}_j=\frac{c_s}{V(j,\tilde{L}_{-i} )}, \nonumber
\end{align}
which implies that $c_{j} \le \tilde{c}_{j}$.
\end{proof}

\begin{corollary}
Let $L_1^\star(\cdot)$ and $L_2^\star(\cdot) $ denote two possible distinct Nash equilibria for supermarket game with heterogeneous waiting cost in the mean field model. If $\lambda^2 \le \frac{1}{2}$, then $\mu_{L_1^\star}$ and $\mu_{L_2^\star}$ cannot be stochastically ordered.
\end{corollary}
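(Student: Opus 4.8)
The plan is to argue by contradiction, using the ``avoid the crowd'' monotonicity of the best response established in Lemma~\ref{LemmaMonUniqueHetero}. Suppose, for the sake of contradiction, that $\mu_{L_1^\star}$ and $\mu_{L_2^\star}$ were stochastically comparable. Since $L_1^\star(\cdot)$ and $L_2^\star(\cdot)$ are distinct Nash equilibria and the map $L(\cdot) \mapsto \mu_L$ is a bijection, the distributions $\mu_{L_1^\star}$ and $\mu_{L_2^\star}$ are distinct; hence the comparability must be strict, and relabeling if necessary we may assume $\mu_{L_1^\star} <_{\rm st} \mu_{L_2^\star}$.

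Next I would apply Lemma~\ref{LemmaMonUniqueHetero} with $L_{-i}(\cdot) = L_1^\star(\cdot)$ and $\tilde L_{-i}(\cdot) = L_2^\star(\cdot)$; this is legitimate precisely because $\lambda^2 \le 1/2$ and $\mu_{L_1^\star} <_{\rm st} \mu_{L_2^\star}$. Because each $L_j^\star(\cdot)$ is a Nash equilibrium, $L_j^\star(\cdot) \in \text{BR}(L_j^\star(\cdot))$ for $j = 1, 2$. Choosing $L_i(\cdot) = L_1^\star(\cdot) \in \text{BR}(L_1^\star(\cdot))$ and $\tilde L_i(\cdot) = L_2^\star(\cdot) \in \text{BR}(L_2^\star(\cdot))$ in the ``furthermore'' part of the lemma yields $\mu_{L_2^\star} \le_{\rm st} \mu_{L_1^\star}$.

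Finally, the two relations $\mu_{L_1^\star} <_{\rm st} \mu_{L_2^\star}$ and $\mu_{L_2^\star} \le_{\rm st} \mu_{L_1^\star}$ are mutually contradictory (the former requires $\sum_{j \ge l} \mu_{L_1^\star}(j) \le \sum_{j \ge l} \mu_{L_2^\star}(j)$ for all $l$ with strict inequality for some $l$, while the latter reverses all these inequalities). This contradiction shows that no stochastic ordering between $\mu_{L_1^\star}$ and $\mu_{L_2^\star}$ can hold. There is essentially no obstacle in this argument: the entire content of the corollary is packaged inside Lemma~\ref{LemmaMonUniqueHetero}, and the reasoning parallels the homogeneous-cost uniqueness proof (two stochastically comparable equilibria must coincide) with the conclusion weakened to non-comparability because distinctness is assumed rather than derived. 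The only point requiring a little care is the initial reduction to strict ordering and the correct bookkeeping of which strategy plays the role of $L_{-i}$, $\tilde L_{-i}$, $L_i$, and $\tilde L_i$ when invoking the lemma.
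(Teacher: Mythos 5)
Your proof is correct and follows essentially the same route as the paper: assume a (necessarily strict, by distinctness and the bijection $L(\cdot)\mapsto\mu_L$) stochastic ordering, apply Lemma~\ref{LemmaMonUniqueHetero} to the two equilibria viewed as best responses to themselves, and derive the reversed ordering, a contradiction. The extra care you take with the reduction to strict ordering and the role-assignment in the lemma is exactly the detail the paper leaves implicit.
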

\begin{proof}
Suppose $\mu_{L_1^\star} <_{\rm st} \mu_{L_2^\star}$. Then, by Lemma \ref{LemmaMonUniqueHetero}, $ \mu_{L_1^\star} \ge_{\rm st} \mu_{L_2^\star}$, which is a contradiction to assumption and concludes the proof.
\end{proof}
\begin{remark}
We are unable to prove the uniqueness of pure strategy Nash equilibrium because stochastic dominance is not a total order, i.e., there exists $\mu_1$ and $\mu_2$ which cannot be stochastically ordered.
\end{remark}

\section{Justification of Mean Field Model} \label{SectionFiniteN}
In this section, we justify the mean field model as the right limit of the supermarket game with finite $N$ as $N \to \infty$ by studying the equilibrium queue length distribution and $\epsilon$-Nash equilibrium.

\subsection{Propagation of Chaos}
In this subsection, we rigorously prove the propagation of chaos and coupling result for the finite $N$ model with all the customers using strategy $\mu \in \mathcal{P}(\mathcal{L})$. Our proof techniques are essentially the same as \citep{Graham00} and \citep{Turner98}. The only difference is that we consider a slightly more general case where customers' choices are random instead of deterministic and fixed.

Denote by $Q_i^N(t)$ the length of queue $i$ at time $t$. The process of $\{ Q_i^N \}$ is Markov, and the empirical distribution $\nu^N= (1/N) \sum_{i=1}^N \delta_{Q_i^N}$ has samples in $\mathcal{P}( \mathbb{D}( \mathbb{R}_+, \mathbb{N} ))$. Define the marginal process $\bar{Q}^N= (\bar{Q}^N_t)_{t\ge 0}$ as
\begin{align}
\bar{Q}^N_t = \nu_t^N = \frac{1}{N} \sum_{i=1}^N \delta_{Q_i^N(t)}. \nonumber
\end{align}
The marginal process $\bar{Q}^N$ has sample paths in $\mathbb{D}(\mathbb{R}_+, \mathcal{P}(\mathbb{N}) )$. The fraction of queues of length at least $k$ at time $t$ can be written as $r_t^N(k)=\bar{Q}^N_t([k,\infty))$.

For $ \Gamma \in \mathcal{P}(\mathscr{X}) $, a sequence of random variables $(X_i)_{1 \le i \le N}$ on $\mathscr{X}^N$ is $\Gamma$-{\it chaotic} if for any fixed integer $l \ge 1$, as $N \to \infty$,
\begin{align}
\mathscr{L} ( X_1, \ldots, X_l  ) & \Longrightarrow \Gamma^{ \otimes l}. \nonumber
\end{align}
A sequence of random variables $(X_i)_{1 \le i \le N}$ on $\mathscr{X}^N$ is {\it exchangeable} if for any permutation $\pi: [1,\ldots, N] \to [1,\ldots,N]$,
\begin{align}
\mathscr{L} (X_1, \ldots, X_N) = \mathscr{L} ( X_{\pi(1)}, \ldots, X_{\pi(N)} ). \nonumber
\end{align}

The proof roadmap is summarized in Fig.~\ref{FigureProofRoadMap}. We first prove a chaoticity result on path space (Thm. 7), i.e., there exists a $\Gamma \in \mathcal{P}( \mathbb{D}( \mathbb{R}_+, \mathbb{N} )) $ such that $(Q_i^N)_{ 1\le i \le N}$ is $\Gamma$-chaotic, if the initial condition $(Q_i^N(0))_{1 \le i \le N}$ is chaotic. Then, we show a chaoticity result in equilibrium (Thm. 10) by (i) taking the large $N$ limit and proving that the solution of the mean field equation converges as $t \to \infty$ to a fixed point (Lemma 13); (ii) taking the large $t$ limit and proving the finite $N$ model is ergodic (Thm. 9); (iii) using the chaoticity result on path space (Thm. 7) to finish the proof. The coupling result proved in Theorem 8 is used to show the ergodicity result in Theorem 9.
\begin{figure}
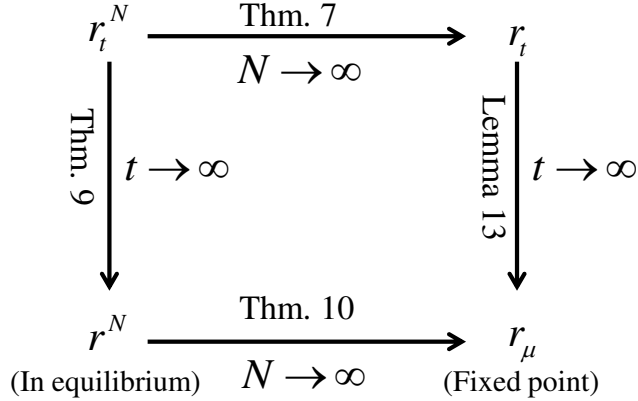

\centering
\post{ProofRoadMap}{3.5in}
\centering
\caption{The roadmap to show a chaoticity result in equilibrium.}
\label{FigureProofRoadMap}
\end{figure}
The chaoticity result on path space (Thm. 7) is proved using Proposition 2.2 in \citep{Sznitman91} and the {\it nonlinear martingale problem} approach. First, some useful definitions and preliminary lemmas are stated.

Let $(n)_k:=n(n-1)\cdots(n-k+1)$ for integer $n\ge k\ge 0$.  Define the $k$-body empirical measure as
\begin{align}
\nu^{k,N}= \frac{1}{(N)_k} \sum_{\substack{ i_1,\ldots, i_k=1, \\ \text{distinct} } }^N \delta_{Q_{i_1}^N, \ldots, Q_{i_k}^N} \nonumber
\end{align}
and the $k$-body empirical measure for queues other than $i$
\begin{align}
\nu_i^{k,N}= \frac{1}{(N-1)_k} \sum_{\substack{i_1,\ldots, i_k=1, \\ \text{distinct},\neq i}}^N \delta_{Q_{i_1}^N, \ldots, Q_{i_k}^N} \nonumber
\end{align}
with their marginal process $\bar{Q}^{k,N}$ and $\bar{Q}_i^{k,N}$.
The Lemma 3.1 in \citep{Graham00}, restated in the following lemma, proves that the $k$-body empirical measure is close to $k$-product of the empirical measure $\nu^N$.
\begin{lemma}\label{LemmaApprox}
$ \| \nu^{k,N}- (\nu^N)^{\otimes k} \|_{\rm TV} = O(1/N)$ and $\| \nu_i^{k,N} - (\nu^N)^{\otimes k} \|_{\rm TV} = O(1/N)$.
\end{lemma}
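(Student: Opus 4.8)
The plan is to exploit the fact that the $k$-body empirical measure $\nu^{k,N}$ corresponds to drawing $k$ queue indices uniformly \emph{without} replacement from $\{1,\dots,N\}$, whereas $(\nu^N)^{\otimes k}$ corresponds to drawing $k$ indices uniformly \emph{with} replacement; the two therefore differ only through the probability of an index collision, which is $1-(N)_k/N^k = O(k^2/N) = O(1/N)$ for fixed $k$. Concretely, for any test function $\phi$ on $\mathscr{X}^k$ with $\|\phi\|_\infty \le 1$, I would split the sum defining $\langle \phi, (\nu^N)^{\otimes k}\rangle = N^{-k}\sum_{i_1,\dots,i_k} \phi(Q_{i_1}^N,\dots,Q_{i_k}^N)$ into the part indexed by distinct tuples and the part where at least two indices coincide. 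The distinct part equals $\tfrac{(N)_k}{N^k}\langle \phi, \nu^{k,N}\rangle$, and the remaining part has at most $N^k-(N)_k$ terms, each of modulus at most $1$. Hence $|\langle \phi, (\nu^N)^{\otimes k}\rangle - \langle \phi, \nu^{k,N}\rangle| \le 2\bigl(1-(N)_k/N^k\bigr) \le k(k-1)/N$, and taking the supremum over such $\phi$ gives $\|\nu^{k,N}-(\nu^N)^{\otimes k}\|_{\rm TV} = O(1/N)$. All of these bounds are pathwise, so the randomness of $\nu^N$ plays no role.

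For the second estimate I would introduce the auxiliary probability measure $\nu^N_{-i} := \tfrac{1}{N-1}\sum_{j\ne i}\delta_{Q_j^N}$, the empirical measure of the queues other than $i$. Since $\nu^N_{-i}-\nu^N = \tfrac{1}{N-1}\bigl(\nu^N-\delta_{Q_i^N}\bigr)$, we get $\|\nu^N_{-i}-\nu^N\|_{\rm TV} \le 2/(N-1) = O(1/N)$, and the elementary product bound $\|\mu^{\otimes k}-\nu^{\otimes k}\|_{\rm TV} \le k\,\|\mu-\nu\|_{\rm TV}$ for probability measures then yields $\|(\nu^N_{-i})^{\otimes k}-(\nu^N)^{\otimes k}\|_{\rm TV} = O(1/N)$. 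Applying the collision argument of the previous paragraph with the $N-1$ indices $j\ne i$ in place of $\{1,\dots,N\}$ gives $\|\nu_i^{k,N}-(\nu^N_{-i})^{\otimes k}\|_{\rm TV} = O(1/N)$, and the triangle inequality combines the two to finish the proof.

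Since the argument is purely combinatorial bookkeeping with no analytic content, there is no serious obstacle; the only points needing care are (i) tracking both corrections — with-/without-replacement, and the removal of index $i$ — in the $\nu_i^{k,N}$ estimate, and (ii) using the paper's definition of the total variation norm as a supremum over $\|f\|_\infty\le 1$, which is harmless here since all three measures involved are probability measures. This is precisely Lemma 3.1 of \citep{Graham00}, so one could alternatively just invoke it verbatim.
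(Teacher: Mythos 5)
Your argument is correct, and it reconstructs exactly the standard with-/without-replacement counting that underlies Lemma 3.1 of \citep{Graham00}, which is all the paper itself does here (it simply restates and cites that lemma without reproving it). The combinatorial bound $2\bigl(1-(N)_k/N^k\bigr)\le k(k-1)/N$, the identity $\nu^N_{-i}-\nu^N=\tfrac{1}{N-1}\bigl(\nu^N-\delta_{Q_i^N}\bigr)$, and the telescoping product bound are all sound, so there is nothing to add.
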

For a bounded function $\phi$ on $\mathbb{N}$, set $\phi^{+}(x)=\phi(x+1)-\phi(x)$ and $\phi^{-}(x)=\phi(x-1)-\phi(x)$. Let $\chi_L $ be a function as
\begin{align}
\chi_L : (x_1, \ldots, x_L) \in \mathbb{N}^L \mapsto \frac{\mathbf{1}_{x_1=\min \{x_1, \ldots, x_L \} } }{\sum_{i=1}^L \mathbf{1}_{x_i=\min \{x_1, \ldots, x_L \} }   } \in \{0, 1/L, \ldots, 1/2, 1 \}.  \nonumber
\end{align}
The next lemma gives a martingale process induced from the Markov process $\{ Q_i^N \}$.
\begin{lemma} \label{LemmaMartingalProblem}
$M^{\phi,i,N}$ is a martingale process defined by
\begin{align}
M_t^{\phi,i,N} = &  \phi(Q_i^N(t)) - \phi(Q_i^N(0)) - \int_{0}^t  \sum_{L=1}^{L_{\max}} \mu(L) \{ \lambda L \langle \chi_L ( Q_i^N(s), \cdot), \bar{Q}_i^{L-1,N} (s) \rangle  \nonumber \\
& \phi^{+}( Q_i^N(s) ) + \mathbf{1}_{Q_i^N(s) \ge 1} \phi^{-} (Q_i^N (s) ) \} ds \nonumber \\
= & \phi(Q_i^N(t)) - \phi(Q_i^N(0)) - \int_{0}^t  \sum_{L=1}^{L_{\max}} \mu(L) \{ \lambda L \langle \chi_L ( Q_i^N(s), \cdot), (\bar{Q}_s^N)^{\otimes L-1 } \rangle \nonumber \\
& \phi^{+}( Q_i^N(s) ) + \mathbf{1}_{Q_i^N(s) \ge 1} \phi^{-} (Q_i^N (s) ) \} ds + \epsilon^{\phi,i,N}(t) \nonumber
\end{align}
and $\epsilon^{\phi,i,N}(t)= t \| \phi \|_{\infty} O(\frac{1}{N})$ uniformly, $\langle M^{\phi,i,N}, M^{\phi,j,N} \rangle$ are zero for $i \neq j$ and $\mathbb{E} [\langle M^{\phi,i,N}, M^{\phi,i,N} \rangle_t ] \le C \| \phi \|^2_{\infty} t^2 $ for a constant $C$.
\end{lemma}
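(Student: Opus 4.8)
The plan is to realize $M^{\phi,i,N}$ as the Dynkin martingale associated with the full Markov process $\mathbf{Q}^N=(Q_1^N,\dots,Q_N^N)$ and the bounded test function $F(\mathbf{q})=\phi(q_i)$, and then to read off the three remaining assertions from the structure of the generator.

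\emph{Step 1 (generator and the first martingale).} First I would write out the generator $\mathcal{A}^N$ of $\mathbf{Q}^N$ acting on $F(\mathbf{q})=\phi(q_i)$. Departures contribute $\mathbf{1}_{q_i\ge 1}\phi^-(q_i)$ (departures from queues $j\ne i$ do not affect $F$). An arrival occurs at rate $N\lambda$; the customer picks $L$ with probability $\mu(L)$ and then a uniform $L$-subset $S$ of the $N$ queues, joining the queue in $S$ attaining the minimum after tie-breaking. Queue $i$ is the chosen one iff $i\in S$ and $i$ wins that competition, so the arrival part of $\mathcal{A}^N F(\mathbf{q})$ is $N\lambda\sum_L\mu(L)\,\Pr[i\in S]\,\mathbb{E}\big[\chi_L(q_i,(q_j)_{j\in S\setminus\{i\}})\mid i\in S\big]\,\phi^+(q_i)$. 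Since $\Pr[i\in S]=L/N$ and, conditioned on $i\in S$, the remaining $L-1$ members of $S$ form a uniform $(L-1)$-subset of the other $N-1$ queues, this equals $\lambda\sum_L\mu(L)L\langle\chi_L(q_i,\cdot),\,(L-1)\text{-body empirical measure of queues}\ne i\rangle\phi^+(q_i)$, which along the trajectory is $\lambda\sum_L\mu(L)L\langle\chi_L(Q_i^N(s),\cdot),\bar{Q}_i^{L-1,N}(s)\rangle\phi^+(Q_i^N(s))$ — the integrand in the first display. Because $F$ and $\mathcal{A}^N F$ are bounded (by $\|\phi\|_\infty$ and $2\|\phi\|_\infty(1+\lambda L_{\max})$) and the chain is non-explosive for fixed $N$, Dynkin's formula gives that the first expression for $M^{\phi,i,N}$ is a martingale.

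\emph{Step 2 (the error term).} To pass to the second display I replace $\bar{Q}_i^{L-1,N}(s)$ by $(\bar{Q}_s^N)^{\otimes(L-1)}$ inside each bracket. Lemma \ref{LemmaApprox} gives $\|\nu_i^{L-1,N}-(\nu^N)^{\otimes(L-1)}\|_{\rm TV}=O(1/N)$; projecting to the time-$s$ coordinate gives $\|\bar{Q}_i^{L-1,N}(s)-(\bar{Q}_s^N)^{\otimes(L-1)}\|_{\rm TV}=O(1/N)$ uniformly in $s$. Since $\|\chi_L\|_\infty\le 1$ and $|\phi^+|\le 2\|\phi\|_\infty$, the change of each bracketed term is $O(\|\phi\|_\infty/N)$; summing against $\lambda\mu(L)L$ with $\sum_L\mu(L)L\le L_{\max}$ and integrating over $[0,t]$ produces $\epsilon^{\phi,i,N}(t)=t\|\phi\|_\infty O(1/N)$ uniformly.

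\emph{Step 3 (covariations).} The jumps of $M^{\phi,i,N}$ coincide with those of $\phi(Q_i^N)$, which occur only at transitions changing coordinate $i$. Every transition of $\mathbf{Q}^N$ — an arrival routed to one queue, or a departure from one queue — changes exactly one coordinate, so almost surely $Q_i^N$ and $Q_j^N$ never jump simultaneously for $i\ne j$; hence $[M^{\phi,i,N},M^{\phi,j,N}]\equiv 0$ and its compensator $\langle M^{\phi,i,N},M^{\phi,j,N}\rangle$ vanishes. For $i=j$, the predictable quadratic variation is the compensator of $\sum_{s\le t}(\Delta M_s^{\phi,i,N})^2=\sum_{s\le t}(\Delta\phi(Q_i^N(s)))^2$, namely $\int_0^t\big[(\text{arrival rate to }i)(\phi^+(Q_i^N(s)))^2+\mathbf{1}_{Q_i^N(s)\ge 1}(\phi^-(Q_i^N(s)))^2\big]ds$; the arrival rate is at most $\lambda L_{\max}$, the departure indicator at most $1$, and $(\phi^\pm)^2\le 4\|\phi\|_\infty^2$, so $\langle M^{\phi,i,N}\rangle_t\le C\|\phi\|_\infty^2 t$, whence the asserted bound $\mathbb{E}[\langle M^{\phi,i,N}\rangle_t]\le C\|\phi\|_\infty^2 t^2$.

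The conceptual steps are standard; the only real care is in (i) unwinding the combinatorics of sampling $L$ queues \emph{without} replacement so as to produce exactly the factor $\lambda L\langle\chi_L(\cdot),\bar{Q}_i^{L-1,N}\rangle$ (rather than a with-replacement product), and (ii) upgrading the path-space total-variation estimate of Lemma \ref{LemmaApprox} to a bound on the one-time marginals that is \emph{uniform} in $s\in[0,t]$, so that the error in Step 2 is genuinely of order $t\|\phi\|_\infty/N$. These are the main points to get right; everything else is routine bookkeeping with Dynkin's formula and compensators.
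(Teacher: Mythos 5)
Your proposal is correct and follows essentially the same route as the paper: Dynkin's formula for the martingale property, Lemma \ref{LemmaApprox} for the $O(1/N)$ error term, and the absence of simultaneous jumps for the vanishing cross-covariations. The only (harmless) divergence is in the last step, where you bound $\langle M^{\phi,i,N}\rangle_t$ via the compensator of the squared jumps, obtaining a bound linear in $t$, whereas the paper bounds $\mathbb{E}[(M_t^{\phi,i,N})^2]$ directly from the a.s.\ bound on $M_t^{\phi,i,N}$; both suffice for the stated estimate and for its use in Theorem \ref{ThmChaoticityMeanFieldLimit}.
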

\begin{proof}
The martingale statement is obtained from the Dynkin formula in stochastic analysis. The bound on $\epsilon^{\phi,i,N}(t)$ follows from Lemma \ref{LemmaApprox}. Also, because there are no simultaneous jumps, the quadratic covariations $ \langle M^{\phi,i,N}, M^{\phi,j,N} \rangle $ are $0$ for $ i \neq j$. Lastly, by the definition of quadratic variation and $M_t^{\phi,i,N}$,
\begin{align}
\mathbb{E} [ \langle M^{\phi,i,N}, M^{\phi,i,N} \rangle_t ] &= \mathbb{E} [ ( M_t^{ \phi,i,N} )^2 ] \nonumber \\
&\le 3 \left( 2  \| \phi \|^2_{\infty} + \left( ( \lambda L_{\max} +1) \| \phi \|_{\infty} t  \right)^2 \right) \le C \| \phi \|^2_{\infty} t^2. \nonumber
\end{align}
\end{proof}

Next, we introduce the {\it nonlinear martingale problem} which is useful to prove the propagation of chaos result. We say a law $\Gamma \in \mathcal{P}( \mathbb{D}( \mathbb{R}_+, \mathbb{N} ))$ solves the nonlinear martingale problem if for any bounded function $\phi$ on $\mathbb{N}$,
\begin{align}
M_t^{\phi} &= \phi(Q_t) - \phi(Q_0) - \int_{0}^t \sum_{L=1}^{L_{\max}} \mu(L) \{ \lambda L \langle \chi_L ( Q_s, \cdot), {\Gamma}_s^{\otimes L-1 } \rangle \phi^{+}( Q_s ) \nonumber \\
& +  \mathbf{1}_{Q_s \ge 1} \phi^{-} (Q_s) \} ds \label{EqMartingaleProblem}
\end{align}
defines a $\Gamma$-martingale, where $Q \in \mathbb{D}( \mathbb{R}_+, \mathbb{N} )$ is distributed as $\Gamma$ and $\Gamma_t$ is the distribution of $Q_t$. It solves the martingale problem starting at $\gamma$ if furthermore $\Gamma_0=\gamma$.

By taking the expectation over two sides of (\ref{EqMartingaleProblem}) and using the result
\begin{align}
L\langle \chi_L(\mathbf{x} )\phi^{+}(x_1), {\Gamma}_s^{\otimes L } (d \mathbf{x} ) \rangle = \langle \phi^{+} (\min \{ \mathbf{x} \}), {\Gamma}_s^{\otimes L } (d \mathbf{x} ) \rangle, \nonumber
\end{align}
we get the {\it nonlinear Kolmogorov equation}. We say a (deterministic) process $(\Gamma_t)_{t \ge 0} \in \mathbb{D}( \mathbb{R}_+, \mathcal{P}( \mathbb{N} ) ) $ solves the nonlinear Kolmogorov equation if for any bounded function $\phi$ on $\mathbb{N}$,
\begin{align}
\langle \phi, \Gamma_t \rangle & = \langle \phi, \Gamma_0 \rangle + \int_0^t \sum_{L=1}^{L_{\max}} \mu(L) \{ \lambda \langle \phi^{+} (\min \{ \mathbf{x}\} ), \Gamma_s^{\otimes L}(d \mathbf{x}) \rangle \nonumber \\
& + \langle \mathbf{1}_{\cdot \ge 1} \phi^{-}, \Gamma_s \rangle \} ds \label{EqKolmogorov}
\end{align}
and it solves the equation starting at $\gamma$ if moreover $\Gamma_0=\gamma$. Taking $\phi$ equal to $\mathbf{1}_{[ k, \infty)}$, and using the fact that
\begin{align}
\langle \phi^{+} (\min \{ \mathbf{x} \} ), \Gamma_s^{\otimes L}(d \mathbf{x} ) \rangle= \sum_{x \in \mathbb{N}} \phi^{+}(x) \left( \Gamma_s^L([x, \infty) ) -\Gamma_s^L([x+1, \infty) ) \right) , \nonumber
\end{align}
we obtain the mean field equation for $r_t(k)=\Gamma_t([k,\infty))$ as
\begin{align}
\frac{d r_t(k) }{d t} = \lambda \sum_{L=1}^{L_{\rm max}}   \mu(L) \left( r_t^{L}(k-1) - r_t^{ L } (k) \right) - (r_t(k)-r_t(k+1)). \label{EqMeanFieldLimit}
\end{align}
The following theorem, which is essentially the same as Theorem 3.3 in \citep{Graham00}, proves that the nonlinear martingale problem has a unique solution.
\begin{theorem} \label{TheoremUniqMP}
Let $\gamma \in \mathcal{P}(\mathbb{N})$. Then there is a unique solution $\Gamma \in \mathcal{P}( \mathbb{D}( \mathbb{R}_+, \mathbb{N} ))$ for the nonlinear martingale problem (\ref{EqMartingaleProblem}) starting at $\gamma$. Its marginal process $(\Gamma_t)_{t\ge0}$ is the unique solution for the nonlinear Kolmogorov equation (\ref{EqKolmogorov}) starting at $\gamma$, and $r=(r_t)_{t \ge 0}$ defined by  $r_t(k)=\Gamma_t([k,\infty))$ is the unique solution for the mean field equation (\ref{EqMeanFieldLimit}) starting at $r_0$.
\end{theorem}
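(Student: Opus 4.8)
The statement is the analogue of Theorem~3.3 in \citep{Graham00}, and I would prove it along the same lines, by decoupling it into (i) existence and uniqueness of the \emph{nonlinear} time-marginal flow $(\Gamma_t)_{t\ge0}$, and (ii) uniqueness of the induced \emph{linear}, time-inhomogeneous martingale problem once that flow is frozen. First I would record that the three formulations are linked by the elementary manipulations already displayed before the statement: taking $\Gamma$-expectations in \eqref{EqMartingaleProblem} turns a solution of the nonlinear martingale problem into a solution $(\Gamma_t)$ of the nonlinear Kolmogorov equation \eqref{EqKolmogorov}, and testing \eqref{EqKolmogorov} against the indicators $\phi=\mathbf 1_{[k,\infty)}$ (for which $\phi^+(x)=\mathbf 1_{\{x=k-1\}}$ and $\mathbf 1_{\{x\ge1\}}\phi^-(x)=-\mathbf 1_{\{x=k\}}$ when $k\ge1$) produces the mean field equation \eqref{EqMeanFieldLimit} for $r_t(k)=\Gamma_t([k,\infty))$. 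Since $\{\mathbf 1_{[k,\infty)}:k\ge0\}$ separates points of $\mathcal P(\mathbb N)$, solving \eqref{EqMeanFieldLimit} for all $k$ is equivalent to solving \eqref{EqKolmogorov} for all bounded $\phi$, so it is enough to settle existence and uniqueness at the level of \eqref{EqMeanFieldLimit} and then lift the flow to a path-space law.

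\textbf{Uniqueness.} Given two solutions $(\Gamma_t)$ and $(\tilde\Gamma_t)$ of \eqref{EqKolmogorov} with common initial law $\gamma$, I would set $\rho_t:=\|\Gamma_t-\tilde\Gamma_t\|_{\rm TV}$ and run a Gronwall argument. The sole nonlinearity is the tensor power $\Gamma_s^{\otimes L}$, and a telescoping estimate gives $\|\Gamma_s^{\otimes L}-\tilde\Gamma_s^{\otimes L}\|_{\rm TV}\le L\rho_s$; combined with $\|\phi^\pm\|_\infty\le2\|\phi\|_\infty$ and the boundedness of $\mathbf x\mapsto\phi^+(\min\{\mathbf x\})$, subtracting the two copies of \eqref{EqKolmogorov} and taking the supremum over $\|\phi\|_\infty\le1$ yields $\rho_t\le C\int_0^t\rho_s\,ds$ with $C=2(\lambda L_{\max}+1)$, hence $\rho\equiv0$. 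Therefore any two solutions of the nonlinear martingale problem starting at $\gamma$ share the same marginal flow $(\Gamma_t)$; but with that flow inserted, \eqref{EqMartingaleProblem} becomes the martingale problem for a time-inhomogeneous pure-jump process on $\mathbb N$ whose total jump rate is bounded by $\lambda L_{\max}+1$, so it is non-explosive and the associated linear martingale problem has a unique solution by standard theory (e.g. \citep{Sznitman91}). Hence $\Gamma=\tilde\Gamma$.

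\textbf{Existence.} I would build the flow first, reading \eqref{EqMeanFieldLimit} as an ODE for $(r_t(k))_{k\ge1}$ in the Banach space $\ell^\infty(\mathbb N)$ with $r_t(0)\equiv1$: the right-hand side is bounded and globally Lipschitz in the sup norm (again via $|x^L-y^L|\le L|x-y|$ on $[0,1]$), so Picard--Lindel\"of gives a unique global solution from any admissible $r_0$ with $1=r_0(0)\ge r_0(1)\ge\cdots\ge0$. A short boundary check shows the set of such ``tail sequences'' is forward invariant: at a face $r_t(k)=r_t(k+1)$ the drift of $r_t(k)-r_t(k+1)$ is nonnegative, and the drift of $r_t(k)$ is $\le0$ when $r_t(k)=1$ and $\ge0$ when $r_t(k)=0$ (consistent with Lemma \ref{LemmaMonotonicity}, which bounds the equilibrium by $\lambda^k$). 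I would then set $\Gamma_t([k,\infty))=r_t(k)$ and take $\Gamma$ to be the law on $\mathbb D(\mathbb R_+,\mathbb N)$ of the time-inhomogeneous Markov jump process started from $\gamma$ with up-rate $\sum_L\mu(L)\lambda L\langle\chi_L(x,\cdot),\Gamma_t^{\otimes L-1}\rangle$ and down-rate $\mathbf 1_{\{x\ge1\}}$ at state $x$; these rates are bounded, so the process is well defined, and Dynkin's formula shows $\Gamma$ solves \eqref{EqMartingaleProblem} while its marginals are the $\Gamma_t$ just constructed, closing the loop.

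\textbf{Main obstacle.} Following \citep{Graham00} no step is genuinely deep, but the two points that need care are: making the Gronwall bound truly uniform over all test functions with $\|\phi\|_\infty\le1$ (exactly where the tensorization Lipschitz estimate and the control of $\phi^\pm$ enter), and verifying that the $\ell^\infty$ Picard solution of \eqref{EqMeanFieldLimit} remains in the set of tails of probability measures on $\mathbb N$, since otherwise the jump-process construction in the existence step would not be legitimate. The unbounded state space $\mathbb N$ is handled throughout by the uniform boundedness of the jump rates, which rules out explosion.
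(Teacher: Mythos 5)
Your proposal is correct and follows essentially the same route as the paper: the paper simply verifies that the jump kernel $J(\xi,x,dy)$ has total mass bounded by $\lambda L_{\max}+1$ and is Lipschitz in $\xi$ in total variation (via the same tensor-power estimate $\|\xi^{\otimes L-1}-\eta^{\otimes L-1}\|_{\rm TV}\le (L_{\max}-1)\|\xi-\eta\|_{\rm TV}$ you use), and then invokes Proposition~2.3 of \citep{Graham00}, which packages exactly the Gronwall uniqueness and Picard/jump-process existence argument you spell out. Your version just opens up that black box; no gap.
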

\begin{proof}
See the proof in Appendix \ref{ProofTheoremUniqMP}.
\end{proof}

The following theorem proves a propagation of chaos result on the path space.
\begin{theorem} \label{ThmChaoticityMeanFieldLimit}
Assume that $(\bar{Q}_0^N)_{N \ge 1}$ converges in law to a $\gamma$ in $\mathcal{P}(\mathbb{N})$, or that $(r_0^N)_{N \ge 1}$ converges in law to $r_0$ with $r_0(k)=\gamma([k,\infty))$. Then $(\nu^N)_{N \ge 1}$ converges in probability to the unique solution $\Gamma$ for the nonlinear martingale problem (\ref{EqMartingaleProblem}) starting at $\gamma$, and if $(Q_i^N(0))_{1 \le i \le N}$ is exchangeable then $(Q_i^N)_{ 1 \le i \le N} $ is $\Gamma$-chaotic. Moreover, $(\bar{Q}^N)_{N \ge 1}$ converges in probability to $(\Gamma_t)_{t \ge 0}$ and $(r^N)_{ N \ge 1}$ converges in probability to $(r_t)_{t \ge 0}$, for uniform convergence on bounded intervals over $t$, where $r_t(k)=\Gamma_t([k,\infty))$. Note that $(\Gamma_t)_{t \ge 0}$ is the unique solution for the Kolmogorov equation (\ref{EqKolmogorov}) starting at $\gamma$, and $(r_t)_{t \ge 0}$ is the unique solution for the mean field equation (\ref{EqMeanFieldLimit}) starting at $r_0$.
\end{theorem}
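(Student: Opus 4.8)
The plan is to follow the classical nonlinear--martingale--problem route to propagation of chaos, as in \citep{Sznitman91} and \citep{Graham00}, in four steps: tightness of the empirical measures, identification of every subsequential limit as a solution of the nonlinear martingale problem $(\ref{EqMartingaleProblem})$, uniqueness, and the translation into a chaoticity statement. For Step~1, I would show that $(\mathscr{L}(\nu^N))_{N\ge1}$ is tight in $\mathcal{P}(\mathcal{P}(\mathbb{D}(\mathbb{R}_+,\mathbb{N})))$. By the standard criterion this is equivalent to tightness of the intensity measures $\mathbb{E}[\nu^N]=\frac{1}{N}\sum_{i=1}^N\mathscr{L}(Q_i^N)$ in $\mathcal{P}(\mathbb{D}(\mathbb{R}_+,\mathbb{N}))$, hence to a compact--containment bound plus an Aldous oscillation estimate for a single queue. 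Since the arrival rate into any fixed queue is at most $\lambda L_{\max}$ (the probability that a given queue lies in an $L$--sample is $L/N$) and the departure rate is at most $1$, each $Q_i^N$ is, on any bounded interval, stochastically dominated by a fixed birth--death chain and changes by jumps occurring at rate bounded by $\lambda L_{\max}+1$; together with the hypothesis $\bar Q_0^N\Longrightarrow\gamma$ (equivalently $r_0^N\Longrightarrow r_0$), which makes the initial marginals uniformly tight, this yields tightness in the Skorokhod topology.

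For Step~2, let $\Gamma_\infty$ be a subsequential weak limit of $\nu^N$; I would show that $\mathscr{L}(\Gamma_\infty)$--a.s.\ $\Gamma_\infty$ solves the nonlinear martingale problem started at $\gamma$. Fix a bounded $\phi$ on $\mathbb{N}$, times $s_1<\cdots<s_m\le s<t$ outside the at most countable set of fixed discontinuities of $\Gamma_\infty$, and a bounded continuous functional $h$ of the path on $[0,s]$. Averaging the identity of Lemma~\ref{LemmaMartingalProblem} over $i$ and using orthogonality of the $M^{\phi,i,N}$ together with $\mathbb{E}[\langle M^{\phi,i,N},M^{\phi,i,N}\rangle_t]\le C\|\phi\|_\infty^2 t^2$ gives $\mathbb{E}[(\frac{1}{N}\sum_i M^{\phi,i,N}_t)^2]\le C\|\phi\|_\infty^2 t^2/N\to 0$, while the error terms $\epsilon^{\phi,i,N}=t\|\phi\|_\infty O(1/N)$ coming from Lemma~\ref{LemmaApprox} also vanish. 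Writing $\frac{1}{N}\sum_i(\cdots)$ as a fixed bounded functional of $\nu^N$ and passing to the limit --- using that the coordinate maps $\pi_s$ are continuous off fixed jump times and that $\nu\mapsto\langle\chi_L(\pi_s(\cdot),\cdot),\nu^{\otimes(L-1)}\rangle$ is bounded and $\Gamma_\infty$--a.s.\ continuous --- shows that $M^\phi$ of $(\ref{EqMartingaleProblem})$ defines a $\Gamma_\infty$--martingale. Since the jumps of $t\mapsto\nu_t^N$ have total--variation size at most $1/N$, the flow $s\mapsto\Gamma_{\infty,s}$ is continuous, which removes the restriction on the $s_j$, and $\Gamma_{\infty,0}=\gamma$ by the initial--condition hypothesis.

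For Step~3, by Theorem~\ref{TheoremUniqMP} the nonlinear martingale problem started at $\gamma$ has a unique solution $\Gamma$, whose marginal flow solves $(\ref{EqKolmogorov})$ and whose $[k,\infty)$--masses solve $(\ref{EqMeanFieldLimit})$. Hence every subsequential limit of $\mathscr{L}(\nu^N)$ equals $\delta_\Gamma$, so $\nu^N\Longrightarrow\Gamma$, and since $\Gamma$ is deterministic this is convergence in probability. Applying the map $\nu\mapsto(t\mapsto\nu\circ\pi_t^{-1})$, which is continuous at $\Gamma$ because $(\Gamma_t)$ is continuous in $t$, gives $\bar Q^N\to(\Gamma_t)_{t\ge0}$ in probability in $\mathbb{D}(\mathbb{R}_+,\mathcal{P}(\mathbb{N}))$; Skorokhod convergence to a continuous limit upgrades this to uniform convergence on bounded intervals, and reading off the mass of $[k,\infty)$ yields $r^N\to r$. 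Finally, when $(Q_i^N(0))_{1\le i\le N}$ is exchangeable, the permutation symmetry of the dynamics keeps $(Q_i^N)_{1\le i\le N}$ exchangeable for every $t$, so convergence of the empirical measures to the deterministic limit $\Gamma$ is equivalent, by Proposition~2.2 of \citep{Sznitman91}, to $\Gamma$--chaoticity of $(Q_i^N)_{1\le i\le N}$.

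I expect the crux to be Step~2: justifying the passage to the limit inside the nonlinear term $\langle\chi_L,\nu^{\otimes(L-1)}\rangle$ despite $\chi_L$ and the coordinate maps $\pi_s$ being discontinuous, and assembling the several $O(1/N)$ estimates from Lemmas~\ref{LemmaApprox} and~\ref{LemmaMartingalProblem} into a clean weak--convergence argument. Once that is in place, tightness and the identification of the remaining limits are routine.
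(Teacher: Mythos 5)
Your proposal is correct and follows essentially the same route as the paper: Sznitman's three-step scheme (tightness of $(\mathscr{L}(\nu^N))$ reduced to a single tagged queue dominated by a rate-$(\lambda L_{\max}+1)$ Poisson process; identification of subsequential limits via the functional built from the martingale of Lemma~\ref{LemmaMartingalProblem} and the $O(1/N)$ bounds of Lemma~\ref{LemmaApprox}; uniqueness from Theorem~\ref{TheoremUniqMP}), concluded by Proposition~2.2 of \citep{Sznitman91}. If anything, you are more explicit than the paper about the continuity issues in passing to the limit inside $\langle\chi_L,\nu^{\otimes(L-1)}\rangle$ and at the coordinate maps, which the paper's Fatou-lemma step leaves implicit.
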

\begin{proof}
See the proof in Appendix \ref{ProofThmChaoticityMeanFieldLimit}.
\end{proof}

The following lemma proves that in the mean field model, starting with any appropriate initial distribution, the solution of mean field equation (\ref{EqMeanFieldLimit}) will converge to its fixed point.
\begin{lemma} \label{LemmaConvergenceMFE}
In the mean field model, starting with any initial distribution $r_0(k)$ such that $\sum_{k\ge 0} r_0(k) <\infty$, the solution of mean field equation (\ref{EqMeanFieldLimit}) $r_t(k)$  converges in $t$ to a fixed point $r(k)$ given by
\begin{align}
r(0)=1, r(k)=\lambda u_{\mu} (r(k-1)). \label{EqMeanFieldLimitSolution}
\end{align}
\end{lemma}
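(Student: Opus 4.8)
Throughout, write $r^{\ast}=(r^{\ast}(k))_{k\ge 0}$ for the fixed point of \eqref{EqMeanFieldLimitSolution}, so that $r^{\ast}(0)=1$, $r^{\ast}(k+1)=\lambda u_{\mu}(r^{\ast}(k))$, and in particular $r^{\ast}(1)=\lambda$ and $\sum_{k\ge 0}r^{\ast}(k)<\infty$ by Lemma \ref{LemmaMonotonicity}. The plan is to combine a comparison principle for \eqref{EqMeanFieldLimit}, a monotonicity argument for the total mass $m_t:=\sum_{k\ge 1}r_t(k)$, and a subsequential‑limit argument that exploits the explicit recursion defining $r^{\ast}$. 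The first thing I would establish is monotonicity of the semiflow of \eqref{EqMeanFieldLimit}: if $r_0\le\tilde r_0$ coordinatewise then $r_t\le\tilde r_t$ for all $t$. This holds because \eqref{EqMeanFieldLimit} is a cooperative system — the partial derivatives of the $k$‑th drift with respect to $r(k-1)$ and $r(k+1)$ are nonnegative, using that $u_{\mu}$ is nondecreasing on $[0,1]$ — so it follows by the standard quasi‑monotonicity comparison argument, which in this infinite‑dimensional setting can be made rigorous by approximating with finite truncations, or via the path‑space coupling of the jump dynamics used in \citep{Turner98}. Since $r^{\ast}$ is a stationary solution of \eqref{EqMeanFieldLimit}, this gives in particular that $r_0\ge r^{\ast}$ implies $r_t\ge r^{\ast}$ for all $t$, and likewise with the inequalities reversed.

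Next I would reduce to the ordered cases and extract a limit at level one. Because $r_0\wedge r^{\ast}\le r_0\le r_0\vee r^{\ast}$ and both $r_0\wedge r^{\ast}$ and $r_0\vee r^{\ast}$ are again summable tail sequences, the comparison principle reduces the claim to the two cases $r_0\ge r^{\ast}$ and $r_0\le r^{\ast}$, with the general case following by sandwiching. Consider first $r_0\ge r^{\ast}$, so $r_t\ge r^{\ast}$ for all $t$. Summing \eqref{EqMeanFieldLimit} over $1\le k\le n$ (which keeps the partial sums finite, and, the truncated derivatives being uniformly bounded, preserves $m_t<\infty$ on bounded time intervals) and letting $n\to\infty$ using $\sum_k r_0(k)<\infty$ and $r_t(k)\to 0$, the arrival and departure sums telescope to give $\frac{d}{dt}m_t=\lambda-r_t(1)$. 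Since $r_t(1)\ge r^{\ast}(1)=\lambda$, $m_t$ is nonincreasing and bounded below by $\sum_k r^{\ast}(k)$, hence converges, so $\int_0^{\infty}(r_t(1)-\lambda)\,dt<\infty$. As $t\mapsto r_t(1)$ has drift bounded in absolute value by $\lambda+1$ it is Lipschitz, and a nonnegative, integrable, Lipschitz function of $t$ must vanish at infinity; therefore $r_t(1)\to\lambda$. The case $r_0\le r^{\ast}$ is symmetric, with $m_t$ nondecreasing and bounded above by $\sum_k r^{\ast}(k)$.

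Finally I would upgrade $r_t(1)\to\lambda$ to $r_t\to r^{\ast}$. Since $(r_t)_{t\ge 0}$ takes values in the compact metrizable space $[0,1]^{\mathbb N}$, it suffices to show every subsequential limit as $t\to\infty$ equals $r^{\ast}$. Given $t_n\to\infty$ with $r_{t_n}\to\rho$, I would look at the time‑shifted solutions $s\mapsto r_{t_n+s}$: these are equi‑Lipschitz in $s$ with values in $[0,1]^{\mathbb N}$, so along a further subsequence they converge, uniformly on compact $s$‑sets and in each coordinate, to a function $\tilde r$ which solves \eqref{EqMeanFieldLimit} with $\tilde r_0=\rho$ (the drift is continuous and depends on only three coordinates). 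Moreover $\tilde r_s(1)=\lim_n r_{t_n+s}(1)=\lambda$ for every $s\ge 0$, so $\frac{d}{ds}\tilde r_s(1)\equiv 0$; substituting into the $k=1$ equation forces $\tilde r_s(2)\equiv\lambda u_{\mu}(\lambda)=r^{\ast}(2)$, then $\frac{d}{ds}\tilde r_s(2)\equiv 0$ forces $\tilde r_s(3)\equiv r^{\ast}(3)$, and inductively $\tilde r_s(k)\equiv r^{\ast}(k)$ for all $k$ using $r^{\ast}(k+1)=\lambda u_{\mu}(r^{\ast}(k))$. Hence $\rho=\tilde r_0=r^{\ast}$, so $r_t(k)\to r^{\ast}(k)$ for every $k$; and since $r_t$ is sandwiched between two solutions converging to $r^{\ast}$ and dominated by the summable sequence $r_0\vee r^{\ast}$, Scheff\'e's lemma further gives $\sum_k|r_t(k)-r^{\ast}(k)|\to 0$.

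The two nonroutine ingredients are the comparison principle for this infinite‑dimensional ODE and the rigorous interchange of summation and $d/dt$ behind $\frac{d}{dt}m_t=\lambda-r_t(1)$ (equivalently, propagation of a finite mean over finite time horizons); once those are in place, the precompactness and the bootstrap are straightforward given the explicit recursion for $r^{\ast}$. I expect the comparison principle to be the main obstacle to present cleanly, and I would be prepared to replace it, if needed, by the special (constant‑solution) comparison $r_0\gtrless r^{\ast}\Rightarrow r_t\gtrless r^{\ast}$, which is all Step 2 actually uses.
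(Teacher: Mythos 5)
Your proof is correct, and its first two stages coincide with the paper's own argument: the same quasi-monotonicity observation (the $k$-th drift of (\ref{EqMeanFieldLimit}) is nondecreasing in $r(j)$ for $j\neq k$, which is exactly what the paper cites for its comparison step) reduces everything to the two ordered cases $r_0\ge r$ and $r_0\le r$, and your identity $\frac{d}{dt}m_t=\lambda-r_t(1)$ is the paper's computation with the truncated potential $v^K_t(1)=\sum_{j=1}^{K}r_t(j)$, including the same truncation device to justify interchanging the sum with $d/dt$ and to propagate finiteness of the mass. Where you genuinely diverge is the final bootstrap. The paper stays with the tail potentials $v_t(k)=\sum_{j\ge k}r_t(j)$, shows they are uniformly bounded in $t$, and proves by induction on $k$ that $\int_0^\infty|r_t(k)-r(k)|\,dt<\infty$: integrating the equation for $v_t(k)$ bounds this integral by $v_0(k)$, $\sup_t v_t(k)$, and $L_{\max}\int_0^\infty|r_t(k-1)-r(k-1)|\,dt$, and sign-constancy of $r_t(k)-r(k)$ together with the bounded drift then yields $r_t(k)\to r(k)$. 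You instead use only the level-one conclusion $r_t(1)\to\lambda$ and close with a LaSalle-type invariance argument: compactness of $[0,1]^{\mathbb N}$ and equi-Lipschitz coordinates give subsequential limits of the time-shifted flow that solve (\ref{EqMeanFieldLimit}) with first coordinate frozen at $\lambda$, and the recursion $r(k+1)=\lambda u_\mu(r(k))$ then pins down every coordinate. Both routes are sound; the paper's induction is more self-contained and purely integral-estimate based, while yours needs Arzel\`a--Ascoli and passage to the limit in the ODE but avoids the uniform bounds on $v_t(k)$ for $k\ge 2$ and yields $\ell^1$ convergence via Scheff\'e as a bonus. One step worth making explicit in either route (you do so for $k=1$; the paper leaves it implicit) is that integrability of the sign-constant quantity $r_t(k)-r(k)$ implies convergence to zero only because $t\mapsto r_t(k)$ is Lipschitz, its drift being bounded by $\lambda+1$.
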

\begin{proof}
See the proof in Appendix \ref{ProofLemmaConvergenceMFE}.
\end{proof}

Let $r_t^{(N,j)} (k)$ and $Q_i^{(N,j)} (t) $ denote the fraction of queues with at least $k$ customers and the number of customers at queue $i$  respectively for system $j$ at time $t$. We can prove the following coupling result in the similar way as Turner did for the fixed number of sampling queues \citep{Turner98}.

\begin{theorem}\label{TheoremCoupling}
Suppose all the customers adopt $\mu_1$ in system 1 and $\mu_2$ in system 2 with $\mu_1 \le_{\rm st} \mu_2 $. Then there is a coupling between the two systems such that for all $t$ and $x$,
\begin{align}
\sum_{i} [ Q_i^{(N,2) }(t) - x ]_{+} \le \sum_{i} [ Q_i^{(N,1)}(t) - x ]_{+}, \nonumber
\end{align}
where $[a]_+ = \max \{a,0 \}$. It follows that for any nondecreasing convex function $h$ and for all $t$,
\begin{align}
\sum_{i} h( Q_i^{(N,2)}(t) ) \le  \sum_{i} h( Q_i^{(N,1)}(t) ). \nonumber
\end{align}
\end{theorem}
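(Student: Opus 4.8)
The plan is to put both $N$-server chains on one probability space, started from a common initial configuration, so that the stated inequality holds pathwise, and then read off the convex-function statement as a corollary. Introduce
\[
\Phi_x(t)\;:=\;\sum_i[Q_i^{(N,1)}(t)-x]_+\;-\;\sum_i[Q_i^{(N,2)}(t)-x]_+,\qquad x\in\mathbb{N};
\]
the target is $\Phi_x(t)\ge 0$ for all $x$ and $t$. Granting this, the convex-function claim is immediate: any nondecreasing convex $h$ on $\mathbb{N}$ has the (finite-sum) representation $h(n)=h(0)+\sum_{x\ge0}c_x[n-x]_+$ with $c_0=h(1)-h(0)\ge0$ and $c_x=h(x+1)-2h(x)+h(x-1)\ge0$ for $x\ge1$, whence $\sum_i h(Q_i^{(N,1)}(t))-\sum_i h(Q_i^{(N,2)}(t))=\sum_{x\ge0}c_x\,\Phi_x(t)\ge0$. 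So everything reduces to constructing the coupling and proving that every transition preserves $\{\Phi_x\ge0\}_{x}$.

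For the construction, the two systems share one rate-$N\lambda$ Poisson arrival clock. Because $\mu_1\le_{\rm st}\mu_2$, Strassen's theorem provides a coupling of the sampling numbers $(L^{(1)},L^{(2)})$ with the prescribed marginals and $L^{(1)}\le L^{(2)}$ a.s.; at each arrival epoch we draw an independent copy of it. Conditionally on $\{L^{(1)}=d_1,\,L^{(2)}=d_2\}$ (so $d_1\le d_2$), we link the two arrival events exactly as in Turner's deterministic-choice coupling: the $d_2$ servers sampled in system 2 are uniform, the $d_1$ servers sampled in system 1 are derived from them via the correspondence Turner's construction carries along as a witness of the dominance, each customer joins a least-loaded sampled server, and service completions of linked servers are driven by common unit-rate exponential clocks. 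A routine marginal check (as in \citep{Turner98}) shows that each coordinate is then genuinely the supermarket chain with its own strategy $\mu_j$, so the coupling is legitimate.

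It then remains to verify, by induction over the a.s.\ locally finite set of transition epochs, that each transition keeps $\Phi_x\ge0$ for every $x$; this is the step I expect to be the main obstacle, and it is exactly the combinatorial content of Turner's argument. The delicate case is an arrival: steering the system-2 customer to a locally least-loaded sampled server may still land it on a queue that is \emph{longer} than the one the system-1 customer joins, so some $\Phi_x$ can strictly drop; one must show that the profile $(\Phi_x)_x$ always carries enough slack to absorb this, and it is precisely here that $d_2\ge d_1$ is used — the system-2 customer is selected from a strictly larger sample, and Turner's sampling lemma quantifies the gain so that no $\Phi_x$ goes negative. Departures are handled by the same mechanism: a departure in system 2 only raises $\Phi_x$, and each departure in system 1 is linked to one in system 2 so that the decrease it causes is offset, again using the slack in $(\Phi_x)_x$. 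The only ingredient beyond \citep{Turner98} is the averaging over the random pair $(L^{(1)},L^{(2)})$, but since $\{\Phi_x\ge0\}$ is a deterministic event once $(d_1,d_2)$ is realized and Turner's analysis applies verbatim for each fixed pair, conditioning and taking expectations finishes the proof.
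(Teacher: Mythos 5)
Your proposal is correct and follows essentially the same route as the paper: both reduce the randomness in the sample sizes to a monotone (Strassen / inverse-CDF) coupling with $L^{(1)}\le L^{(2)}$ almost surely, couple the two systems Turner-style with shared arrival epochs, nested samples, and rank-matched departures, and then invoke Turner's Theorem 4 for the pathwise majorization invariant. The only additions are your explicit derivation of the convex-function corollary from the $[\,\cdot - x]_+$ inequality and the (immaterial) reversal of which system's sample is drawn uniformly first; neither changes the substance.
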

\begin{proof}
A coupling between $\mu_1$ and $\mu_2$ can be constructed as follows. let $F_1$ and $F_2$ be the cumulative distribution function for $\mu_1$ and $\mu_2$ respectively. Let $U$ denote the uniform distribution over $[0,1]$. Then it follows that $F_1^{-1} (U)$ and $F_2^{-1} (U) $ is distributed as $\mu_1$ and $\mu_2$ respectively. Also, $ F_1^{-1} (U) \le F_2^{-1} (U) $.

Let us couple the two systems as follows. All arrival times and departure times are the same in both systems, except that departures that occur from an empty queue are lost. At the times of departure, we arrange the queues in each system in order of queue lengths, and let a departure occur from the corresponding queue in each system; for example, if it occurs from the longest queue in system $1$, then let it also occur from the longest queue in system $2$. At arrival times, the customer in system $1$ samples $F_1^{-1} (U)$ queues; while the customer in system 2 first chooses the same set of queues chosen by system $1$ and then samples additional $( F_2^{-1} (U)-F_1^{-1} (U)) $ queues.

Using Theorem 4 in \citep{Turner98} and the fact that $ F_1^{-1} (U) \le F_2^{-1} (U)$, the conclusion follows.
\end{proof}

The coupling result is used to show the ergodicity for $Q_{i}^N$, which is proved in the following theorem.
\begin{theorem}\label{ThmErgodicity}
For $N \ge 1$, $(Q_i^N)_{1 \le i \le N}$ is ergodic for $\lambda<1$, and thus has a unique stationary distribution. Furthermore, in equilibrium $(Q_i^N)_{1 \le i \le N}$ is exchangeable.
\end{theorem}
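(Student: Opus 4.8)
The plan is to exhibit the empty configuration $\mathbf{0}=(0,\dots,0)$ as a positive recurrent state of the (irreducible, non-explosive) chain $Q^N=(Q^N_1,\dots,Q^N_N)$ on $\mathbb{N}^N$, using the coupling of Theorem \ref{TheoremCoupling} to bound its return time, and then to deduce exchangeability in equilibrium from the permutation symmetry of the dynamics. First I would record the routine structural facts: the total jump rate out of any state is at most $N\lambda+N$, so $Q^N$ is non-explosive; $\mathbf{0}$ is reachable from every state, since with positive probability every waiting customer departs before the next arrival; and conversely any target state $q$ is reachable from $\mathbf{0}$, since one can first drive the chain to a balanced configuration $(M,\dots,M)$ with $M=\max_i q_i$ by a suitable sequence of arrivals (at each step routing the new customer, with positive probability, to a shortest sampled queue of our choosing) and then drain it down to $q$ by departures; hence $Q^N$ is irreducible on $\mathbb{N}^N$.

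For positive recurrence I would apply Theorem \ref{TheoremCoupling} with $\mu_2$ the strategy actually in force and $\mu_1=\delta_1$, so $\mu_1\le_{\rm st}\mu_2$, coupling the supermarket system (system $2$) with a system of $N$ independent $M/M/1$ queues of load $\lambda<1$ (system $1$), both started empty. Taking $x=0$ in the coupling inequality gives, pathwise, $\sum_i Q_i^{(N,1)}(t)\ge\sum_i Q_i^{(N,2)}(t)$ for all $t$; moreover both systems receive exactly one customer at each arrival, so they leave $\mathbf{0}$ at the same instant. Since system $1$ is irreducible and positive recurrent, its first return time $\tau_1$ to $\mathbf{0}$ has finite mean, and because $\sum_i Q_i^{(N,1)}(t)=0$ forces $\sum_i Q_i^{(N,2)}(t)=0$, the first return time $\tau_2$ of system $2$ to $\mathbf{0}$ satisfies $\tau_2\le\tau_1$. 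Hence $\mathbb{E}[\tau_2]\le\mathbb{E}[\tau_1]<\infty$, so $\mathbf{0}$ is positive recurrent; combined with irreducibility and non-explosiveness this yields ergodicity and a unique stationary distribution for $Q^N$.

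Finally, for exchangeability I would observe that the generator of $Q^N$ commutes with permutations of the queue labels: uniform sampling of $L$ queues, joining the least loaded sampled queue with uniform tie-breaking, and unit departure rate at every nonempty queue are all permutation-equivariant. Thus for any permutation $\pi$ the relabelled process has the same law as $Q^N$, so the pushforward under $\pi$ of the unique stationary distribution $\pi^N$ is again stationary; by uniqueness $\pi^N$ is invariant under every $\pi$, i.e.\ $(Q^N_i)_{1\le i\le N}$ is exchangeable in equilibrium. I expect the coupling bound on the return time to $\mathbf{0}$ to be the only substantive step; the structural/irreducibility bookkeeping and the symmetry argument are routine.
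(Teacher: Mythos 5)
Your proposal is correct and follows essentially the same route as the paper, which simply defers to Theorem 4.2 of \citep{Graham00}: that argument likewise establishes irreducibility and non-explosiveness, bounds the return time to the empty configuration by coupling with the independent $M/M/1$ system ($\mu_1=\delta_1$) via the monotone coupling of Theorem \ref{TheoremCoupling}, and derives exchangeability from permutation symmetry of the dynamics together with uniqueness of the stationary distribution. Your write-up supplies the details the paper omits by citation, but introduces no new idea or gap.
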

\begin{proof}
The proof is the same as that in Theorem 4.2 in \citep{Graham00}.
\end{proof}

The following theorem proves the chaoticity in equilibrium.
\begin{theorem} \label{ThmChaoticityMFE}
Let $r_{\mu}$ be defined as (\ref{EqMeanFieldLimitSolution}) and define $\gamma_{\mu}\in \mathcal{P}(\mathbb{N}) $ as $\gamma_{\mu}(k)=r_{\mu}(k)-r_{\mu} (k+1)$. In equilibrium, $(Q_i^N)_{ 1 \le i \le N} $ is $\Gamma$-chaotic, where $\Gamma$ is the unique solution for the nonlinear martingale problem (\ref{EqMartingaleProblem}) starting at $\gamma_{\mu}$, and $Q$ (the Markov process with measure $\Gamma$) is in equilibrium under $\Gamma$. Hence in equilibrium $(\bar{Q}^N)_{N \ge1}$ converges in probability to the constant process identically equaling to $\gamma_{\mu}$ for all $t$ and $(r^N)_{N \ge 1}$ converges in probability to the constant process identically equaling to $r_{\mu}$ for all $t$, for uniform convergence on bounded intervals.
\end{theorem}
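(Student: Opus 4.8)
The plan is to carry out the classical interchange-of-limits argument, following \citep{Graham00}: glue together the path-space propagation of chaos of Theorem~\ref{ThmChaoticityMeanFieldLimit}, the ergodicity and equilibrium exchangeability of the finite-$N$ system (Theorem~\ref{ThmErgodicity}), and the relaxation of the mean field equation to its fixed point (Lemma~\ref{LemmaConvergenceMFE}). For each $N$, realize $(Q_i^N)$ in its unique stationary regime, so that $(\bar Q_t^N)_{t\ge0}$ is a stationary $\mathcal{P}(\mathbb{N})$-valued process and, for every $t$, $(Q_i^N)_{1\le i\le N}$ is exchangeable. \emph{Step 1 (tightness).} Using the coupling of Theorem~\ref{TheoremCoupling} with $\mu_1=\delta_1\le_{\rm st}\mu=\mu_2$, in equilibrium the partial sums $\sum_i[Q_i^N(0)-x]_+$ are dominated by those of $N$ i.i.d.\ M/M/1 queue lengths of load $\lambda$; this gives the uniform bound $\mathbb{E}[Q_1^N(0)]\le\lambda/(1-\lambda)$ and geometric tail control on $\mathscr{L}(Q_1^N(0))$. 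By exchangeability these yield tightness of $\{\mathscr{L}(\bar Q_0^N)\}_N$ in $\mathcal{P}(\mathcal{P}(\mathbb{N}))$, and by Fatou any subsequential limit $\Lambda$ is carried by measures $\gamma$ with $\sum_{k\ge0}\gamma([k,\infty))<\infty$.

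\emph{Step 2 (identifying the limit).} Suppose $\mathscr{L}(\bar Q_0^{N'})\Longrightarrow\Lambda$ along a subsequence. The proof of Theorem~\ref{ThmChaoticityMeanFieldLimit} extends to random initial empirical measures: every limit point of $\mathscr{L}(\nu^{N'})$ is supported on solutions of the nonlinear martingale problem~(\ref{EqMartingaleProblem}), and uniqueness of that problem given its starting point (Theorem~\ref{TheoremUniqMP}) together with $\mathscr{L}(\bar Q_0^{N'})\Longrightarrow\Lambda$ pins the limit down as the law of $t\mapsto\Gamma_t$, where $\Gamma_0\sim\Lambda$ and $(\Gamma_t)_{t\ge0}$ is the pathwise deterministic solution of the Kolmogorov equation~(\ref{EqKolmogorov}) started from $\Gamma_0$. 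Since each $\bar Q^{N'}$ is stationary, so is the limit, hence $\mathscr{L}(\Gamma_t)=\Lambda$ for all $t$. On the other hand, by Step~1 the datum $\Gamma_0$ is $\Lambda$-a.s.\ summable, so Lemma~\ref{LemmaConvergenceMFE} gives $\Gamma_t(k)\to\gamma_\mu(k)$ as $t\to\infty$, $\Lambda$-a.s., whence $\mathscr{L}(\Gamma_t)\Longrightarrow\delta_{\gamma_\mu}$. A constant sequence that converges weakly must equal its limit, so $\Lambda=\delta_{\gamma_\mu}$.

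\emph{Step 3 (conclusion).} Since every subsequential limit of $\mathscr{L}(\bar Q_0^N)$ is $\delta_{\gamma_\mu}$, we have $\bar Q_0^N\to\gamma_\mu$ in probability, i.e., the initial condition of the stationary finite-$N$ system satisfies the hypothesis of Theorem~\ref{ThmChaoticityMeanFieldLimit} with $\gamma=\gamma_\mu$. Applying that theorem: $(\nu^N)$ converges in probability to the unique solution $\Gamma$ of the nonlinear martingale problem started at $\gamma_\mu$. Because $\gamma_\mu$ is the fixed point of the mean field equation~(\ref{EqMeanFieldLimit}) (Lemma~\ref{LemmaConvergenceMFE}), the marginal $(\Gamma_t)_{t\ge0}$ is the constant process equal to $\gamma_\mu$, the associated linear ``frozen-environment'' chain $Q$ has stationary one-dimensional law $\gamma_\mu$, and hence $Q$ is stationary under $\Gamma$. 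Equilibrium exchangeability (Theorem~\ref{ThmErgodicity}) then upgrades convergence of the empirical measure to $\Gamma$-chaoticity of $(Q_i^N)_{1\le i\le N}$, and the remaining statements — $(\bar Q^N)$ converging in probability to the constant process $\gamma_\mu$ and $(r^N)$ to the constant process $r_\mu$, uniformly on bounded intervals — are read off immediately.

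The main obstacle is Step~2: promoting the path-space propagation of chaos from deterministic to (subsequentially convergent, random) initial empirical measures and thereby legitimizing the interchange ``$N\to\infty$ then $t\to\infty$'' versus ``$t\to\infty$ then $N\to\infty$''. This hinges on the uniqueness in Theorem~\ref{TheoremUniqMP} (to identify the random limit as a genuine mixture of deterministic solutions) and on the uniform geometric tails furnished by the M/M/1 coupling of Step~1, which provide both the uniform integrability needed to pass to the limit and the $\Lambda$-a.s.\ summability needed to invoke Lemma~\ref{LemmaConvergenceMFE}.
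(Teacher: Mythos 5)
Your proposal is correct and follows essentially the same three-step route as the paper: tightness of the stationary marginals via the coupling with the i.i.d.\ $M/M/1$ system, identification of every subsequential limit of $\mathscr{L}(\bar Q_0^N)$ as $\delta_{\gamma_\mu}$ by combining stationarity of the limiting measure-valued process with the relaxation of the Kolmogorov equation to its fixed point (Lemma \ref{LemmaConvergenceMFE}), and then an appeal to Theorem \ref{ThmChaoticityMeanFieldLimit}. Your explicit remark that the limit law must be carried by summable initial data before invoking Lemma \ref{LemmaConvergenceMFE} is a point the paper's proof passes over silently, and your ``a constant weakly convergent sequence equals its limit'' phrasing is just a repackaging of the paper's $\epsilon$--neighborhood argument.
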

\begin{proof}
See the proof in Appendix \ref{ProofThmChaoticityMFE}.
\end{proof}

\begin{remark}
Let $Q_i^N (0) $ be the length of queue $i$ in equilibrium. Theorem \ref{ThmChaoticityMFE} implies that $(Q_i^N (0) )_{ 1 \le i \le N} $ is $\gamma_{\mu}$-chaotic. By the definition of chaoticity, it follows that the joint equilibrium distribution of any fixed number of queues converges to a product distribution, i.e., for any fixed integer $ l \ge 1$, as $N \to \infty$,
\begin{align}
\mathscr{L} ( Q_1^{(N)}(0), \ldots, Q_l^{(N)}(0)  ) & \Longrightarrow \gamma_{\mu}^{ \otimes l}. \nonumber
\end{align}
\end{remark}

\subsection{$\epsilon$-Nash Equilibrium for Finite $N$}
In this subsection, the $\epsilon$-Nash equilibrium for finite $N$ queues is considered. For a fixed customer $i$, suppose she uses the mixed strategy $\mu_i$, and all the other customers use $\mu_{-i}$. Let $W^{(N)}(\mu_i, \mu_{-i})$ and $C^{(N) }(\mu_i, \mu_{-i})$ denote her waiting time and total average cost for $N$ queues. Then, the expected waiting time can be derived as
\begin{align}
\mathbb{E} [W^{(N)}(L_i, \mu_{-i}) ] = \mathbb{E} \left[ \min \{ Q^{(N)}(1), \ldots, Q^{(N)}(L_i)  \} \right] +1, \nonumber
\end{align}
where $Q^{(N)}(i)$  is the length of the $i$th sampled queue.

Next, the definition of $\epsilon$-Nash equilibrium in the finite $N$ model is introduced.
\begin{definition}
We call $\mu^\star \in \mathcal{P}(\mathcal{L})$ an $\epsilon$-Nash equilibrium in the finite $N$ model, if for any $\epsilon >0$, with $N$ sufficiently large,
\begin{align}
C^{(N) }(\mu^\star, \mu^\star) \le C^{(N) }(\mu, \mu^\star )  +  \epsilon,  \text{ for all } \mu \in \mathcal{P}(\mathcal{L}).\nonumber
\end{align}
\end{definition}

\begin{theorem}
Let $\mu^\star$ be a Nash equilibrium for the supermarket game in the mean field model, then $\mu^\star$ is an $\epsilon$-Nash equilibrium in the finite $N$ model.
\end{theorem}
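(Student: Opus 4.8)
The plan is to reduce the claim to a single convergence statement: for every fixed $L_i\in\mathcal{L}$, the finite-$N$ cost $C^{(N)}(L_i,\mu^\star)$ converges to the mean-field cost $C(L_i,\mu^\star)$ as $N\to\infty$. Since $\mathcal{L}$ is finite this convergence is automatically uniform in $L_i$, and since $C^{(N)}(\cdot,\mu^\star)$ and $C(\cdot,\mu^\star)$ are linear in the deviator's mixed strategy, uniform convergence in $L_i$ upgrades to uniform convergence over all $\mu\in\mathcal{P}(\mathcal{L})$; the $\epsilon$-Nash inequality then follows by comparing with the mean-field Nash property of $\mu^\star$.

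First I would pin down $C^{(N)}(L_i,\mu^\star)$. A single tagged customer in a Poisson stream of rate $N\lambda$ has no effect on the stationary law of $(Q_1^N,\dots,Q_N^N)$, so by Theorem~\ref{ThmErgodicity} and PASTA the tagged customer, who samples $L_i$ distinct queues uniformly at random, observes $L_i$ queue lengths whose joint law is — by exchangeability of the equilibrium — exactly that of $(Q_1^N,\dots,Q_{L_i}^N)$ in equilibrium when everyone uses $\mu^\star$. Since future arrivals join behind her under FCFS and service times are exponential of mean $1$, $\mathbb{E}[W^{(N)}(L_i,\mu^\star)] = 1 + \mathbb{E}\big[\min\{Q_1^N,\dots,Q_{L_i}^N\}\big]$, and hence $C^{(N)}(L_i,\mu^\star) = c\big(1+\mathbb{E}[\min\{Q_1^N,\dots,Q_{L_i}^N\}]\big) + c_s L_i$.

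Next I would pass to the limit. By Theorem~\ref{ThmChaoticityMFE} (chaoticity in equilibrium), $(Q_1^N,\dots,Q_{L_i}^N)\Longrightarrow\gamma_{\mu^\star}^{\otimes L_i}$, so $\min\{Q_1^N,\dots,Q_{L_i}^N\}$ converges weakly to the minimum of $L_i$ i.i.d.\ variables with tail $r_{\mu^\star}$, whose mean is $\sum_{k\ge1}r_{\mu^\star}^{L_i}(k)$. To upgrade this to convergence of expectations I would invoke the coupling Theorem~\ref{TheoremCoupling}: since $\delta_1\le_{\rm st}\mu^\star$, the queue lengths in the $\mu^\star$-system are dominated, under the coupling, by those of the $\delta_1$-system, i.e.\ independent $M/M/1$ queues of load $\lambda<1$, giving $\mathbb{P}[Q_1^N\ge k]\le\lambda^k$ uniformly in $N$. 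This uniform geometric tail yields uniform integrability of $\{\min\{Q_1^N,\dots,Q_{L_i}^N\}\}_{N}$, so $\mathbb{E}[\min\{Q_1^N,\dots,Q_{L_i}^N\}]\to\sum_{k\ge1}r_{\mu^\star}^{L_i}(k)$, hence $C^{(N)}(L_i,\mu^\star)\to C(L_i,\mu^\star)$. Given $\epsilon>0$, pick $N_0$ with $|C^{(N)}(L_i,\mu^\star)-C(L_i,\mu^\star)|<\epsilon/2$ for all $L_i\in\mathcal{L}$ and $N\ge N_0$; averaging over $L_i$ with weights $\mu(L_i)$ gives $|C^{(N)}(\mu,\mu^\star)-C(\mu,\mu^\star)|<\epsilon/2$ for every $\mu\in\mathcal{P}(\mathcal{L})$ and $N\ge N_0$. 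Since $\mu^\star$ is a mean-field Nash equilibrium, $C(\mu^\star,\mu^\star)\le C(\mu,\mu^\star)$, so for $N\ge N_0$,
\begin{align}
C^{(N)}(\mu^\star,\mu^\star)\le C(\mu^\star,\mu^\star)+\epsilon/2\le C(\mu,\mu^\star)+\epsilon/2\le C^{(N)}(\mu,\mu^\star)+\epsilon, \nonumber
\end{align}
which is the $\epsilon$-Nash condition.

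The main obstacle is the passage from the weak convergence supplied by the chaoticity theorem to convergence of the (unbounded) waiting-time expectation: one must rule out escape of mass to infinity. The coupling comparison against independent $M/M/1$ queues of load $\lambda<1$ is precisely what provides the required uniform-in-$N$ integrability; this is the only place where the hypothesis $\lambda<1$ is used in an essential way here, and without it the argument would break down. The remaining ingredients — PASTA, exchangeability, finiteness of $\mathcal{L}$, and linearity in the deviator's mixed strategy — are routine.
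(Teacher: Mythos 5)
Your proposal is correct and follows essentially the same route as the paper's proof: weak convergence of the sampled queue lengths from the equilibrium chaoticity result (Theorem \ref{ThmChaoticityMFE}), uniform integrability via the coupling comparison with independent $M/M/1$ queues (Theorem \ref{TheoremCoupling}), convergence of $C^{(N)}(L_i,\mu^\star)$ to $C(L_i,\mu^\star)$, and the final chain of inequalities using the mean-field Nash property. Your treatment of the uniform-integrability step (via a uniform geometric tail rather than just a uniform first-moment bound) is if anything slightly more careful than the paper's.
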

\begin{proof}
Let $Z$ denote the random variable associated with the mean field equilibrium distribution $r_{\mu_{-i}} (k)$. By Theorem \ref{ThmChaoticityMFE}, we have that for any fixed $L_i$, as $N \to \infty$,
\begin{align}
\mathscr{L} ( Q^{(N)}(1), \ldots, Q^{(N)}(L_i)  ) & \Longrightarrow \mathscr{L} (Z)^{ \otimes L_i} . \nonumber
\end{align}
By the coupling result in Theorem \ref{TheoremCoupling},
\begin{align}
\mathbb{E}_{\mu_{-i}} \left[ \min \{ Q^{(N)}(1), \ldots, Q^{(N)}(L_i)  \} \right] \le \mathbb{E}_{\mu_{-i}}  ( Q^{(N)}(1) ) \le \mathbb{E}_{\delta_{1}}  ( Q^{(N)}(1) )= \frac{\lambda}{1-\lambda},\nonumber
\end{align}
and hence $ W^{(N)} (L_i, \mu_{-i}) $ is uniformly integrable.
Therefore, for $\forall \epsilon >0$, there exists $N_0 \in \mathbb{N}$  such that when $N \ge N_0$,
\begin{align}
 | \mathbb{E} [ W^{(N)} (L_i, \mu_{-i}) ] - \mathbb{E} [ W (L_i, \mu_{-i}) ] | \le \frac{\epsilon}{2 c},\nonumber
\end{align}
and thus
\begin{align}
| C^{(N) } (\mu, \mu_{-i})- C(\mu, \mu_{-i}) | \le \epsilon /2 , \nonumber
\end{align}
where $W (L_i, \mu_{-i}) $ and $C(\mu, \mu_{-i})$ are the waiting time and total average cost respectively in the mean field model. By definition of $\mu^\star$,
\begin{align}
C(\mu^\star, \mu^\star) \le C(\mu, \mu^\star) ,  \text{ for all } \mu \in \mathcal{P}(\mathcal{L}). \nonumber
\end{align}
Then, it follows that
\begin{align}
 C^{(N) }(\mu^\star, \mu^\star) \le C^{(N) }(\mu, \mu^\star )  +  \epsilon,  \text{ for all } \mu \in \mathcal{P}(\mathcal{L}).\nonumber
\end{align}
Therefore, $\mu^\star$ is an $\epsilon$-Nash equilibrium in the finite $N$ system.
\end{proof}

\section{Externality for finite $N$} \label{SectionExternalityFinite}
In this section, we study the externality of sampling more queues by some customers in the finite $N$ model.

The following corollary shows that the action of sampling more queues by some customers has a nonnegative externality on the other customers who only sample one queue.

\begin{corollary}
If $\mu_1, \mu_2 \in \mathcal{P}(\mathcal{L})$ with $\mu_1 <_{\rm st} \mu_2$, then $\mathbb{E} [W (1, \mu_1) ] \ge \mathbb{E} [W (1, \mu_2) ] $.
\end{corollary}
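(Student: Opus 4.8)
The plan is to reduce the claim to the special case $h(x)=x$ of the coupling result (Theorem \ref{TheoremCoupling}), i.e.\ to a comparison of the mean total number of customers in the two stationary systems. The key observation is that a customer who samples a single queue joins a queue chosen uniformly at random among the $N$ queues, independently of the current configuration; hence by PASTA and the exchangeability of the stationary distribution (Theorem \ref{ThmErgodicity}),
\begin{align}
\mathbb{E}[W^{(N)}(1,\mu)] = 1 + \mathbb{E}_{\mu}\big[Q_1^{(N)}\big] = 1 + \frac{1}{N}\,\mathbb{E}_{\mu}\Big[\textstyle\sum_{i=1}^N Q_i^{(N)}\Big], \nonumber
\end{align}
where $\mathbb{E}_\mu$ denotes expectation in the stationary regime with all customers using $\mu$. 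Thus the corollary is equivalent to the statement that the mean total occupancy of the stationary system is nonincreasing in the strategy with respect to $\le_{\rm st}$, namely $\mathbb{E}_{\mu_1}[\sum_i Q_i^{(N)}]\ge \mathbb{E}_{\mu_2}[\sum_i Q_i^{(N)}]$. Note that, unlike the mean field Corollary \ref{CorollaryPositiveExternality}, only the weak inequality is claimed, so the strictness $\mu_1<_{\rm st}\mu_2$ will not be exploited.

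Next I would run system $1$ (all customers using $\mu_1$) and system $2$ (all customers using $\mu_2$), both started empty, under the coupling of Theorem \ref{TheoremCoupling}. Since $\mu_1\le_{\rm st}\mu_2$ and the two initial states coincide, the inequality $\sum_i[\,Q_i^{(N,2)}(t)-x\,]_+\le\sum_i[\,Q_i^{(N,1)}(t)-x\,]_+$ holds for all $t$ and $x$; taking the nondecreasing convex function $h(x)=x$ gives the pathwise bound $\sum_i Q_i^{(N,2)}(t)\le \sum_i Q_i^{(N,1)}(t)$ for every $t$, and therefore $\mathbb{E}[\sum_i Q_i^{(N,2)}(t)]\le \mathbb{E}[\sum_i Q_i^{(N,1)}(t)]$ for all $t\ge 0$.

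It then remains to let $t\to\infty$. By ergodicity (Theorem \ref{ThmErgodicity}) the state at time $t$ converges in distribution to the stationary law, so $\sum_i Q_i^{(N,j)}(t)\Longrightarrow \sum_i Q_i^{(N)}$ under $\mathbb{E}_{\mu_j}$, and to interchange the limit with the expectation I would reuse the domination argument from the proof of the $\epsilon$-Nash theorem: coupling each system with the one in which every customer samples a single queue, the case $x=0$ yields $\sum_i Q_i^{(N,j)}(t)\le \sum_i \tilde Q_i(t)$ a.s., where the $\tilde Q_i(t)$ are i.i.d.\ $M/M/1$ queue lengths with arrival rate $\lambda$ started empty; this dominating sum is stochastically increasing in $t$ with a finite second moment in the limit (as $\lambda<1$), so $\{\sum_i Q_i^{(N,j)}(t)\}_{t\ge 0}$ is uniformly integrable. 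Passing to the limit gives $\mathbb{E}_{\mu_2}[\sum_i Q_i^{(N)}]\le \mathbb{E}_{\mu_1}[\sum_i Q_i^{(N)}]$, which after dividing by $N$ and adding $1$ is exactly $\mathbb{E}[W^{(N)}(1,\mu_2)]\le\mathbb{E}[W^{(N)}(1,\mu_1)]$. The only genuinely delicate point is this last interchange of limit and expectation; everything else is a direct application of Theorems \ref{TheoremCoupling} and \ref{ThmErgodicity} combined with the elementary fact that for $L_i=1$ the expected waiting time equals one plus the mean queue length.
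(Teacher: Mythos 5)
Your proof is correct and follows essentially the same route as the paper's: identify $\mathbb{E}[W(1,\mu)]$ with one plus the mean queue length averaged over the $N$ servers (valid because a single-sample customer joins a uniformly random queue, plus PASTA and exchangeability), and then invoke the coupling of Theorem \ref{TheoremCoupling} with $h(x)=x$. The paper's one-line proof passes silently from the pathwise, all-$t$ bound to the stationary expectations; your ergodicity-plus-uniform-integrability argument (domination by i.i.d.\ $M/M/1$ queues) supplies exactly the detail the paper omits, and is a sound way to justify that step.
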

\begin{proof}
Suppose all the customers adopt $\mu_1$ in system 1 and $\mu_2$ in system 2 with $\mu_1 \le_{\rm st} \mu_2$. Since $\mathbb{E} [W(1, \mu_i)]=\mathbb{E} [ \frac{1}{N}  \sum_{j}  Q_i^{(N)}(j) ] $ for $i=1,2$, it follows from Theorem \ref{TheoremCoupling} that $\mathbb{E}[W(1,\mu_1)] \ge \mathbb{E}[W(1, \mu_2)]$.
\end{proof}

For customers who sample more than one queue, the sampling of more queues by other customers has a negative externality in the following example.

\noindent {\bf Example}: Consider the supermarket game with two servers. All the customers adopt $\mu_1=\delta_{1}$ in system $1$ and  $\mu_2=\delta_{2}$ in system $2$.

For system $1$, queue 1 and queue 2 are two independent $M/M/1$ queues. Hence, the tail of the equilibrium queue length distribution is given by $r_{\mu_1} (k) = \lambda^k$ and thus $\mathbb{E} [W(2, \mu_1) ]= \sum_{k=0}^\infty r^2_{\mu_1} (k)= 1/(1-\lambda^2)$.

For system $2$. In the heavy traffic limit $\lambda \to 1$, Kingman showed that $ \mathbb{E} [W (2, \mu_2) ] \approx 1/(1-\lambda^2)$ in Theorem 6 in \citep{Kingman66}. In fact, $\mathbb{E} [W (2, \mu_2) ]> 1/(1-\lambda^2)$ for all arrival rates for the following reason. System $2$ is the same as the classical two parallel queues model with the routing-to-the-shortest-queue policy. A well known coupling argument implies that system $2$ has a strictly larger expected waiting time than the $M/M/2$ queueing model of two independent servers with unit exponential service rate and a single shared queueing buffer with total arrival rate $2\lambda$. The $M/M/2$ queueing model can be easily modeled as a birth-death process and the equilibrium queue length distribution is given by
\begin{align}
\pi(0)= (1-\lambda)/(1+\lambda), \; \pi(k)=2 \lambda^k (1-\lambda)/(1+\lambda), k \ge 1,
\end{align}
and thus the expected queue length is $\frac{2 \lambda}{1-\lambda^2}$. By Little's Law, the expected waiting time is $ \frac{1}{1-\lambda^2}$.

By comparing system $1$ with system $2$, it follows that $\mathbb{E} [W (2, \mu_2) ]> \mathbb{E} [W(2, \mu_1) ]$. Therefore, the sampling of more queues by some customers can have a negative externality on customers sampling more than one queue.

\section{Conclusions}\label{SectionConclusion}
Our results indicate that the equilibrium picture for the supermarket game can
be somewhat complicated, at least if $0.5 < \lambda^2 < 1$.  In particular,
there may be multiple Nash equilibria, stemming from the fact that
customers do not always have an ``avoid the crowd'' response as when $\lambda^2 \leq 0.5$.
However,  this complication seems to occur only for $\lambda$ close to one and
$L_{\max}$ fairly large.  Also, at least in the mean field model, the positive
externality of increased sampling holds for the whole range $\lambda < 1$. For the finite $N$ model, the coupling result in \citep{Turner98} shows that sampling more queues by some customers has a positive externality on customers who only sample one queue. However, for customers who sample more than one queue, the samplings of more queues by other customers can have a negative externality, as shown by the example in Section \ref{SectionExternalityFinite}.

\appendix
\section{ADDITIONAL PROOFS}

\subsection{Proof of Lemma \ref{LemmaContinuity} } \label{ProofLemmaContinuity}
First, let us show that $C(\mu_i,\mu_{-i})$ is continuous in $\mu_i$ for any fixed $\mu_{-i}$. Let $\| \mu_i^{(n)}-\mu_i \|_{\rm TV} \to 0$ as $n \to \infty$, then we have
\begin{align}
\mid C(\mu_i^{(n)},\mu_{-i}) -C(\mu_i,\mu_{-i}) \mid &= | \sum_{L_i=1}^{L_{\max}} C(L_i,\mu_{-i} ) \left( \mu_i^{(n)}(L_i)- \mu_i(L_i) \right) | \nonumber \\
 & \overset{(a)}{\le} K \sum_{L_i=1}^{L_{\max}}  \mid \mu^{(n)}(L_i)- \mu(L_i) \mid \to 0, \text{as } n \to \infty,\nonumber
\end{align}
where $(a)$ follows from the fact that $C(L_i,\mu_{-i})$ is uniformly bounded, which is proved in Lemma \ref{LemmaMonotonicity}. Furthermore, $\mathcal{P}(\mathcal{L})$ is a compact set and thus $C$ is uniformly continuous in $\mu_i$ for any fixed $\mu_{-i}$.

Second, we show that $C$ is continuous in $\mu_{-i}$ for any fixed $\mu_i$. It suffices to show that $\mathbb{E} [ W(L, \mu_{-i}) ]$ is continuous in  $\mu_{-i}$.
Let $\| \mu_{-i}^{(n)}-\mu_{-i} \|_{\rm TV} \to 0$ as $n \to \infty$. First, note that the mean field equilibrium distribution $r_{\mu_{-i}}(k)$ is continuous in $\mu_{-i}$ for each $k$. Second, by Lemma \ref{LemmaMonotonicity}, $r_{\mu} (k) \le \lambda^k, \forall k, \forall \mu \in \mathcal{P}(\mathcal{L})$. Therefore, for $\forall \epsilon >0$, by choosing sufficiently large $K$,
\begin{align}
| \mathbb{E} [ W(L, \mu_{-i}^{(n)}) ] -\sum_{k=0}^K  r_{\mu_{-i}^{(n)}}^L(k) | \le \frac{\epsilon}{3} \nonumber \\
| \mathbb{E} [ W(L, \mu_{-i} ) ] -\sum_{k=0}^K  r_{\mu_{-i}}^L(k) | \le \frac{\epsilon}{3}.\nonumber
\end{align}
Also, since $r_{\mu_{-i}}(k)$ is continuous in $\mu_{-i}$ for each $k$, by choosing sufficiently large $N_0$, it follows that when $n \ge N_0$,
\begin{align}
|\sum_{k=0}^K  r_{\mu_{-i}^{(n)}}^L(k)- \sum_{k=0}^K  r_{\mu_{-i}}^L(k) | \le \frac{\epsilon}{3}.\nonumber
\end{align}
Therefore,
\begin{align}
| \mathbb{E} [ W(L, \mu_{-i}^{(n)}) ] - \mathbb{E} [ W(L, \mu_{-i}) ] |  \le \epsilon.\nonumber
\end{align}
Furthermore, $\mathcal{P}(\mathcal{L})$ is a compact set and thus $C$ is uniformly continuous in $\mu_{-i}$ for any fixed $\mu_i$.

Lastly, let us show that $C(\mu_i,\mu_{-i})$ is jointly continuous with respect to $\mu_i$ and $\mu_{-i}$. Let $(\mu_i^{(n)}, \mu_{-i}^{(n)}) \to (\mu_i, \mu_{-i})$ in total variation distance as $n \to \infty$, then
\begin{align}
& \mid C(\mu_i^{(n)},\mu_{-i}^{(n)}) - C (\mu_i, \mu_{-i}) \mid \nonumber \\
& \le \mid C(\mu_i^{(n)}, \mu_{-i}^{(n)}) - C( \mu_i, \mu_{-i}^{(n)} ) \mid + \mid C(\mu_i, \mu_{-i}^{(n)})- C( \mu_i, \mu_{-i}) \mid \nonumber \\
&= \mid \sum_{L_i=1}^{L_{\max}} C(L_i,\mu_{-i}^{(n)}) \left( \mu_i^{(n)}(L_i)- \mu_i(L_i) \right) \mid + \mid C(\mu_i, \mu_{-i}^{(n)})- C( \mu_i, \mu_{-i})\mid \nonumber \\
& \le K \sum_{L_i=1}^{L_{\max}} \mid  \mu_i^{(n)}(L_i)- \mu_i(L_i)  \mid + \mid C(\mu_i, \mu_{-i}^{(n)})- C( \mu_i, \mu_{-i})\mid  \nonumber \\
& \to 0, \text{as } n \to \infty,\nonumber
\end{align}
which concludes the proof.
\subsection{Proof of Theorem \ref{TheoremExistenceNE} } \label{ProofTheoremExistenceNE}
Since $\mu_i^{(n)} \in \text{BR}(\mu_{-i}^{(n)})$, for all $\hat{\mu}_i \in \mathcal{P}(\mathcal{L})$,
\begin{align}
C( \hat{\mu}_i, \mu_{-i}^{(n)}) \ge C( \mu_i^{(n)}, \mu_{-i}^{(n)}).
\end{align}
By the continuity of $C$ proved in Lemma \ref{LemmaContinuity},
\begin{align}
\lim_{n \to \infty} C( \hat{\mu}_i, \mu_{-i}^{(n)}) = C( \hat{\mu}_i, \mu_{-i}), \nonumber \\
\lim_{n \to \infty} C( \mu_i^{(n)}, \mu_{-i}^{(n)}) = C( \mu_i, \mu_{-i}). \nonumber
\end{align}
Therefore, for all $\hat{\mu}_i \in \mathcal{P}(\mathcal{L})$, $C( \hat{\mu}_i, \mu_{-i}) \ge C( \mu_i, \mu_{-i})$ and thus $\mu_i \in \text{BR} (\mu_{-i})$.
Then, by the Kakutani's Theorem, there must exist a $\mu^\star$ as a fixed point of $\text{BR}$ and thus $\mu^\star$ is a mixed strategy Nash equilibrium.
Furthermore, by Lemma \ref{LemmaBestResponse}, $\mu^\star$ can be identified with a real number $L^\star \in [1,L_{\max}]$.

\subsection{Proof of Lemma \ref{LemmaBRMarginalValueInfo} } \label{ProofLemmaBRMarginalValueInfo}
The first half of the lemma is proved first.
For any integer $L_i \in \mathcal{L}$ such that $L_i \in \text{BR} (\mu_{-i})$,
$C(L_i,\mu_{-i}) \le C(L_i-1,\mu_{-i})$ and $C(L_i,\mu_{-i}) \le C(L_i+1,\mu_{-i})$.
Thus equation (\ref{EqMarginalValueInfo}) easily follows by invoking the definition of $C(L_i,\mu_{-i})$.

Now suppose equation (\ref{EqMarginalValueInfo}) holds.
Since $V(L_i, \mu_{-i} )$ is strictly decreasing in $L_i$,
\begin{align}
 V(L, \mu_{-i})
\left\{
\begin{array}{rl}
\ge c_s/c  & \text{if } L < L_i ,\\
\le c_s/c  & \text{if } L > L_i, \nonumber
\end{array} \right.
\end{align}
which implies that for all $L \in \mathcal{L}$, $C(L_i,\mu_{-i}) \le C(L,\mu_{-i})$ and
thus $L_i \in \text{BR} (\mu_{-i})$.

The second half of the lemma is proved next. For any non-integer $L_i \in [1, L_{\max}]$ such that $L_i \in \text{BR} (\mu_{-i})$, $\lfloor L_i \rfloor \in \text{BR} (\mu_{-i})$ and $\lfloor L_i \rfloor+1 \in \text{BR} (\mu_{-i})$. It follows from equation (\ref{EqMarginalValueInfo}) that
\begin{align}
 V(\lfloor L_i \rfloor, \mu_{-i}) \le c_s/c \le  V(\lfloor L_i \rfloor +1 -1, \mu_{-i}),
\end{align}
which implies equation(\ref{EqMarginalValueInfoSecond}).

Now suppose equation (\ref{EqMarginalValueInfoSecond}) holds. Since $V(L_i,\mu_{-i})$ is strictly
decreasing in $L_i$,
\begin{align}
V (\lfloor L_i \rfloor , \mu_{-i} ) = c_s/c < V (\lfloor L_i \rfloor -1 , \mu_{-i} ),
\end{align}
which implies that $\lfloor L_i \rfloor \in \text{BR} (\mu_{-i})$ by the first half of the lemma just proved. Similarly, we can show that $\lfloor L_i \rfloor +1 \in \text{BR} (\mu_{-i})$. Therefore, $L_i \in \text{BR} (\mu_{-i})$.

\subsection{Proof of Lemma \ref{LemmaMonUnique}} \label{ProofLemmaMonUnique}
We first show that $ V(L_i, \mu_{-i}) > V(L_i, \tilde{\mu}_{-i} )$ for all $1 \le L_i \le L_{\max}-1$, i.e.,
\begin{align}
\sum_{k=0}^\infty ( r_{\mu_{-i}}^{L_i} (k) - r_{\mu_{-i}}^{L_i+1 } (k)  ) >  \sum_{k=0}^\infty ( r_{\tilde{\mu}_{-i}}^{L_i} (k) - r_{\tilde{\mu}_{-i}}^{L_i+1 }(k) ). \nonumber
\end{align}
We prove a stronger conclusion, that is, for $ k \ge 2$,
\begin{align}
r_{\mu_{-i}}^{L_i}  (k) - r_{\mu_{-i}}^{L_i+1 } (k)   > r_{\tilde{\mu}_{-i}}^{L_i} (k) - r_{\tilde{\mu}_{-i}}^{L_i+1 }(k). \nonumber
\end{align}
By Lemma \ref{LemmaMonotonicity}, for $k \ge 2$,
\begin{align}
r_{\tilde{\mu}_{-i}} (k) < r_{\mu_{-i}} (k) \le \lambda^k \le \frac{1}{2}. \nonumber
\end{align}
Define function $g(x)=x^{L_i}-x^{L_i+1}$. It is easy to calculate that $g^\prime (x)=x^{L_i-1}(L_i-(L_i+1)x )$. It follows that for $x < \frac{1}{2}$, $g^\prime (x)>0$. Therefore, $g( r_{\mu_{-i}} (k)) > g( r_{\tilde{\mu}_{-i}} (k)), \forall k \ge 2$.

Next, we show that $ \tilde{L}_i \le L_i $.

Case 1 ($\tilde{L}_i$ is an integer): Since $ \tilde{L}_i \in \text{BR}( \tilde{\mu}_i )$, it follows from Lemma \ref{LemmaBRMarginalValueInfo} that
\begin{align}
V(\tilde{L}_i-1, \mu_{-i}) > V(\tilde{L}_i-1, \tilde{\mu}_{-i} ) \ge c_s/c, \nonumber
\end{align}
which implies that $\lfloor L_i \rfloor > \tilde{L}_i-1$ again by Lemma \ref{LemmaBRMarginalValueInfo} and thus $L_i \ge \tilde{L}_i $.

Case 2 ($\tilde{L}_i$ is a non-integer): From Lemma \ref{LemmaBRMarginalValueInfo},
\begin{align}
V(\lfloor \tilde{L}_i \rfloor, \mu_{-i}) > V(\lfloor \tilde{L}_i \rfloor, \tilde{\mu}_{-i} ) = c_s/c. \nonumber
\end{align}
By Lemma \ref{LemmaBRMarginalValueInfo}, it follows that: (i) if $L_i$ is a non-integer, then $\lfloor L_i \rfloor > \lfloor \tilde{L}_i \rfloor$; (ii) if $L_i$ is an integer, then  $L_i > \lfloor \tilde{L}_i \rfloor $. Therefore, $L_i > \tilde{L}_i $.

\subsection{Proof of Lemma \ref{LemmaCharacterizationNE} } \label{ProofLemmaCharacterizationNE}
For case (a), $V(L_{\max}-1, L_{\max}) \ge \frac{c_s}{c}$. By Lemma \ref{LemmaBRMarginalValueInfo}, it follows that $L_{\max} \in \text{BR}(L_{\max})$ and thus $L_{\max}$ is a Nash equilibrium.

As for case (b), note that $\hat{L}$ is well defined, because $V(L_{\max}-1, L_{\max}) <  \frac{c_s}{c}$. In the case (b1), $V(\hat{L},\hat{L}) \le \frac{c_s}{c}$ and by the definition of $\hat{L}$, $V(\hat{L}-1,\hat{L}) \ge \frac{c_s}{c}$. Therefore, by Lemma \ref{LemmaBRMarginalValueInfo}, $\hat{L} \in \text{BR}(\hat{L})$. In the case (b2), $V(\hat{L},\hat{L}) > \frac{c_s}{c} $ and by the definition of $\hat{L}$, $V(\hat{L},\hat{L}+1 ) < \frac{c_s}{c} $. Since $V(L_i,L_{-i})$ is continuous in $L_{-i} \in \mathbb{R}$, it follows that there exists a $q^\star$ with $0<q^\star <1$ such that $V(\hat{L},\hat{L}+q^\star)=\frac{c_s}{c}$. Therefore, by Lemma \ref{LemmaBRMarginalValueInfo},  $\hat{L}+q^\star \in \text{BR}(\hat{L}+q^\star)$ and thus $\hat{L}+q^\star$ is a Nash equilibrium.

\subsection{Proof of Theorem \ref{TheoremUniqNE}} \label{ProofTheoremUniqNE}
\noindent (sufficiency) Let $L^\star \in [1, L_{\max}]$ denote any Nash equilibrium, and define $\hat{L}:= \min \{ L \in \mathcal{L} : V(L, L+1) < c_s/c \}$. By definition, $V(L_{\max},L_{\max} +1 ) = 0$ and thus $\hat{L}$ is well defined. Also, it follows that\begin{align}
V(L,L+1)>V(L+1,L+1)>V(L+1,L+2), \text{ for } 1 \le L \le L_{\max}-2, \nonumber
\end{align}
where the first inequality follows from the fact that $V(L_i, \mu_{-i})$ is strictly decreasing in $L_i$, and the second inequality follows from the monotonicity assumption of $V(L, L+q)$ in $q$. Therefore, $V(L,L+1)$ is strictly decreasing in $L$.

Next, the following three different cases of $\hat{L}$ are considered:

Case 1 ($\hat{L}= L_{\max}$): By the definition of $\hat{L}$, it follows that $V(L,L+1) \ge c_s/c$ for all $ 1 \le L \le L_{\max}-1$. Hence,
\begin{align}
V(L, L) > V(L, L+1) \ge c_s/c, \text{ for }  1 \le L \le L_{\max}-1. \label{EqMarginalValueMonotone}
\end{align}
Therefore, by Lemma \ref{LemmaBRMarginalValueInfo}, $L^\star$ cannot be an integer less than $L_{\max}$.
Now if $L^\star$ is a non-integer less than $L_{\max}$, then $\lfloor L^\star \rfloor \le L_{\max}-1$. Thus, it follows from (\ref{EqMarginalValueMonotone}) that
\begin{align}
V(\lfloor L^\star \rfloor, L^\star) > V(\lfloor L^\star \rfloor, \lfloor L^\star \rfloor +1) \ge c_s/c. \label{EqNEUniqValueInfo}
\end{align}
On the other hand, by the definition of $L^\star$ and Lemma \ref{LemmaBRMarginalValueInfo}, $V(\lfloor L^\star \rfloor, L^\star) = c_s/c$, which is a contradiction to (\ref{EqNEUniqValueInfo}). Therefore, $L^\star$ must be $L_{\max}$.

Case 2 ($\hat{L} \le L_{\max}- 1 $): Since $V(L,L+1)$ is strictly decreasing in $L$, by the definition of $\hat{L}$, $V(L_{\max}-1, L_{\max}) <  c_s/c$. Therefore, by Lemma \ref{LemmaBRMarginalValueInfo}, $L^\star$ cannot be $L_{\max}$.

Case 2(a): If $L^\star$ is an integer less than $L_{\max}$, then
\begin{align}
V(L^\star, L^\star) \le c_s/c \le V(L^\star-1, L^\star). \nonumber
\end{align}
By monotonicity assumption, it follows that
\begin{align}
&V(L, L+1) \ge V(L^\star-1, L^\star) \ge  c_s/c \text{ for } L \le L^\star-1, \nonumber \\
 &V(L^\star, L^\star +1) < V(L^\star, L^\star) \le c_s/c. \nonumber
\end{align}
Therefore, $L^\star -1 < \hat{L} \le L^\star $ and thus $L^\star =\hat{L}$.

Case 2(b): If $L^\star$ is a non-integer, then $V(\lfloor L^\star \rfloor, L^\star)=c_s/c$. By monotonicity assumption,
\begin{align}
V(\lfloor L^\star \rfloor, \lfloor L^\star \rfloor+1) < V(\lfloor L^\star \rfloor, L^\star) =\frac{c_s}{c}.
\end{align}
Also, for $L < \lfloor L^\star \rfloor -1$ ,
\begin{align}
V(L, L+1) > V(\lfloor L^\star \rfloor -1 , \lfloor L^\star \rfloor) > V(\lfloor L^\star \rfloor, \lfloor L^\star \rfloor) > V(\lfloor L^\star \rfloor, L^\star)=c_s/c. \nonumber
\end{align}
Therefore, $ \lfloor L^\star \rfloor -1< \hat{L} \le  \lfloor L^\star \rfloor $ and thus $\lfloor L^\star \rfloor =\hat{L}$. Furthermore, since $ V(\lfloor L^\star \rfloor, \lfloor L^\star \rfloor+1) < c_s/c < V(\lfloor L^\star \rfloor, \lfloor L^\star \rfloor)$, by monotonicity assumption, there is a unique $q^\star$ with $0<q^\star<1$ such that $V(\hat{L}, \hat{L}+q^\star)=\frac{c_s}{c}$ and thus $L^\star =\hat{L}+q^\star$.

\noindent (Necessary) Suppose the local monotonicity condition is not satisfied. It is well known that a continuous, injective (i.e. one-to-one) function $g$ on an interval $[a,b]$ is
either strictly monotone increasing or strictly monotone decreasing, so there are two cases to consider.

Case 1: There exists a $L$ with $1 \le L \le L_{\max}$ such that $V(L,L+q)$ is strictly increasing for $0\le q \le1$. Because $V(L-1,L)>V(L,L)$,  $c$ and $c_s$ can be selected to satisfy
\begin{align}
V(L,L) \le c_s/c = V(L,L+q_0) \le V(L-1,L) \nonumber
\end{align}
for some $0<q_0<1$. From Lemma \ref{LemmaBRMarginalValueInfo}, it follows that $L$ is a pure strategy Nash equilibrium and $L+q_0$ is a mixed strategy Nash equilibrium. Therefore, the supermarket game has at least two Nash equilibria.

Case 2: There exists a $L$ with $1 \le L \le L_{\max}$ such that $V(L,L+q_2)=V(L,L+q_3)$ with $0 \leq  q_2 < q_3 \leq 1$. Then, if $c$ and $c_s$ are selected so that
$\frac{c_s}{c}=V(L,L+q_2)=V(L,L+q_3)$,  both $L+q_2$ and $L+q_3$ are Nash equilibria.

Therefore, in either case, there are at least two Nash equilibria and thus the local monotonicity condition is necessary.

\subsection{Proof of Lemma \ref{LemmaSocialOptimum} } \label{ProofLemmaSocialOptimum}
Suppose $\mu \in \mathcal{P}(\mathcal{L})$ such that there exists $L_1 < L_2 < L_3$ with $\mu ( L_1) > 0$ and $\mu ( L_3) > 0$. Choose $ 0<\alpha, \beta \le \mu(L_1)$ such that $\alpha L_1 + \beta L_3 =(\alpha+\beta) L_2$. Then, we construct a new mixed strategy $\tilde{\mu}$ as
\begin{align}
\tilde{\mu}(L_1) = \mu ( L_1) -\alpha, \tilde{\mu}(L_3) = \mu ( L_3) -\beta ,\tilde{\mu}(L_2) = \mu ( L_2) +\alpha+ \beta, \nonumber
\end{align}
and $\tilde{\mu}(L)=\mu(L)$ for $L \neq  L_1,L_2,L_3$.
By the definition of $u_{\mu} (x)$ and convexity,
\begin{align}
u_{\mu} (x)- u_{\tilde{\mu} } (x) = \alpha x^{L_1} + \beta x^{L_3} -(\alpha+\beta) x^{L_2} >0. \nonumber
\end{align}
Therefore, $ r_{\tilde{\mu}} (k) <  r_{\mu} (k), k \ge 2 $. It follows then
\begin{align}
C_{\rm sum}(\tilde{\mu})  < \lambda \sum_{L=1}^{L_{\max} } \tilde{\mu}(L) \left(  \sum_{k=0}^\infty r_{\mu}^L (k) +c_s L \right) < C_{\rm sum} ( \mu ), \nonumber
\end{align}
where the second inequality follows from the strict convexity of $\mathbb{E}[W(L, \mu)]$ in $L$. Therefore, $\mu \neq \mu^\star_{\rm soc}$.

\subsection{Proof of Lemma \ref{LemmaContinuityHetero} } \label{ProofLemmaContinuityHetero}
Suppose $L_{-i}^{(n)} \to L_{-i}$ as $n \to \infty$. By the definition of metric, we have $ \mu_{L_{-i}^{(n)}} \to \mu_{ L_{-i}}$. Since $\mathbb{E}[ W(L_i(c), L_{-i}) ] = \mathbb{E}[ W(L_i(c), \mu_{ L_{-i}} ) ]$, by the continuity of $\mathbb{E}[ W(L_i(c), \mu_{ L_{-i}}) ]$ with respect to $\mu_{ L_{-i}}$, we conclude that $\mathbb{E}[ W(L_i(c), L_{-i}) ]$ is continuous in $L_{-i} (\cdot) $.

Next, we show that the correspondence $\text{BR}(L_{-i})$ is in fact a function. Let $L_i \in \text{BR}(L_{-i})$ and $ 0=c_0< c_1<\cdots< c_{L_{\max}}=c_{\max} $ denote the jumping points of $L_i$. It follows from Fig.~\ref{FigureWaitingCostHeteroCost} that for $j=1, \ldots, L_{\max}-1$, $c_j$ is uniquely determined by
\begin{align}
c_j &= \frac{c_s}{ \mathbb{E}[ W(j, L_{-i}) ] - \mathbb{E}[ W(j+1, L_{-i}) ] }. \nonumber
\end{align}
Therefore, $L_i$ is unique and $\text{BR}(L_{-i})$ is a function from $\mathcal{S}$ to $\mathcal{S}$.

Finally, we show the continuity of $\text{BR}(L_{-i})$. Suppose $L_{-i}^{(n)} \to L_{-i}$, $L_{i}^{(n)} = \text{BR} (L_{-i}^{(n)} )$, and $L_{i}=\text{BR} (L_{-i})$, we prove that $L_{i}^{(n)} \to L_{i}$. Denote the jumping points of $L_{i}^{(n)} $ and $L_{i}$ by $ 0=c^{(n)} _0< c^{(n)} _1<\cdots< c^{(n)} _{L_{\max}}=c_{\max} $ and $ 0=c_0< c_1<\cdots< c_{L_{\max}}=c_{\max} $ respectively. It suffices to show that $c^{(n)}_j \to c_j$, for $j=0,\ldots, L_{\max}$. From Fig.~\ref{FigureWaitingCostHeteroCost}, we can see that for $j=1, \ldots, L_{\max}-1$,
\begin{align}
c^{(n)}_j  = \frac{c_s}{ \mathbb{E}[ W(j, L^{(n)}_{-i}) ] - \mathbb{E}[ W( j+1, L^{(n)}_{-i}) ] }. \nonumber
\end{align}
By the continuity of $\mathbb{E}[ W(L, L_{-i}) ]$ with respect to $L_{-i} (\cdot) $,
\begin{align}
c^{(n)}_j \to c_j, \text{ for } 1\le j\le L_{\max}-1, \nonumber
\end{align}
which further implies that $\mu_{L_{i}^{(n)}} \to \mu_{L_{i}}$ and concludes the proof.
\subsection{Proof of Theorem \ref{TheoremUniqMP} } \label{ProofTheoremUniqMP}
Define the jumping measure as
\begin{align}
J(\xi, x, dy)= \lambda \sum_{L=1}^{L_{\max}} \mu(L) L \langle \chi_L (x, \cdot), \xi^{\otimes L-1 } \rangle \delta_{x+1} (dy) + \mathbf{1}_{x\ge1} \delta_{x-1} (dy). \nonumber
\end{align}
It follows that $\| J(\xi,x)  \|_{\rm TV} \le \lambda L_{\max} + 1$. Also,
\begin{align}
 \| J(\xi, x)- J(\eta, x) \|_{\rm TV} & = \lambda \sum_{L=1}^{L_{\max}} \mu(L) L \mid  \langle \chi_L (x, \cdot), \xi^{\otimes L-1 } - \eta^{\otimes L-1 } \rangle \mid \nonumber \\
 &\le \lambda L_{\max} \| \xi^{\otimes L-1 } - \eta^{\otimes L-1 }  \|_{\rm TV}, \nonumber \\
  \| \xi^{\otimes L-1 } - \eta^{\otimes L-1 }  \|_{\rm TV} & \le \lambda L_{\max} (L_{\max}-1) \| \xi-\eta \|_{\rm TV}. \nonumber
\end{align}
Therefore, $\| J(\xi, x)- J(\eta, x) \|_{\rm TV}  \le \lambda L_{\max} (L_{\max}-1) \| \xi- \eta \|_{\rm TV}$, which concludes the proof using Proposition 2.3 in \citep{Graham00}.

\subsection{Proof of Theorem \ref{ThmChaoticityMeanFieldLimit}} \label{ProofThmChaoticityMeanFieldLimit}
It is proved using a method developed by Sznitman \citep{Sznitman91} with three essential steps: proving that
\begin{enumerate}
\item  $(\mathscr{L}(\nu^N))_{N\ge1}$ is tight in $\mathcal{P}(\mathcal{P}( \mathbb{D}( \mathbb{R}_+, \mathbb{N} )))$.
\item  the nonlinear martingale problem (\ref{EqMartingaleProblem}) is satisfied by any probability measure in the support of any accumulation point of $(\mathscr{L}(\nu^N))_{N\ge1}$.
\item the nonlinear martingale problem (\ref{EqMartingaleProblem})  has a unique solution $\Gamma$ starting at any $\gamma$ in $\mathcal{P}(\mathbb{N})$.
\end{enumerate}
Once these three steps are done, Step 2 and Step 3 imply that the only possible accumulation point of $(\mathscr{L}(\nu^N))_{N\ge1}$ is $\delta_{\Gamma}$, and Step 1 implies that $(\mathscr{L}(\nu^N))_{N\ge1}$ weakly converges to $\delta_{\Gamma}$. By Proposition 2.2 in \citep{Sznitman91}, the conclusions follows. The individual steps are proved as follows.

{\bf Step 1}. It is equivalent to prove that $(\mathscr{L}(Q_1^N))_{N\ge1}$ is tight in $\mathcal{P}( \mathbb{D}( \mathbb{R}_+, \mathbb{N} ))$, see Proposition 2.2 in \citep{Sznitman91}. The jumps of $Q_1^N$ are included in those of a Poisson process $A^N$ of rate $\lambda L_{\max} + 1$, and their size is bounded by $1$, so the modulus of continuity in $\mathbb{D}( \mathbb{R}_+, \mathbb{N} )$ of $Q_1^N$ is less than that of $A^N$. The law of $A^N$ does not depend on $ N\ge 1$, and since $(Q_1^N(0))_{N\ge1}$ converges and hence is tight, the basic tightness criterion in Ethier and Kurtz \citep{Kurtz86} holds.

{\bf Step 2}. Let $\Pi^N$ be the law of $\nu^N$ and $\Pi^\infty \in \mathcal{P}(\mathcal{P}( \mathbb{D}( \mathbb{R}_+, \mathbb{N} )))$ be an accumulation point. Take $0 \le s_1 < \cdots <s_k \le s < t$ and a bounded function $g$ on $\mathbb{N}^k$, and then define $G: \mathcal{P}( \mathbb{D}( \mathbb{R}_+, \mathbb{N} )) \mapsto \mathbb{R}$ as
\begin{align}
G (R) =\langle  \bigg( & \phi(Q_t) - \phi(Q_s) - \int_{s}^t \sum_{L=1}^{L_{\max}} \mu(L) \{ \lambda L \langle \chi_L ( Q_u, \cdot), {R}_u^{\otimes L-1 } \rangle \phi^{+}( Q_u ) \nonumber \\
 & +  \mathbf{1}_{Q_u \ge 1} \phi^{-} (Q_u) \} du  \bigg) g(Q_{s_1},\ldots, Q_{s_k}), R \rangle \nonumber
\end{align}
Set $g^{i,N}= g( Q_i^{(N)} (s_1),\ldots, Q_i^{(N)}(s_k) )$  and $Y_{s,t}=Y_t-Y_s$ for a process $Y$. It follows that
\begin{align}
& \langle G^2, \Pi^N \rangle \nonumber \\
& = \mathbb{E} \left[ G^2(\nu^N)\right] = \mathbb{E} \left[ ( \frac{1}{N} \sum_{i=1}^N (M_{s,t}^{\phi,i,N} - \epsilon_{s,t}^{\phi,i,N} ) g^{i,N} )^2 \right] \nonumber \\
&= \frac{1}{N} \mathbb{E} \left[  (M_{s,t}^{\phi,1,N} g^{1,N} )^2 \right] + \frac{N-1}{N} \mathbb{E} \left[  M_{s,t}^{\phi,1,N} M_{s,t}^{\phi,2,N} g^{1,N} g^{2,N}   \right] + O(\frac{1}{N}) \nonumber \\
&= \frac{ \| g \|^2_{\infty}}{N} \mathbb{E} [ \langle M^{\phi,1,N}, M^{\phi,1,N} \rangle_{s,t} ] + \frac{N-1}{N} \mathbb{E} [ \langle M^{\phi,1,N}, M^{\phi,2,N} \rangle_{s,t}   ] \| g \|^2_{\infty} + O(\frac{1}{N})\nonumber
\end{align}
and then Lemma \ref{LemmaMartingalProblem} implies that $\langle G^2, \Pi^N \rangle = O(1/N)$. The Fatou lemma implies that $\langle G^2, \Pi^\infty \rangle \le \lim_{N \to \infty} \langle G^2, \Pi^N \rangle =0$ and thus $G(R)=0$, $\Pi^\infty$-a.s. Since this holds for arbitrary bounded $g$ and $0 \le s_1 < \cdots <s_k \le s < t$, we conclude that $R$ solves the nonlinear martingale problem (\ref{EqMartingaleProblem}), $\Pi^\infty$-a.s. Furthermore, the continuity of $Q \to Q_0$ implies that $R_0=\gamma$, $\Pi^\infty$-a.s.

{\bf Step 3}. This result is proved in Theorem \ref{TheoremUniqMP}.

\subsection{Proof of Lemma \ref{LemmaConvergenceMFE} } \label{ProofLemmaConvergenceMFE}
First, note that increasing $r_0(k)$ only increases all $r_t(k)$, because $dr_t(k)/dt$ is non-decreasing in $r_t(j)$ for $j \neq k$ \citep{Deimling77}.
Therefore, it suffices to show the conclusion for the following two cases: $r_0(k) \ge r(k), \forall k$ and $r_0(k) \le r(k), \forall k$.

Second, define $v^K_t(k)=\sum_{j \ge k}^K r_t(j)$ and $v_t(k)=v^\infty_t(k)$. We  show that for the above two cases, $v_t(k)$ is uniformly bounded over all $t$ and $k$. If $r_0(k) \le r(k), \forall k$, since $r(k)$ is a fixed point, $r_t(k) \le r(k)$ for all $t$. Therefore,
\begin{align}
v_t(k) \le v_t(1) \le \sum_{j \ge 1} r(j) \le \sum_{j \ge 1} \lambda^j= \frac{\lambda}{1-\lambda}. \nonumber
\end{align}
If $r_0(k) \ge r(k), \forall k$, then $r_t(k) \ge r(k)$ for all $t$. From the mean field equation (\ref{EqMeanFieldLimit}),
\begin{align}
\frac{dv^K_t(1)}{dt} = \lambda - \lambda \sum_{L=1}^{L_{\rm max}}  \mu(L) r_t^{L}(K)  - ( r_t(1)-r_t(K+1) ) \le \lambda. \nonumber
\end{align}
It follows that  $v^K_t(1) \le \lambda t$ and thus $v_t(1) \le \lambda t$, which further implies that  $\lim_{K \to \infty} r_t(K) =0 $. Therefore, taking limit $K \to \infty$ over both sides of the above equation gives
\begin{align}
\frac{dv_t(1)}{dt} = \lambda -r_t(1) = r(1)-r_t(1) \le 0. \nonumber
\end{align}
Thus, $v_t(k) \le v_t(1) \le v_0(1)$.

Finally, in order to show that $\lim_{t\to \infty} r_t(k) =r(k)$, it suffices to show
\begin{align}
-\infty <\int_{0}^\infty (r_t(k)-r(k) ) < \infty, \nonumber
\end{align}
because under the above two cases, the sign of $(r_t(k)-r(k))$ does not change in $t$. We prove the second inequality by induction and the first inequality can be proved in the same way. The first inequality  trivially holds for $k=0$ since $r_t(0)=r(0)=1$. Now suppose it holds for $k-1$. From mean field equation (\ref{EqMeanFieldLimit}) and definition of $r(k)$,
\begin{align}
\frac{dv_t(k)}{dt} &= \lambda \sum_{L=1}^{L_{\rm max}} \mu(L) r_t^{L}(k-1) - r_t(k) \nonumber \\
&= \lambda \sum_{L=1}^{L_{\rm max}}   \mu(L) \left( r_t^{L}(k-1) - r^{ L } (k-1) \right) - (r_t(k)-r(k)). \nonumber
\end{align}
By integrating it, it follows that
\begin{align}
v_t(k)=v_0(k) + \int_{0}^t \left[ \lambda \sum_{L=1}^{L_{\rm max}}   \mu(L) \left( r_s^{L}(k-1) - r^{ L } (k-1) \right) - (r_s(k)-r(k)) \right] ds. \nonumber
\end{align}
By induction, $\int_{0}^\infty (r_t(k-1)-r(k-1))dt <\infty $. Therefore,
\begin{align}
\int_{0}^t   \sum_{L=1}^{L_{\rm max}}   \mu(L) \left( r_s^{L}(k-1) - r^{ L } (k-1) \right) ds \le L_{\rm max} \int_{0}^\infty (r_t(k-1)-r(k-1))dt. \nonumber
\end{align}
By assumption, $v_0(k)$ is bounded and we just proved that $v_t(k)$ is uniformly bounded in $t$. Thus, $\int_{0}^\infty (r_t(k)-r(k))dt <\infty $ and the conclusion follows.

\subsection{Proof of Theorem \ref{ThmChaoticityMFE}} \label{ProofThmChaoticityMFE}
Let $\mathscr{L}_{st} \in \mathcal{P}(\mathbb{D}( \mathbb{R}_+, \mathbb{N} ) )$ denote the stationary measure on the path space of the finite $N$ model. Let $\mathscr{L}_{st} (\bar{Q}_0^N) $ denote the distribution of $\bar{Q}_0^N$ under the stationary measure, and let  $\mathscr{L}_{st} ({Q}_1^N(0) )$ denote the distribution of ${Q}_1^N(0)$ under the stationary measure.

{\bf Step 1}. Prove that $(\mathscr{L}_{st} (\bar{Q}_0^N) )_{N\ge1}$ is tight. It is equivalent to prove that $(\mathscr{L}_{st} ({Q}_1^N(0) ))_{N\ge1}$ is tight. Let system 0 be the finite $N$ model with all the customers sampling one queue; while system $1$ be the finite $N$ model with all the customers using strategy $\mu$. The system 0 is an i.i.d. system. If both systems are started at the same initial value, with law being the stationary distribution of system 0. Then, by the coupling result in Theorem \ref{TheoremCoupling},
\begin{align}
\mathbb{E} [ Q_1^{N,1} (t) ] = \frac{1}{N} \mathbb{E} [\sum_{i=1}^N Q_i^{N,1} (t) ] \le \frac{1}{N} \mathbb{E} [\sum_{i=1}^N Q_i^{N,0} (t) ] =\frac{\lambda}{1-\lambda}. \nonumber
\end{align}
Since $\lambda <1$, by the ergodicity result in Theorem \ref{ThmErgodicity} and Fatou lemma,
\begin{align}
\mathbb{E}_{st} [Q_1^N(0)] \le \liminf_{t \to \infty} \mathbb{E}[  Q_1^{N,1} (t) ] \le \frac{\lambda}{1-\lambda}, \nonumber
\end{align}
which implies that for all $N \ge 1$, $\mathscr{L}_{st}(Q_1^N(0))( [K,\infty) ) \le \lambda (1-\lambda)^{-1} K^{-1}$ and hence $\mathscr{L}_{st} ({Q}_1^N(0) )_{N\ge1}$ is tight.

{\bf Step 2}. Prove that $ \mathscr{L}_{st} (\bar{Q}_0^N) $ weakly converges to $\delta_{\gamma_{\mu}}$. Let $\Pi_0^\infty$ be an accumulation point of $ \mathscr{L}_{st} (\bar{Q}_0^N) $. Suppose the finite $N$ model starts with the stationary distribution. Let $\Pi^N$ denote the law of $\nu^N$ and $\Pi^\infty$ denote an accumulation point of $\Pi^N$ .
Then, we can prove exactly as for Theorem \ref{ThmChaoticityMeanFieldLimit} that $\Pi^N$ is tight, and that any law $R$ in the support of $\Pi^\infty$ satisfies the non-linear martingale problem. In particular, $(R_t)_{t\ge 0}$ solves the nonlinear Kolmogorov equation. Since the initial distribution is the stationary distribution, $\Pi_0^N= \Pi_t^N $. Taking the limit along the subsequence, it follows that $ \Pi_0^\infty=\Pi_t^\infty$.

Take $\epsilon >0$ and an open neighborhood $V$ of $\gamma_{\mu}$. For $j \in \mathbb{N}$, let $\mathcal{P}_j$ be the set of all $\eta$ in $\mathcal{P}(\mathbb{N})$ such that the solution for the Kolmogorov equation (\ref{EqKolmogorov}) starting at $\eta$ is in $V$ for all times $ t \ge j$. By Lemma \ref{LemmaConvergenceMFE}, $\mathcal{P}(\mathbb{N})= \cup_j \mathcal{P}_j$ and since $\mathcal{P}_j \subset \mathcal{P}_{j+1}$, there is $k$ such that $  \Pi_0^\infty (\mathcal{P}_k) > 1- \epsilon $. It follows that
\begin{align}
\Pi_0^\infty (V) =  \Pi_k^\infty (V) =  \Pi^\infty (R_k \in V) \ge \Pi^\infty (R_0 \in \mathcal{P}_k)= \Pi_0^\infty (  \mathcal{P}_k ) > 1- \epsilon. \nonumber
\end{align}
Because $\epsilon$ and $V$ are arbitrarily chosen, $\Pi_0^\infty ( \{ \gamma_{\mu} \} )=1$.

{\bf Step 3}. Since $\bar{Q}_0^N$ converges in law to $\gamma_{\mu}$, by Theorem \ref{ThmChaoticityMeanFieldLimit}, the conclusions follow.

\bibliographystyle{acmtrans-ims}
\bibliography{BibSupermarketModel}
\end{document}